\keywords{consistent query answering, primary keys, conjunctive queries}
\definecolor{Dark Ruby Red}{HTML}{7c1b1e}
\definecolor{Dark Blue Sapphire}{HTML}{004452} 
\definecolor{Dark Gamboge}{HTML}{be7c00}
\newcommand{\eg}{\textit{e.g.}}
\newcommand{\cf}{\textit{cf.}}
\renewcommand{\epsilon}{\varepsilon}
\newcommand{\tup}[1]{\langle #1 \rangle}
\knowledgenewrobustcmd{\vars}{\cmdkl{\textit{vars}}}
\knowledgenewrobustcmd{\dcup}{\mathrel{\cmdkl{\dot{\cup}}}}
\knowledgenewrobustcmd{\key}{\cmdkl{\textit{key}}}
\newcommand{\ql}[1]{\ensuremath{q_{\leq #1}}}
\knowledgenewrobustcmd{\certain}{\cmdkl{\ensuremath{\textsf{\textup{certain}}}}}
\knowledgenewrobustcmd{\simblock}[1][]{\mathrel{\cmdkl{\ensuremath{\sim_{#1}}}}}
\newcommand\conp{\textup{\sc coNP}\xspace}
\newcommand\ptime{\textup{\sc PTime}\xspace}
\newcommand\nl{\textup{\sc NL}\xspace}
\newcommand\logspace{\textup{\sc LogSpace}\xspace}
\newcommand\fo{\textup{FO}\xspace}
\newcommand\cert{\textup{Cert}}
\knowledgenewrobustcmd{\midi}[2]{\cmdkl{#1_{|#2}}}
\knowledgenewrobustcmd{\Cqk}[1][k]{\cmdkl{\ensuremath{\cert_{#1}}}}
\knowledgenewrobustcmd{\Cqkp}[1][k]{\cmdkl{\ensuremath{\cert^{\!+}_{#1}}}}
\knowledgenewrobustcmd{\Deltak}[1][k]{\cmdkl{\ensuremath{\Delta_{#1}}}}
\knowledgenewrobustcmd{\Deltakp}[1][k]{\cmdkl{\ensuremath{\Delta^{\!+}_{#1}}}}
\knowledgenewrobustcmd{\Deltakpi}[1][k]{\cmdkl{\ensuremath{\Delta^{\!+}_{#1}}}}
\knowledgenewrobustcmd{\Deltaki}[1][k]{\cmdkl{\ensuremath{\Delta_{#1}}}}
\knowledgenewrobustcmd{\psii}[1][i,k]{\cmdkl{\ensuremath{\psi_{#1}}}}
\knowledgenewrobustcmd{\solgraph}{\cmdkl{\ensuremath{\mathsf{G}_D}}}
\newcommand\set[1]{\ensuremath{\{#1\}}}
\knowledgenewrobustcmd{\HP}{\cmdkl{\textup{PCond}}}
\knowledgenewrobustcmd{\HPtau}[1][\tau]{\cmdkl{\textup{PCond}_{#1}}}
\knowledgenewrobustcmd{\HPtaui}[2]{\cmdkl{\textup{PCond}_{#1}(#2)}}
\knowledgenewrobustcmd{\HPtauA}[2]{\cmdkl{\textup{PCond}_{#1}(#2)}}
\knowledgenewrobustcmd{\FactorCond}{\cmdkl{\textup{FactorCond}}}
\newcommand\balpha{\bar\alpha}
\newcommand\bbeta{\bar\beta}
\newcommand\bdelta{\bar\delta}
\newcommand\bu{\bar u}
\newcommand\bv{\bar v}
\newcommand\bc{\bar c}
\knowledgenewrobustcmd{\Lq}{\cmdkl{\ensuremath{\mathcal{L}^\looparrowright(q)}}}
\knowledgenewrobustcmd{\trace}{\cmdkl{\ensuremath{\textit{trace}}}}
\knowledgenewrobustcmd{\last}{\cmdkl{\ensuremath{\textit{last}}}}
\knowledgenewrobustcmd{\aut}[1][q]{\cmdkl{\ensuremath{\mathcal{A}^{#1}}}}
\newrobustcmd{\defeq}{\mathrel{\hat{=}}}
\knowledgenewrobustcmd{\assoc}[1]{\cmdkl{\widehat{#1}}}
\knowledgenewrobustcmd{\adom}{\cmdkl{\textit{adom}}}
\newcommand\NL{\text{NL}}
\newcommand{\stylequery}[1]{\mathsf{#1}}
\knowledgenewrobustcmd{\qOne}{\cmdkl{\stylequery{q_1}}}
\knowledgenewrobustcmd{\qTwo}{\cmdkl{\stylequery{q_2}}}
\knowledgenewrobustcmd{\qTwop}{\cmdkl{\stylequery{q'_2}}}
\knowledgenewrobustcmd{\qThree}{\cmdkl{\stylequery{q_3}}}
\knowledgenewrobustcmd{\qThreep}{\cmdkl{\stylequery{q'_3}}}
\knowledgenewrobustcmd{\qFour}{\cmdkl{\stylequery{q_4}}}
\knowledgenewrobustcmd{\qFive}{\cmdkl{\stylequery{q_5}}}
\knowledgenewrobustcmd{\Aplus}[1][A]{\cmdkl{{#1}^+}}
\knowledgenewrobustcmd{\Ind}{\cmdkl{\textup{\textsc{Ind}}}}
\knowledgenewrobustcmd{\Indpi}[1][i]{\cmdkl{\ensuremath{\textup{\textsc{Ind}}}^+_{#1}}}
\definecolor{light-gray}{gray}{0.9}
\newcommand{\proofcase}[1]{\noindent\colorbox{light-gray}{#1}~}
\theoremstyle{plain} 
\newcommand{\exampleqedsymbol}{{$\triangle$}}
\renewcommand{\qedsymbol}{\exampleqedsymbol}%
\renewcommand{\qedsymbol}{\exampleqedsymbol}%
\theoremstyle{defC}
\newtheorem{clmC}[thm]{Claim}
\crefname{thm}{Theorem}{Theorems}
\crefname{thmC}{Theorem}{Theorems}
\crefname{defi}{Definition}{Definitions}
\let\endexample\endexa
\crefname{exa}{Example}{Examples}
\crefname{rem}{Remark}{Remarks}
\crefname{obs}{Observation}{Observations}
\crefname{cor}{Corollary}{Corollaries}
\crefname{lem}{Lemma}{Lemmata}
\crefname{lemC}{Lemma}{Lemmata}
\crefname{prop}{Proposition}{Propositions}
\crefname{propC}{Proposition}{Propositions}
\crefname{clm}{Claim}{Claims}
\crefname{clmC}{Claim}{Claims}
\crefname{fact}{Fact}{Facts}
\crefname{nota}{Notation}{Notations}
\begin{document}

\title[A Simple Algorithm for CQA under Primary Keys]{A Simple Algorithm for Consistent Query\texorpdfstring{\\}{} Answering Under Primary Keys}
\titlecomment{Journal version of the paper presented at ICDT $2023$ \cite{ourICDTpaper}, see \Cref{sec:delta-conference} for a summary of the added material.}

\author[D. Figueira]{Diego Figueira}[a]
\author[A. Padmanabha]{Anantha Padmanabha}[b]
\author[L. Segoufin]{Luc Segoufin}[c]
\author[C. Sirangelo]{Cristina Sirangelo}[d]

\address{Univ. Bordeaux, CNRS,  Bordeaux INP, LaBRI, UMR 5800, Talence, France}	
\email{diego.figueira@cnrs.fr}  

\address{Indian Institute of Technology Madras, Chennai, India}	
\email{ananthap@cse.iitm.ac.in}  

\address{INRIA, ENS-Paris, PSL University, France}	
\email{luc.segoufin@inria.fr}  

\address{Université Paris Cité, CNRS, IRIF, F-75013, Paris, France}	
\email{cristina@irif.fr}  




\begin{abstract}

We consider the dichotomy conjecture for consistent query answering under
primary key constraints.
It states that, for every fixed boolean conjunctive query $q$, testing whether $q$
is certain ("ie", whether it evaluates to true over all repairs of a given inconsistent database) is either
polynomial time or \conp-complete. This conjecture has been verified for
self-join-free and path queries.

We propose a simple inflationary fixpoint algorithm for consistent query
answering 
which, for a given database, naively computes a set $\Delta$ of
subsets of facts of the database of size at most $k$,
where $k$ is the size of the query $q$. The algorithm runs in polynomial time and can be formally defined as:
\begin{enumerate}
    \item Initialize $\Delta$ with all sets $S$ of at most $k$ facts such that $S\models q$.
    \item Add any set $S$ of at most $k$ facts to $\Delta$ if there exists a block  $B$ (i.e., a maximal set of facts sharing the same key) such that 
    for every fact $a\in B$ 
    there is a set $S'\subseteq S \cup \set{a}$ such that $S' \in \Delta$.
\end{enumerate}
For an input database $D$, the algorithm answers ``$q$ is certain'' if{f} $\Delta$ eventually contains the empty set. 
The algorithm correctly computes certainty when the query $q$ falls in the polynomial time cases of the known dichotomies for self-join-free queries and path queries. For arbitrary boolean conjunctive queries, the algorithm is an under-approximation: the query is guaranteed to be certain if the algorithm claims so. 
However, there are polynomial time certain queries (with self-joins) which are not identified as such by the algorithm.


\end{abstract}

\maketitle

\bigskip


\noindent
\raisebox{-.4ex}{\HandRight}\ \ This pdf contains internal links: clicking on a "notion@@notice" leads to its \AP ""definition@@notice"".\footnote{\url{https://ctan.org/pkg/knowledge}}

\clearpage\tableofcontents

\section{Introduction}
\label{section-Introduction}


A "database" often comes with integrity constraints. Such constraints are helpful in many ways, for instance in order to help optimizing query evaluation. When the "database" violates its integrity constraints we are faced with several possibilities. A first possibility is to clean the data until all integrity constraints are satisfied. This task is not easy as it is inherently non-deterministic: there could be many equally good ways to ``"repair"'' a "database". A "repair" can be understood as a minimal way to change the "database" in order to satisfy the constraints.

Another possibility is to keep the "database" in its inconsistent state,
postponing the problem until a query is issued. In order to
evaluate the query on the inconsistent "database" $D$, the classical solution is to consider all possible "repairs"
of $D$ and return all the answers which are ``certain'', "ie", the answers that
are returned by the query when evaluated on \emph{every} "repair" of $D$ \cite{DBLP:conf/pods/ArenasBC99}. However, this method usually has an impact on the complexity of the query
evaluation problem. The impact will of course depend on the type of the integrity constraints and on the definition of a "repair", but most often the worst case complexity
increases at least by a factor which is exponential in the size of the "database", since there could be exponentially many
ways to "repair" a "database".

Depending on the type of integrity constraints, what should be considered as a ``good'' notion of "repair" may be
controversial. In this paper we consider "primary key constraints", which are arguably the most common kind of integrity constraints in "databases". For "primary keys", there is a unanimously accepted
notion of "repair". 
"Primary key constraints" identify, for each relation, a set of attributes which are considered to be the \emph{key} of this relation. 
An inconsistent "database" is therefore a "database" that has
distinct tuples sharing the same key within a relation. For such constraints,
the standard notion of a "repair" is any maximal subset of the "database" satisfying all
the "primary key constraints".  This amounts to keeping exactly one tuple for each group of tuples having the same key, in each relation. A simple analysis shows that there can be
exponentially many "repairs" of a given "database", and therefore a naive
evaluation algorithm would have to evaluate the query on each of these
exponentially many "repairs".

As query language, we consider boolean conjunctive queries, which can be evaluated in polynomial time in data complexity. 
With the
certain answer semantics described above, a query is ``certain'' on an inconsistent
"database" if it is true on all its "repairs". The data complexity of certain answers for conjunctive
queries over inconsistent "databases" in the presence of "primary key constraints" is
therefore in \conp. Indeed, in order to test whether the query is not certain, it is enough to guess a subset of the "database" which is a "repair" and which makes the query false. Further, it has been observed that for some conjunctive queries the certain query answering problem is
\conp-hard~\cite{DBLP:journals/jcss/FuxmanM07} while, for other queries, it can be solved in polynomial time. 
\AP
The main "conjecture"  for inconsistent "databases" in the presence of "primary key constraints"
is that there are no intermediate cases: for any boolean conjunctive query, the certain answering problem is either
solvable in polynomial time or it is  \conp-complete.

The "conjecture" has been proved for "self-join-free" boolean conjunctive
queries~\cite{DBLP:journals/tods/KoutrisW17} and for "path" queries~\cite{DBLP:conf/pods/KoutrisOW21}. 
However, the "conjecture" remains open for arbitrary conjunctive queries (with self-joins).
In this paper we revisit the two cases above where the "conjecture" is known to hold: "self-join-free" queries and "path queries".

\subsection{Contributions} Our main contribution is the design of a simple
fixpoint algorithm for computing certain answers of boolean conjunctive queries over inconsistent
"databases" in the presence of "primary key constraints".  For every $k \ge 1$,
we describe a fixpoint algorithm parameterized by $k$.  The algorithm is always
an under-approximation of the certain answers: on boolean queries, if it
outputs `yes' then the query is certain, "ie", it is true on all "repairs" of
the "database". But there could be ``false negatives'', that is, queries which are certain but on which the algorithm outputs `no'.

In this paper we investigate the expressive power of our fixpoint algorithm,
trying to understand when it solves the certain answering problem.

Our first result shows that in the case of "self-join-free" queries and "path"
boolean queries, we can characterize the cases when our fixpoint algorithm
computes the certain answers via a semantic condition. In other words, when the
condition holds there exists $k$ (namely, the number of atoms of the query)
such that the corresponding fixpoint algorithm correctly computes the certain
answer (taking $k$ as parameter); and conversely when the condition does not
hold, for every $k$ there exists a database instance $D$ such that the fixpoint
algorithm parameterized by $k$ outputs a false negative on $D$.

We then show that when our algorithm fails to compute the certain answers, "ie",
when our semantic condition fails, the certain answering problem is
actually \conp-hard. Hence, for "self-join-free" queries and "path queries" and
assuming $\ptime \neq \conp$, our simple fixpoint algorithm solves the certain
answering problem in all the cases where it is solvable in polynomial time. The
current approaches for \ptime solvable queries for these two classes are
mutually orthogonal and our result provides a uniform algorithm to solve all
the polynomial time solvable queries known in the literature.

A natural question is then to wonder whether our algorithm always correctly
computes the certain answering problem on all queries for which this problem is
polynomial time computable. Our second result answers negatively to this
question.  There is a simple two-atom query, named $\qFour$ in the paper, with
self-joins, whose certain answering problem is equivalent to bipartite matching
under \logspace reductions, and cannot be solved with our fixpoint
algorithm. Recall that the bipartite matching problem can be solved in
polynomial time and is \nl-hard.  This shows that in the presence of
self-joins, the classification of the complexity of the certain answering
problem is richer than in the self-join free case when it is either in
\logspace or \conp-complete~\cite{DBLP:journals/mst/KoutrisW21} .

Our fixpoint algorithm is based on a function that is expressible in
first-order logic (\fo). Hence,
when the fixpoint is bounded, the certain answering problem is expressible in
\fo. Our last result shows that for "self-join-free" and "path queries" the
converse is true: whenever the certain answering problem is expressible in
\fo, the fixpoint algorithm is bounded and the certain answering problem is
expressible by some bounded unfolding of the fixpoint algorithm.

\medskip

Though our greedy fixpoint computation algorithm is simple, the proof of
correctness when the semantic condition holds is non-trivial. In the case of "self-join-free"
queries satisfying our semantic condition, to prove that the algorithm always
computes the correct answer, we proceed by contradiction: if the algorithm fails to give the
correct answer, we use the fixpoint definition of the algorithm in order to
produce 
an infinite sequence of distinct facts
of the "database", contradicting its finiteness.

The situation is a bit simpler in the case of "path queries", where we show that our
fixpoint algorithm can simulate the polynomial time algorithm
of~\cite{DBLP:conf/pods/KoutrisOW21} for computing certain answers for a path query $q$,
assuming that certain answering for $q$ is polynomial time solvable.

For the lower bounds, we first show that our fixpoint algorithm fails to
compute the certain answering problem for $\qFour$ by constructing for every
number $k$, a database such that all its repairs satisfy $\qFour$ but the algorithm outputs `no'.  In the second step we reduce this
query $\qFour$ to all queries that falsify the semantic condition: if the fixpoint
algorithm would work for such queries, it would also work for $\qFour$. The
reduction relies on the (syntactic) condition of~\cite{DBLP:journals/tods/KoutrisW17}
characterizing the class of queries having a \conp-complete certain answering
problem.

\subsection{Related work} Our work is inspired by the results of
Koutris and
Wijsen~\cite{DBLP:journals/mst/KoutrisW21,DBLP:journals/tods/KoutrisW17}.  For
"self-join-free" queries, the authors prove the polynomial time case via a long
sequence of reductions eventually producing a simple query whose certain
answers can be solved efficiently. When unfolding the sequence of reductions
this gives a complicated polynomial time algorithm with a complex proof of
correctness. We have basically simplified the algorithm and pushed all the
difficulty into the proof of correctness. Our algorithm is simple, but the
proof of correctness is arguably as complex as theirs. Further, our algorithm
does not give, a priori, the optimal \logspace complexity result
of~\cite{DBLP:journals/mst/KoutrisW21} as we know that some of the path queries
that can be solved with our algorithm are \ptime
complete~\cite{DBLP:conf/pods/KoutrisOW21}.  The semantic condition that we
provide for characterizing the polynomial case in the "self-join-free" case can
be effectively tested, but not efficiently, unlike the simple syntactic
characterization of~\cite{DBLP:journals/tods/KoutrisW17} based on the so-called
``"attack graph"'' of the query.

In the case of "path queries", \cite{DBLP:conf/pods/KoutrisOW21} also provides
a simple fixpoint algorithm for solving the polynomial cases. Though it seems
that their algorithm is different in spirit from ours, the two algorithms have
some similarities that we use in order to ``simulate'' their fixpoint
computation using ours.

For both "self-join-free" and "path queries" the cases where the certain
answering problem is expressible in \fo is also characterized
in~\cite{DBLP:conf/pods/KoutrisOW21,DBLP:journals/tods/KoutrisW17}.  In fact,
our boundedness results use their characterizations.

Recently, the dichotomy "conjecture" has been proved for queries with two
atoms~\cite{ourPODS} and for ``rooted
tree-queries''~\cite{DBLP:journals/corr/abs-2310-19642}. Certain answers for two-atom queries in the
polynomial time cases is computed using a combination of our
fixpoint algorithm and bipartite matching.  {Bipartite matching
  hardness has also been obtained for the consistent query answering problem in the
  presence of key constraints~\cite[Lemma 6.4]{DBLP:conf/pods/KoutrisW20} or in
  the presence of negated atoms~\cite{DBLP:conf/pods/KoutrisW18}.
}

\subsection{Conference paper}
\label{sec:delta-conference}
The current article is based on the conference paper \cite{ourICDTpaper}. 
While the main results are essentially the same, though with improved explanations and figures, we have also added new material:
\begin{itemize}
    \item We have added the complete characterization of the cases where our
      fixpoint algorithm works (\Cref{sec:caseanyquery} and \Cref{path-fails-cqk}).

    \item We show the link between boundedness of our algorithm and expressibility in \fo of the certain answering problem (\Cref{section-FO}).
\end{itemize}


\section{Preliminaries}
\label{section-Preliminaries}

\AP
 A ""relational signature""
is a finite set of relation symbols associated with an arity. A finite relational
structure $D$ over a "relational signature" $\sigma$ is composed of: a finite set, the domain $\intro*\adom(D)$ of
$D$, and a function associating to each symbol $R$ of $\sigma$ a relation
$R(D)$ of the appropriate arity over $\adom(D)$. A ""database"" is a finite relational structure.

\AP
An ""$R$-fact"" of a "database" $D$ over a "relational signature" $\sigma$ is a term of the form $ R(\bar a)$ where $R$ is a
symbol of $\sigma$ and $\bar a$ a tuple in $R(D)$. A ""fact"" $u$ is an "$R$-fact" for
some $R$ if $u = R(\bar a)$ where $R$ is the symbol associated to the "fact" $u$ and $\bar a$ the tuple associated to $u$. 
A "database" can then be viewed as a finite collection of
"facts". The size of a "database" is the number of "facts" it contains. Assuming
$\sigma$ is fixed (which we will implicitly do in this paper) this is
equivalent to the usual notion of size for a "database", up to some polynomial function.

\AP A ""key constraint"" over a "relational signature" $\sigma$ specifies for every
relation symbol $R$ of $\sigma$ a certain set of indices (columns) of $R$ as a
key. A "database" satisfies the "key constraint" if for every
relation $R$ over $\sigma$, whenever two "$R$-facts" agree on the key indices
they must be equal. A set of ""primary key constraints"" has, for each relation of
$\sigma$, a unique "key constraint". Notice that if the "primary key
constraint" associated to the relation symbol $R$ contains all the columns of
$R$, then it induces no constraints on $R$. As all the sets of constraints we
consider are "primary key constraints" we will henceforth omit the `primary'
prefix. We use the letter $\Gamma$ to denote the corresponding set of key
constraints. 

\AP
Given two "facts" $u$ and $v$ and a set $\Gamma$ of "key constraints", we say that
$u$ and $v$ are ""$\Gamma$-equivalent"", denoted by $u \intro*\simblock[\Gamma] v$, if $u$ and $v$
have the same associated symbol $R$ and agree on the key of $R$ as specified by
$\Gamma$.  "$\Gamma$-equivalence" is an equivalence relation and the equivalence
classes are called ""$\Gamma$-blocks"". We will omit $\Gamma$ in our notations whenever it
is clear from the context. A "database" is then a finite collection
of "blocks", each "block" being a maximal (finite) collection of  "equivalent" "facts".
When writing a "query" $q$ we will always underline in an "atom"
$R(\bar x)$ the positions that are part of the key of $R$ as specified by
$\Gamma$. This will avoid explicitly describing $\Gamma$. For instance $R(\underline{x}~ y)$ says that the first position is the key for the binary relational symbol $R$; and
 $R'(\underline{yz}~ y)$ says that the first two positions form the key for the ternary relational symbol $R'$.

\AP
If a "database" $D$ satisfies the "key constraints" $\Gamma$, denoted by $D\models
\Gamma$, then each "block" of $D$ has size one. If not, then a ""repair"" of $D$ is a
subset of the "facts" of $D$ such that each "block" of $D$ has exactly one
representative in the "repair". In particular a "repair" always satisfies the "key
constraints". Notice that there could be exponentially many
"repairs" of a given "database" $D$.

\AP
A ""boolean conjunctive query"" over a "relational signature" $\sigma$ is a
collection of "atoms" where an ""atom"" is a term $R(\bar x)$ where $R$ is a relation
symbol from $\sigma$ and $\bar x$ is a tuple of variables of the appropriate arity. The query
being boolean, all variables are implicitly existentially quantified. 
We will consider "atoms" of a "conjunctive query" to appear in an arbitrary but fixed order.
In this paper a ``\reintro{query}'' is always a "boolean conjunctive query". 
\AP
A "database" $D$ satisfies a "query" $q$ having "atoms" $A_1, \dotsc, A_k$, denoted by 
$D \models q$, if there is a mapping $\mu$ from the variables of $q$ to the
elements of the domain of $D$ such that the "fact" $\mu(A_i) \in D$ for all $i$. 
In this case the sequence $(\mu(A_1),\dotsc, \mu(A_k))$  of (not necessarily distinct) "facts" of $D$ is called a ""solution"" to $q$ in $D$. 
Different mappings yield different solutions. The set of solutions to $q$
in $D$ is denoted by $q(D)$.
We will also write $D \models q(\bar u)$ to denote that the sequence of "facts" $\bar u$  is a solution to $q$ in $D$. 
\AP
If $\bar u = (u_1,\dotsc,u_k)$ is a solution to $q$ we also say that $u_i$  ""matches"" $A_i$ in this solution, and that any subsequence $u_{i_1}, \dotsc, u_{i_l}$ 
"matches" $A_{i_1},\dotsc, A_{i_l}$.

 \AP
  Let $r$ be a "repair" of $D$. By $|q(r)|$ we denote the number of solutions
  to $q$ in $r$, "ie", the cardinality of $q(r)$. We say that a "repair" $r$ is ""minimal"" if there is no "repair" $s$
  such that $|q(s)|< |q(r)|$.

\AP
We say that a "query" $q$ is ""certain"" for a "database" $D$ if all "repairs" of $D$
satisfy $q$. We study the complexity of determining whether a query
is certain for a "database" $D$. We adopt the \emph{data complexity} point of view. 
\AP
For
each "query" $q$ and set of "key constraints" $\Gamma$, we denote by
$\intro*\certain_\Gamma(q)$
(or simply $\certain(q)$ when $\Gamma$ is understood from the
context) the problem of determining, given a "database" $D$, whether $q$ is certain
for $D$. Clearly the problem is in \conp as one can guess a (polynomially sized) "repair" and test
whether it does not satisfy $q$. It is known that for some queries $q$ the
problem $\certain(q)$ is
\conp-complete~\cite{DBLP:journals/jcss/FuxmanM07}. However, there are queries
$q$ for which $\certain(q)$ is in \ptime or even expressible in first-order
logic (denoted by \fo in the sequel)~\cite{DBLP:journals/ipl/KolaitisP12,DBLP:journals/ipl/Wijsen10}. In this context, the following dichotomy has been conjectured (cf  \cite{DBLP:journals/jcss/FuxmanM07,DBLP:conf/icdt/AfratiK09}):

\begin{conj}[""Dichotomy conjecture""]\label{main-conjecture}
 For each "boolean conjunctive query" $q$, the problem $\certain(q)$ is either in \ptime or \conp-complete.
\end{conj}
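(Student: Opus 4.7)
The Dichotomy Conjecture as stated is an open problem in full generality, so a plausible ``proof'' here is really a research plan that leverages the fixpoint algorithm introduced in this paper together with known partial results. My overall strategy would be to separate the task into an upper-bound program (capturing all \ptime cases algorithmically) and a lower-bound program (showing every query outside the captured class is \conp-hard), with the central difficulty being to formulate the right structural dichotomy criterion for queries \emph{with} self-joins.

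For the upper bound, the natural starting point is the family of fixpoint algorithms $\Cqk$ parameterized by $k$ introduced in this paper. The plan would be: (i) identify a semantic condition $\mathcal{C}(q)$ on a "query" $q$ under which some $\Cqk$ correctly computes "certainty" on every "database"; (ii) verify that $\mathcal{C}(q)$ holds for "self-join-free" and "path" queries whose "certainty" problem is in \ptime, recovering the known results of Koutris--Wijsen and Koutris--Ottaviano--Wijsen as special cases; and (iii) for queries with "self-joins" where the fixpoint is provably insufficient (such as the query $\qFive$ mentioned in the introduction, or the two-atom queries handled in \cite{ourPODS}), augment the fixpoint with an external procedure such as bipartite matching. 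The hope is that iterating ``fixpoint plus matching'' (and possibly further combinatorial primitives) eventually covers all polynomial cases, with a clean semantic characterization of when each primitive is needed.

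For the lower bound, the plan would generalize the reduction strategy used for "self-join-free" queries: pick a canonical \conp-hard "query" (already available in the form of small ``witness'' queries from \cite{DBLP:journals/jcss/FuxmanM07,DBLP:journals/tods/KoutrisW17}), and for every "query" $q$ for which the upper-bound strategy fails, construct a reduction from this canonical query. The key technical tool is an "attack graph"-style analysis: whenever the algorithm's fixpoint fails to close on some family of "databases" instances, one should be able to extract a structural obstruction (e.g., a "strong cycle" in an appropriate generalization of the "attack graph") that is directly responsible for encoding \conp-hardness. The reductions for the known cases rely heavily on the absence of self-joins, and the main obstacle is finding the right generalization of the "attack graph" for queries with repeated relation symbols.

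The hard part will almost certainly be the handling of "self-joins". With "self-joins", a single "fact" of $D$ can simultaneously "match" several "atoms" of $q$, so local reasoning on "blocks" must be replaced by a more global analysis; moreover, the fixpoint algorithm $\Cqk$ is provably not enough (the paper itself exhibits a counterexample), which means the \ptime side of the "dichotomy" cannot be fully algorithmic in the style of this paper alone. A realistic intermediate goal, before attacking the full "conjecture", would be to prove the dichotomy for all ``rooted'' queries or for queries of bounded number of "atoms", extending \cite{ourPODS} and \cite{DBLP:journals/corr/abs-2310-19642}, and in parallel to search for a combinatorial structure that uniformly explains all known polynomial-time cases (fixpoint-solvable and matching-solvable alike). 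Only once such a uniform characterization is in hand does the final reduction from canonical hard queries seem feasible.
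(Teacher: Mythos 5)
You have correctly identified that this statement is a conjecture, not a theorem: the paper itself offers no proof of it, and indeed explicitly states that it remains open for arbitrary conjunctive queries with self-joins. So there is nothing in the paper to compare your argument against line by line. What you have written is a research programme rather than a proof, and as such it contains an unavoidable, genuine gap: neither the upper-bound step (a uniform semantic criterion $\mathcal{C}(q)$ covering all polynomial-time queries, including self-join queries where $\Cqk$ provably fails) nor the lower-bound step (a generalization of the attack-graph machinery to queries with repeated relation symbols) is actually carried out, and both are precisely the points where the known techniques break down. In particular, your suggestion to iterate ``fixpoint plus matching'' has no termination or completeness argument behind it, and your proposed extraction of a structural obstruction from a failure of the fixpoint is exactly the step that is unknown in general.

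That said, your outline is faithful to what the paper and the surrounding literature actually establish: the paper proves that for self-join-free queries and path queries the algorithm $\Cqk$ captures exactly the polynomial cases (via the condition $\HP$ and the factor condition on $\Lq$, respectively), shows hardness via reductions through $\qFour$ and $\qFive$ and the attack-graph results of Koutris and Wijsen, and exhibits $\qFour$ as a polynomial-time query not captured by the fixpoint but solvable by bipartite matching --- which is precisely the combination later used for two-atom queries. So your plan is a reasonable reading of the state of the art, but it should not be mistaken for a proof of the conjecture; no such proof exists in this paper or, to date, anywhere else.
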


The "conjecture" has been proved in the case of "self-join-free"
queries~\cite{DBLP:journals/tods/KoutrisW17} and of "path
queries"~\cite{DBLP:conf/pods/KoutrisOW21}; however, it remains open
in the general case. 
\AP
A "boolean conjunctive query" is ""self-join-free"" if all its
"atoms" involve different relational symbols, otherwise, it is a ""self-join query"". 
\AP
A ""path query"" is a "boolean conjunctive query" with $n+1$ distinct variables
$x_0,x_1,\cdots x_n$ and $n$ "atoms" $A_1 \cdots A_n$ such that each atom
$A_i=R_i(\underline{x_{i-1}}\, x_i)$ for some symbol $R_i$ of $\sigma$ of arity two. 
The "path query" may contain
self-joins, in other words it may be the case that $R_i=R_j$ for some $i\neq j$. 


\begin{exa}\label{first-example}
  \AP Consider the following example queries taken
  from~\cite{DBLP:journals/ipl/KolaitisP12,DBLP:conf/pods/KoutrisOW21}  (recall that all variables are implicitly existentially quantified).  For
  the "self-join-free" boolean query
  \begin{align*}
    \intro*\qOne &\defeq R_1(\underline{x}~y)\land R_2(\underline{y}~z)
  \end{align*}
 it is easy to
  see that the problem $\certain(\qOne)$ can be solved in polynomial
  time~\cite{DBLP:journals/ipl/KolaitisP12}. In fact, the first-order formula
  $\phi := \exists xyz~\Big( R_1(xy)\land R_2(yz) \land \forall y' \big(R_1(xy') \to \exists z'
  R_2(y'z')\big)\Big)$ is such that for every database instance $D$, $\qOne$ is certain for $D$ if{f} $D\models \phi$.

  \AP For the "self-join-free query" and the "path query"
  \begin{align*}
    \intro*\qTwo &\defeq R_1(\underline{x}~y)\land R_2(\underline{y}~x)\\
    \intro*\qTwop &\defeq R(\underline{x_1}~x_2)\land X(\underline{x_2}~x_3)\land
  R(\underline{x_3}~x_4) \land Y(\underline{x_4}~x_5)\land
  R(\underline{x_5}~x_6) \land Y(\underline{x_6}~x_7)
  \end{align*}
  it has been shown,
  in~\cite{DBLP:journals/ipl/Wijsen10} and~\cite{DBLP:conf/pods/KoutrisOW21}
  respectively, that $\certain(\qTwo)$ and $\certain(\qTwo')$ can be solved in
  polynomial time but cannot be expressed in first-order logic, unlike $\certain(\qOne)$.  The polynomial
  time algorithm described in the next section computes $\certain$ for $\qOne$,
  $\qTwo$ and $\qTwop$ (see also Example~\ref{second-example}).

  \AP Finally, for the "self-join-free query" and the "path query"
  \begin{align*}
    \intro*\qThree \defeq {}& R_1(\underline{x}~y)\land R_2(\underline{z}~y)\\
    \intro*\qThreep \defeq {}& R(\underline{x_1}~x_2)\land
    X(\underline{x_2}~x_3)\land R(\underline{x_3}~x_4) \land
    X(\underline{x_4}~x_5)\land {}\\
    & R(\underline{x_5}~x_6) \land
    Y(\underline{x_6}~x_7) \land R(\underline{x_7}~x_8) \land
    Y(\underline{x_8}~x_9)
  \end{align*}
  both $\certain(\qThree)$ and $\certain(\qThreep)$
  are known to be \conp-complete~\cite{DBLP:journals/jcss/FuxmanM07,
    DBLP:conf/pods/KoutrisOW21}.
\end{exa}


\section{Polynomial time algorithm}
\label{section-Cqk}

To solve $\certain(q)$, we describe a family of algorithms $\Cqk(q)$, where $k \geq 1$ is a parameter. For a fixed $k$ and query $q$, $\Cqk(q)$ takes a "database" as input and runs in
time polynomial in the size of the "database", in such a way that $\Cqk(q)$ is always an under-approximation of
$\certain(q)$, "ie", whenever $\Cqk(q)$ says `yes' then $q$ is certain for the input
"database". However, $\Cqk(q)$ could give false negative answers.

In \Cref{section-sjf} and \Cref{section-sjf-lower-bound} we will show that for
"self-join-free" queries either $\Cqk(q)$ computes $\certain(q)$ (where $k$ is
the number of "atoms" occurring in $q$) or $\certain(q)$ is $\conp$-complete
in which case $\Cqk(q)$ fails to compute $\certain(q)$ for every $k\ge 1$.
In \Cref{section-path} we show an analogous result for "path queries".

The algorithm inductively computes sets of "facts" maintaining the invariant
that every "repair" containing one of these sets makes the query true. The
algorithm returns `yes' if the empty set is eventually derived (since all
"repairs" contain the empty set).

\AP We now describe the algorithm. Assume $q,\Gamma$ and $k$ are fixed.  Let
$D$ be a "database". {A ""$k$-set"" over $D$ is a set $S$ of facts of
  $D$ of size at most $k$.}

\AP
We denote by $\Cqk(q)$ the \Cref{algo:Cqk}.
\begin{algorithm*}
  \caption{$\intro*\Cqk(q)$}\label{algo:Cqk}
  \begin{algorithmic}[1]
      \STATE \textbf{Input:} "database" $D$
      \STATE initialize $\intro*\Deltak(q,D)$ with all "$k$-sets" $S$ of $D$ such that $S\models q$
      \WHILE{there is a "$k$-set" $S \not\in \Deltak(q,D)$ and a "block" $B$ of $D$ such that for every "fact" $u\in B$ there exists $S' \subseteq S\cup \{ u\}$ where $S' \in \Deltak(q,D)$}
          \STATE update $\Deltak(q,D) \gets \Deltak(q,D) \cup \set{S}$
      \ENDWHILE
      \IF{$\emptyset \in \Deltak(q,D)$} 
        \RETURN \texttt{YES} 
      \ELSE
        \RETURN \texttt{NO}
      \ENDIF
  \end{algorithmic}

\end{algorithm*}
On a "database" input $D$, the algorithm $\Cqk(q)$ inductively computes a set $\Deltak(q,D)$
of "$k$-sets" over $D$ while maintaining the following invariant:
\begin{align}
  \textit{For every "repair" $r$ of $D$ and every $S\in \Deltak(q,D)$, if $S\subseteq r$ then $r\models q$.}  \tag{\textsc{Inv}}\label{eq:Delta:invariant}
\end{align}

\AP
Initially $\Deltak(q,D)$ contains all "$k$-sets" $S$ such that $S\models q$. In
other words,
we start with all solutions to $q$ in all "repairs" of $D$. Clearly,
this satisfies the invariant \eqref{eq:Delta:invariant}.
%
%
Now we iteratively add a "$k$-set" $S$ to $\Deltak(q,D)$ if there exists a "block" $B$ of $D$ such that 
for every "fact" $u\in B$ there exists $S' \subseteq S\cup \{ u\}$ such that $S' \in \Deltak(q,D)$. Again, it is immediate to verify that the invariant \eqref{eq:Delta:invariant} is maintained.

This is an inflationary fixpoint algorithm. Note that both the initial and
induction steps can be expressed in \fo because all sets of "facts"
$S$ computed by the fixpoint have size at most $k$, and each set can be
represented by a tuple of $k$ elements. A relation of arity $k$ can then encode $\Deltak$. The initial condition adds
any "$k$-set" that contains a solution to $q$, and the induction step again
adds only "$k$-sets".  Thus, if $n$ is the number of facts of $D$,
the fixpoint is reached
in $O(n^k)$ steps. In the end, $\Cqk(q)$ returns `yes' if{f} the empty set
belongs to $\Deltak(q,D)$. Equivalently, $\Cqk(q)$ returns `yes' if there is a
"block" $B$ of $D$ such that for all "facts" $u$ of $B$ the set $\set{u}$
belongs to $\Deltak(q,D)$. We write $D \models \Cqk(q)$ or $D\in \Cqk(q)$ to denote that
$\Cqk(q)$ returns `yes' upon input $D$.
%

The following properties are now immediate.

\begin{prop}\label{prop-underapprox}
For all $q,\Gamma,k$, $\Cqk(q)$ runs in time polynomial in the size of its input
"database" $D$ and, if $D\models \Cqk(q)$ then $D \models \certain(q)$.  
\end{prop}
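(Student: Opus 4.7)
The plan is to prove both claims essentially by unpacking the definition of the algorithm, since the statement is the ``easy direction'' of the algorithm's correctness.

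For the polynomial runtime, I would first argue that the number of distinct "$k$-sets" over a "database" $D$ of size $n$ is bounded by $\binom{n}{\le k} = O(n^k)$, which is polynomial for fixed $k$. Since $\Deltak(q,D)$ only grows monotonically and consists of "$k$-sets", the fixpoint is reached after at most $O(n^k)$ iterations. Each iteration needs to test, for each candidate "$k$-set" $S$, each "block" $B$ of $D$, and each "fact" $u \in B$, whether some $S' \subseteq S \cup \set{u}$ already lies in $\Deltak(q,D)$. All of these loops are polynomial in $n$ (for fixed $k$), and the initial step of enumerating all $S$ with $S \models q$ reduces to evaluating $q$ over sub-"databases" of size at most $k$, again polynomial.

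For the under-approximation claim, the key is to establish the invariant \eqref{eq:Delta:invariant} by induction on the construction of $\Deltak(q,D)$. The base case is immediate: if $S \models q$ and $S \subseteq r$, then by monotonicity of "conjunctive queries" under addition of "facts", we get $r \models q$. For the inductive step, suppose $S$ is added because there exists a "block" $B$ of $D$ such that for every "fact" $u \in B$ there is some $S' \subseteq S \cup \set{u}$ with $S' \in \Deltak(q,D)$. Let $r$ be any "repair" containing $S$. Since $r$ is a "repair", it contains exactly one representative of $B$, call it $u_0$. By hypothesis, there is $S' \subseteq S \cup \set{u_0}$ with $S' \in \Deltak(q,D)$. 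Then $S' \subseteq r$ (because $S \subseteq r$ and $u_0 \in r$), and by the induction hypothesis applied to $S'$, we conclude $r \models q$.

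Finally, to conclude $D \models \certain(q)$ from $D \models \Cqk(q)$: by definition, $D \models \Cqk(q)$ means $\emptyset \in \Deltak(q,D)$. Every "repair" $r$ of $D$ trivially contains $\emptyset$, so the invariant applied with $S = \emptyset$ yields $r \models q$ for every "repair" $r$, which is precisely $D \models \certain(q)$.

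There is no real obstacle here; the proposition is essentially a sanity check that the inductive definition of $\Deltak(q,D)$ satisfies the advertised invariant, and the only subtle point is the use of the fact that a "repair" picks exactly one "fact" from each "block" in the inductive step. The interesting, difficult direction---completeness in restricted cases such as "self-join-free" and "path queries"---is deferred to later sections.
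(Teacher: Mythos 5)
Your proof is correct and follows exactly the route the paper takes: the paper establishes the invariant \eqref{eq:Delta:invariant} inline when defining the algorithm (declaring the inductive step ``immediate to verify''), bounds the number of iterations by the $O(n^k)$ count of "$k$-sets", and concludes by applying the invariant to $S=\emptyset$. You have merely written out the ``immediate'' inductive step explicitly, correctly isolating the one point that matters --- that a "repair" contains exactly one representative of the "block" $B$.
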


{
\begin{prop}\label{prop-underapprox2}
For all $q,\Gamma,k,k'$ if $k'\ge k$ then for every database $D$, if $D\models \Cqk(q)$ then $D\models \Cqk[k'](q)$.
\end{prop}
}
In order to simplify the notations, as we will mostly consider the
  case where $k$ is the number of "atoms" in $q$, we write $\Delta(q,D)$ and $\Cqk[](q)$ to denote $\Deltak(q,D)$ and $\Cqk(q)$ respectively, where $k$ is the number of "atoms" of $q$. Also, for a "fact" 
$u$, we sometimes write $u\in\Deltak(q,D)$ instead of $\set{u}\in\Deltak(q,D)$.
{ We denote $\Deltak(q,D,i)$ to be the set $i^{th}$ step of the computation of
  $\Deltak(q,D)$. The following proposition is immediate from the definitions.

\begin{prop}\label{prop-underapprox3}
For all $q,\Gamma,k,i$ if $S\in \Deltak(q,\Gamma,i)$ and $S'\supseteq S$ such that $|S'| \le k$ then $S'\in \Deltak(q,\Gamma,i)$.
\end{prop}

}

\begin{exa}\label{second-example}
  Consider again the query
  $\qTwo: R_1(\underline{x}~y)\land R_2(\underline{y}~x)$ from
  Example~\ref{first-example}. Let $k=2$ and consider the execution of
  $\Cqk[2](\qTwo)$. For a given input database $D$, initially $\Delta(\qTwo,D)$ contains all pairs of "facts"
  $\set{R_1(ab),R_2(ba)}$ such that both $R_1(ab)$ and $R_2(ba)$ are in $D$.
  The first iterative step adds to $\Delta(\qTwo,D)$ (i) all singletons
  $\set{R_1(ab)}$ such that $R_2(ba)$ is a "fact" of $D$ whose "block" contains
  only $R_2(ba)$, and (ii) analogously all $\set{R_2(ab)}$ such that the "block"
  of $R_1(ba)$ is a singleton.
  
 In subsequent steps, the empty set is added to $\Delta(\qTwo,D)$ if at some point, there is some "block" $B$ such that for every "fact" $u\in B$ we have $u\in \Delta(\qTwo,D)$. At this point the algorithm outputs `yes' and if $\Delta(\qTwo,D)$
saturates without the empty set as its member, then the algorithm outputs `no'.



  \bigskip
  
  \label{second-example:contd}
  We show that $\Cqk[2](\qTwo)$ computes $\certain(\qTwo)$ or, in other words, $D\models \certain(\qTwo)$ if{f} $D\models \Cqk[2](\qTwo)$. 

  Observe that, for every "repair" $r$ and "fact" $\alpha$ therein, there is at most one other "fact" $\alpha'$ in $r$ such that $\set{\alpha,\alpha'} \models \qTwo$. This is because in any "repair" the first "atom" of $\qTwo$ determines the second "atom" and vice-versa. This ``mutual determinacy'' is, in fact, what makes $\Cqk[2](\qTwo)$ a complete procedure, as we shall see next.
  
  In view of Proposition~\ref{prop-underapprox}, it remains to
  show that if $\qTwo$ is certain for $D$ then  $\Delta(\qTwo,D)$  contains
  the empty set. Towards this we use the following observation about $\qTwo$.

  \begin{clm}\label{claim-qTwo}
      If $r$ is a "minimal" "repair" and both "facts" $R_1(ab)$ and $R_2(ba)$ are  in $r$
      then $R_1(ab)\in\Delta(\qTwo,D)$.    
  \end{clm}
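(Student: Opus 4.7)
The plan is to apply the inductive rule defining $\Cqk[2](\qTwo)$ with the block $B$ chosen to be the block of $R_2(ba)$ in $D$. To add $\{R_1(ab)\}$ to $\Deltak[2](\qTwo, D)$ via this block I need, for every $R_2(bc) \in B$, some subset of $\{R_1(ab), R_2(bc)\}$ to already lie in $\Deltak[2](\qTwo, D)$. The case $c = a$ is immediate, since $\{R_1(ab), R_2(ba)\}$ is placed in $\Deltak[2](\qTwo, D)$ by initialization. For $c \neq a$, the set $\{R_1(ab), R_2(bc)\}$ is not a match of $\qTwo$, so the remaining task is to show $\{R_2(bc)\} \in \Deltak[2](\qTwo, D)$.

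I would then establish $\{R_2(bc)\} \in \Deltak[2](\qTwo, D)$ by a symmetric application of the rule using the block of $R_1(cb)$, whose presence in $D$ is guaranteed by minimality of $r$: the alternative repair obtained from $r$ by swapping $R_2(ba)$ for $R_2(bc)$ loses exactly the match $(R_1(ab), R_2(ba))$, so minimality forces the only possible new match $(R_1(cb), R_2(bc))$ to materialize, and hence $R_1(cb) \in r$. This reduces the problem to showing $\{R_1(cd)\} \in \Deltak[2](\qTwo, D)$ for each sibling $R_1(cd)$ of $R_1(cb)$ with $d \neq b$, and here a two-block swap version of the minimality argument (swapping both $R_2(ba) \mapsto R_2(bc)$ and $R_1(cb) \mapsto R_1(cd)$ in $r$) forces $R_2(dc) \in r$ and supplies the next block for the recursion. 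Iterating, the $i$-th step of the recursion uses an $i$-block swap in $r$, where previously swapped blocks contribute no further losses and the only unavoidable gain pins down the next required fact.

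The main obstacle will be termination of this recursion, since a priori the chain of dependent singletons could cycle, in which case the inflationary fixpoint would not derive any of them. I would rule out cycles by observing that any cycle returning the chain to a previous singleton would, via an appropriate multi-block swap version of the minimality argument, force a second fact of the form $R_1(a d) \in r$ with $d \neq b$ (or symmetrically $R_2(b c) \in r$ with $c \neq a$), contradicting that a repair holds at most one representative per $\Gamma$-block. Since $D$ is finite and the dependency relation is thus acyclic, the recursion terminates at sets placed in $\Deltak[2](\qTwo, D)$ by the initialization, and a straightforward induction on the iteration count of the inflationary fixpoint delivers $\{R_1(ab)\} \in \Deltak[2](\qTwo, D)$.
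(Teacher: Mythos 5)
Your proposal is correct and is essentially the paper's own argument read in the contrapositive direction: the paper assumes $R_1(ab)\notin\Deltak[2](\qTwo,D)$ and uses exactly your two ingredients --- a (multi-)block swap in the "minimal" "repair" $r$ whose lost solution forces, by mutual determinacy, the dual partner of the swapped-in fact to be present, plus the one-fact-per-block property of "repairs" to force distinctness --- to build an infinite sequence of pairwise distinct "facts", contradicting finiteness of $D$. Your direct formulation (acyclicity of the dependency relation among singletons, hence termination of the well-founded derivation) is the same proof in different packaging, with only a minor imprecision in that a cycle may return to an intermediate fact of the chain, not just to the block of $R_1(ab)$ or $R_2(ba)$, though the same one-representative-per-block contradiction applies there as well.
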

  
   Assume that the claim is true. Now suppose $\qTwo$ is certain for $D$, then for any "minimal" "repair" $r$, we must have $r \models q$ and
  this is witnessed by two "facts" $R_1(ab)$ and $R_2(ba)$ of $r$. Let $B$ be the "block"
  of $R_1(ab)$. Let us show that all "facts" of $B$ are in $\Delta(\qTwo,D)$ as singleton sets and hence $\emptyset \in \Delta(\qTwo,D)$. Let
  $R_1(ab')$ be such a "fact" and consider the "repair" $r'$ obtained by replacing
  $R_1(ab)$ with $R_1(ab')$. As $r$ is "minimal" it follows immediately that $r'$ is
  "minimal" and must contain $R_2(b'a)$ (again, this is ensured by the mutual determinacy of $\qTwo$). From the claim it follows that $R_1(ab')\in
  \Delta(\qTwo,D)$, as desired. Thus it remains to prove the \Cref{claim-qTwo}. 
  
  \begin{proof}[Proof of \Cref{claim-qTwo}] Assume that $r$ is a "minimal" "repair" containing both $R_1(ab)$ and $R_2(ba)$.
   Towards a contradiction, suppose
  $R_1(ab)\not\in\Delta(\qTwo,D)$ then we shall construct an infinite sequence
  $u_1,u_2,\dotsc$ of distinct "facts" of $D$, contradicting the finiteness
  of $D$. Towards this we additionally construct an infinite sequence $v_1,v_2,\dotsc$ of
  "facts" of $D$ and an infinite sequence of "minimal" "repairs" $r_1,r_2,\dotsc$
  maintaining the following invariants for every $i$:
  \begin{enumerate}
    \item the $u_i$'s are pairwise distinct;
  \item $u_i\not\in\Delta(\qTwo,D)$;
  \item if $u_i=R_1(cd)$ then $v_i=R_2(dc)$ and if $u_i=R_2(cd)$ then $v_i=R_1(dc)$;
  \item $u_{i+1}\simblock v_i$ and $u_{i+1} \neq v_i$;
  \item \label{ex:it:ri} $r_i$ is "minimal" and contains $v_i$ and each $u_j$ for all $j\leq i$.
  \end{enumerate}
  
  Initially $r_1=r$, $u_1=R_1(ab)$ and $v_1=R_2(ba)$ and all  the invariant conditions are
  met: $(1)$ is trivial, $(2)$ and $(5)$ follow from the assumption, $(3)$  is true by construction and $(4)$ does not apply.
  
  Consider step $i$. Consider the block $B_i$ of $v_i$. As
  $\{u_i\}\not\in\Delta(\qTwo,D)$ it means that we cannot use any block $B$ as a witness to add $u_i$ to $\Delta(\qTwo,D)$ ("ie", for every block $B$ there is some "fact" $w\in B$ such that $\{u_i,w\}\not\in \Delta(\qTwo,D)$). Hence, in particular, $B_i$ must contain an element
  $u_{i+1}$ such that both $u_{i+1}\not\in\Delta(\qTwo,D)$ and
  $\set{u_i,u_{i+1}}\not\in\Delta(\qTwo,D)$. In particular $u_{i+1} \simblock v_i$ but
  $u_{i+1} \neq v_i$ and items~$(2)$ and~$(4)$ of our induction hypothesis are met. Towards the first item of our induction hypothesis, if $u_{i+1}=u_j$ for some $j\le i$ then by item~$(5)$ the "repair" $r_i$ would contain two "equivalent" "facts", $v_i$ and $u_{i+1}=u_j$, which is not possible since we have already established that $u_{i+1} \neq v_i$.

  Consider the "repair" $r_{i+1}$ resulting from replacing $v_i$ with $u_{i+1}$.
  Let $v_{i+1}$ be the dual "fact" of $u_{i+1}$ as required by the third item of
  the invariant. As $u_iv_i$ forms a solution to $q$ in $r_i$ and $r_i$ is
  "minimal", we must have $v_{i+1}\in r_{i+1}$ (otherwise $r_{i+1}$ has strictly fewer solutions than $r_i$). Finally, notice that $r_{i+1}$ is
  "minimal" as its solutions to $q$ are exactly the same as for $r_i$ except for
  $u_iv_i$ that has been removed and $u_{i+1}v_{i+1}$ that has been added (by the mutual determinacy of the "atoms" of $\qTwo$).
  
  Here is a depiction of how the $u_i$'s and $v_i$'s are defined, where the full and hollow arrows correspond to $R_1$ and $R_2$ respectively.
  \begin{center}
  \includegraphics[width=.5\textwidth]{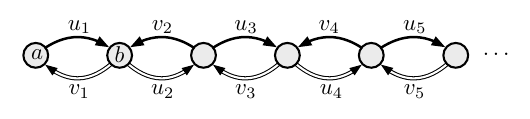}
  \end{center}
  
  This concludes the construction of the infinite sequence, showing that
  $R_1(ab)\in\Delta(\qTwo,D)$ for any "minimal" "repair" containing both $R_1(ab)$ and
  $R_2(ba)$ which proves claim.
  \renewcommand{\qedsymbol}{\usebox{\lmcsQEDSymbol}}  
\end{proof}
\renewcommand{\qedsymbol}{\exampleqedsymbol}
This concludes our \Cref{second-example}.
\end{exa}

$\Cqk$ does not always compute the certain answers. For instance,
the query $\qThree$ from Example~\ref{first-example} is so that
$\certain(\qThree)$ is \conp-complete, and hence $\Cqk(\qThree)$ must have
false negatives for all $k$, under the hypothesis that $\conp \neq \ptime$. Proving this without
relying on complexity theoretic assumptions is the goal of
\Cref{sjf-fails-cqk} for "self-join-free" queries and \Cref{path-fails-cqk} for
"path queries".


\section{Tractable self-join-free queries}
\label{section-sjf}


In this section we consider the case of "self-join-free" queries. We exhibit a
condition named $\HP$ (for Polynomial time Condition) and show that any
"self-join-free query" $q$ satisfying $\HP$ is such that $\Cqk$ computes
$\certain(q)$, where $k$ is the number of "atoms" in $q$. When $\HP$ fails, we will
see that for all values of $k$, $\Cqk$ fails to compute $\certain(q)$ and moreover,
$\certain(q)$ is \conp-hard.

We start by defining \HP, which will require some extra definitions.
Fix, for the rest of this section, a set $\Gamma$ of "primary key
constraints". Let $D$ be a "database" and $r$ a "repair" of $D$. For a "fact"
$u$ of $r$, and for an "equivalent" "fact" $v\simblock u$ from $D$, we denote
by $r[u\to v]$ the "repair" obtained from $r$ by replacing the
"fact" $u$ with $v$.

Consider a "self-join-free query" $q$ with $k$ "atoms".  Recall that we write
{$D\models q(\bu)$} when $\bu$ is a solution to $q$ in $D$.  As $q$ is
"self-join-free", for each "fact" $a$ in a solution $\bar u$ there is a unique
"atom" of $q$ that $a$ can "match", namely the only "fact" of $q$ having the same
relation symbol as $a$.  Hence, the order on both $\bu$ and the "atoms" of $q$ are
not relevant. With some abuse of notation we will therefore often treat a
solution $\bu$, or the sequence of "atoms" of $q$, as a \emph{set} rather than a
\emph{sequence}; we will often use different orders among the "facts" of a 
solution, placing up front the most relevant "facts". Also we shall
write, for a tuple $\bu$ of "facts", $\bu\in\Delta_k(q,D)$ to denote that the
set formed by the "facts" of $\bu$ is a "$k$-set" and belongs to $\Delta_k(q,D)$.

\AP
Let $A$ be an "atom" of $q$ whose associated symbol is $R$. We denote by
$\intro*\vars(A)$ the set of variables of $A$ and by
$\intro*\key(A)$ the set of variables of $A$ occurring in a
position belonging to the primary key of $R$. For instance
$\key(R(\underline{x}~y))$ is $\set{x}$, $\key(R'(\underline{yz}~ x))$ is
$\set{y,z}$ and $\key(R''(\underline{xzx}\,y\,))$ is $\set{x,z}$.

\AP
Given a set $X$ of variables of $q$ and a sequence $A_1 \dots A_n$ of "atoms" of
$q$, we say that $X~A_1 \dots A_n$ is a
""$\Gamma$-derivation"" from $X$ to $A_n$ in $q$ if
for each $ 1 \leq i\leq n$ we have that
$$\key(A_i) \subseteq X \cup \bigcup_{1 \leq j < i}\vars(A_j).$$

\AP If $X = \vars(A_0)$, for some "atom" $A_0$ of $q$, we say that the
"$\Gamma$-derivation" is from $A_0$ to $A_n$, and write it as
$A_0 A_1 \dots A_n$.  We say that an "atom" $A'$ is ""$\Gamma$-determined"" by
the "atom" $A$ if there exists a "$\Gamma$-derivation" from $A$ to
$A'$. Moreover, $A$ and $A'$ are ""mutually $\Gamma$-determined"" if $A'$ is
$\Gamma$-determined by $A$ and $A$ is $\Gamma$-determined by $A'$. This is an
equivalence relation among "atoms".  \AP A set $S$ of "atoms" of $q$ is called
""stable"" if every two distinct "facts" of $S$ are "mutually $\Gamma$-determined".
Note that if an "atom" $A$ is in a "stable set" $S$ then $S$ need not contain
all the "atoms" that are "mutually $\Gamma$-determined" by $A$. So a partition
of the atoms of $q$ into "stable sets" is a refinement of the partition induced by
"mutual $\Gamma$-determinacy".  Notice also that if two atoms $A$ and $B$ are in a
"stable set" $S$, we do not require that $S$ contains the atoms witnessing
their mutual determinacy.
 As usual, we will omit
$\Gamma$ when it is clear from the context.  

The main intuition on how we will use "$\Gamma$-derivations" is the following.
Suppose there is a query $q$ with atoms $A_1, \dotsc, A_n$ which has "solutions" in two "repairs" $r,r'$ of a "database", witnessed by valuations $\mu,\mu'$.
If there is a (one-step) "derivation" ``$X \, A_1$'', this means that $\key(A_1) \subseteq X$, so if $\mu,\mu'$ agree on $X$, in particular they agree on $\key(A_1)$ (that is, the corresponding atoms $\mu(A_1)$ and $\mu'(A_1)$ are "$\Gamma$-equivalent").
Further, if $\mu(A_1) \in r'$, we can actually say that $\mu,\mu'$ agree on all the variables in $\vars(A_1)$, since the "repairs" must necessarily have the same "$A_1$-fact@$R$-fact" under the key $\mu(\key(A_1))=\mu'(\key(A_1))$.
Now if we were to add another step in the "derivation" ``$X \, A_1 \, A_2$'', we further have $\key(A_2) \subseteq X \cup \vars(A_1)$. Since $\mu,\mu'$ agree on $X \cup \vars(A_1)$, we now have $\mu(A_2)$ and $\mu'(A_2)$ are "$\Gamma$-equivalent".
This key property relating "$\Gamma$-derivations" and query "solutions" (and extended to arbitrary length "derivations") is formally stated in the lemma below.

\begin{lem}\label{lem-determinacy}
  Let $q$ be a "self-join-free query".  Let $D$ be a "database" instance and
  $r,r'$ be two "repairs" of $D$. Let $X$ be a set of variables of $q$ and let
  $X~A_1\ldots A_n$ be a "$\Gamma$-derivation" from $X$ to $A_n$ in $q$.  Let
  $q(r)$ contain a "solution" witnessed by a valuation $\mu$ of variables of
  $q$, and $q(r')$ contain a "solution" witnessed by a valuation $\mu'$. If
  $\mu$ and $\mu'$ agree on $X$ and $\mu(A_i) \in r'$ for all $i<n$, then:
  \begin{enumerate}[1)]
    \item $\mu(A_i) =\mu'(A_i)$ for each $i < n$ and 
    \item $\mu(A_n) \simblock \mu'(A_n)$ (and therefore $\mu(A_n)=\mu'(A_n)$ if moreover $\mu(A_n)\in r'$).
  \end{enumerate}
  \end{lem}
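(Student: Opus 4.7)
The plan is to prove both conclusions simultaneously by induction on $i \in \{1,\ldots,n\}$, establishing the following strengthened claim at each step:
\begin{quotation}
\noindent $(*_i)$: $\mu(A_i) \simblock \mu'(A_i)$, and moreover if $\mu(A_i) \in r'$ then $\mu(A_i) = \mu'(A_i)$.
\end{quotation}
Statement (2) of the lemma is immediately $(*_n)$, and statement (1) follows by applying the second half of $(*_i)$ for each $i < n$, since the hypothesis guarantees $\mu(A_i) \in r'$ for those $i$.

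For the base case $i=1$, the derivation condition gives $\key(A_1) \subseteq X$, and since $\mu,\mu'$ agree on $X$, they agree on $\key(A_1)$, which by definition of "$\Gamma$-equivalence" yields $\mu(A_1) \simblock \mu'(A_1)$. For the inductive step, I would use the hypothesis ``$\mu(A_j) \in r'$ for all $j < n$'' together with the inductive hypothesis $(*_j)$ to conclude $\mu(A_j) = \mu'(A_j)$ for each $j < i$. Since equal facts assign the same value at every position, this gives agreement of $\mu$ and $\mu'$ on $\vars(A_j)$ for each $j < i$. Combined with agreement on $X$ and the derivation condition $\key(A_i) \subseteq X \cup \bigcup_{j<i}\vars(A_j)$, this forces agreement on $\key(A_i)$, and hence $\mu(A_i) \simblock \mu'(A_i)$.

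For the second half of $(*_i)$: if additionally $\mu(A_i) \in r'$, then both $\mu(A_i)$ and $\mu'(A_i)$ lie in the same "block" and both belong to the "repair" $r'$; since $r'$ contains exactly one representative per "block" (as it satisfies $\Gamma$), we conclude $\mu(A_i) = \mu'(A_i)$.

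There is no real obstacle here: the proof is essentially an unfolding of the definitions of "$\Gamma$-derivation", "$\Gamma$-equivalence", and "repair". The only subtle point to highlight in the write-up is that the self-join-freeness assumption, although present in the lemma statement, is not actually used in this particular argument; what is used is the syntactic definition of derivation and the fact that distinct "equivalent" facts cannot coexist in a "repair". I expect the write-up to be short and to closely mirror the outline above.
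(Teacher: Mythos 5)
Your proposal is correct and follows essentially the same route as the paper: induction along the derivation sequence, using $\key(A_1)\subseteq X$ for the base case, and in the inductive step upgrading $\simblock$ to equality for the earlier atoms via the fact that a "repair" contains exactly one "fact" per "block", which then gives agreement on $\key(A_i)$. Your side remark that "self-join-free" is not actually needed in the argument is accurate as well.
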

  \begin{proof} The proof is by induction on $n$.
    For $n =1$ the statement trivially holds since $\key(A_1) \subseteq X$ thus
    $\mu(A_1) \simblock \mu'(A_1)$.
  
    Now consider a sequence $X A_1 \dots A_n$, $n>1$ satisfying the
    hypotheses. The induction hypothesis applied to the sequence
    $X A_1 \dots A_{n-1}$ implies $\mu(A_i) = \mu'(A_i)$, for all
    $1 \leq i \leq n-1$.  Then $\mu$ and $\mu'$ agree on
    $vars(A_1) \cup\dotsb \cup vars(A_{n-1}).$
  
    Since $X A_1 \dots A_n$ is a "$\Gamma$-derivation" sequence, we have
    $key(A_n) \subseteq X \cup vars(A_1) \cup \dotsb\cup vars(A_{n-1})$; hence
    $\mu$ and $\mu'$ agree on $\key(A_n)$, or in other words
    $\mu(A_{n}) \simblock \mu'(A_{n})$.
  \end{proof}

\begin{cor}
\label{cor-determinacy}
{Let $q$ be a "self-join-free query" and $S$ be a "stable set" of
  "atoms" of $q$. Let $D$ be a "database" instance and $r$ be a "repair" of
  $D$.} Assume
$r \models q(\bar\alpha \bar a \bar \beta)$ and
$r \models q(\bar\alpha' \bar a' \bar\beta')$, where $\bar a$ and $\bar a'$
"match" $S$.  If $\bar a \cap \bar a' \neq \emptyset$ then $\bar a = \bar a'$.
\end{cor}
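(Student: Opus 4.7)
The plan is to reduce the statement to a direct application of \Cref{lem-determinacy}, propagating the agreement of the two valuations from a single shared fact all the way across the stable set $S$.

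First, I would fix valuations $\mu, \mu'$ witnessing the two "solutions" and pick any common fact $u \in \bar a \cap \bar a'$. Because $q$ is "self-join-free", the relation symbol of $u$ occurs in exactly one "atom" of $q$, and since $u$ "matches" an atom of $S$ in each solution, this atom $A_0$ belongs to $S$ and satisfies $\mu(A_0) = u = \mu'(A_0)$. Equality of these two "facts" implies that $\mu$ and $\mu'$ agree on $X \defeq \vars(A_0)$.

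Next, I would pick an arbitrary $A' \in S$ and show $\mu(A') = \mu'(A')$. Since $S$ is "stable", $A'$ is "$\Gamma$-determined" by $A_0$, so there is a "$\Gamma$-derivation" $A_0 A_1 \dots A_{n-1} A'$ in $q$. Taking $r = r'$, the hypotheses of \Cref{lem-determinacy} are met: $\mu$ and $\mu'$ agree on $X$, and for every $i < n$ the "fact" $\mu(A_i)$ belongs to $r = r'$ because $\mu$ realizes a solution in $r$. The lemma then yields $\mu(A') \simblock \mu'(A')$; since both $\mu(A')$ and $\mu'(A')$ lie in the "repair" $r$, which satisfies the "primary key constraints", they are in fact equal.

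Thus $\mu(A) = \mu'(A)$ for every $A \in S$, which means the tuples of "facts" matching $S$ in the two solutions coincide, i.e.\ $\bar a = \bar a'$. The only subtle step is the initial identification of $A_0$ from the shared fact, which relies crucially on "self-join-freeness"; once this is done, the stability of $S$ together with \Cref{lem-determinacy} does all the work.
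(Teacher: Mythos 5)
Your proof is correct and follows essentially the same route as the paper's: both reduce the claim to \Cref{lem-determinacy} with $r=r'$, starting from a shared fact whose (unique, by self-join-freeness) atom provides the set $X$ of variables on which the valuations agree, and using stability to obtain the needed $\Gamma$-derivations. The only cosmetic difference is that you run the lemma once per atom $A'\in S$ while the paper invokes a single derivation sequence containing all of $S$; both are valid.
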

\begin{proof}
  The statement follows directly from \Cref{lem-determinacy} using $r = r'$.
  Take $a_1$ as any "fact" in $\bar a \cap \bar a'$, $A_1$ being the "atom"
  "matched" by $a_1$, $X$ being $\vars(A_1)$ and $A_1 \dots A_k$ being a
  "$\Gamma$-derivation" sequence containing all $S$ (which exists by
  stability).
\end{proof}

\AP We are now ready to define \HP.  A ""$\Gamma$-sequence"" $\tau$ of $q$ is a
sequence $\tau=S_1S_2\cdots S_n$ where each $S_i$ is a "stable set" of "atoms" of
$q$, and the $S_i$'s form a partition of $q$.  In this context, we denote by
$S_{\leq i}$ the set $\bigcup_{j\leq i} S_j$. We define $S_0$ to be the empty set.

\AP
Let $\tau=S_1S_2\cdots S_n$ be a "$\Gamma$-sequence" of $q$. Let $1\leq i < n$
and let $A$ be an "atom" of $S_{i+1}$.
\AP
We say that the query $q$ satisfies
$\intro*\HPtauA \tau A$ and write $q \models \HPtauA \tau A$ if
the following is true for all "databases" $D$, all "repairs" $r$ of $D$ and all
solutions $\balpha u \bbeta$ and $\balpha' u' \bbeta'$ to $q$ in $D$ such that
$\balpha$ and $\balpha'$ "match" $S_{\leq i}$ and $u$ and $u'$ "match" $A$:
\begin{center}
If $
\left\{
\begin{aligned}\label{hypothesis-ha}
  r&\models q(\balpha u \bbeta ), \\
  u &\simblock u' \text{, and}\\  
  r&[u \rightarrow u']\models q(\balpha' u' \bbeta') 
\end{aligned}\right\}
\text{ then }
  r \models q(\balpha' u \bdelta)
$ for some sequence of facts $\bdelta$.
\end{center}



Note that, by symmetry, we also have $r[u\rightarrow u'] \models q(\balpha u'\bdelta')$ for some sequence of facts $\delta'$.
\AP We write $q \models \intro*\HPtaui \tau i$ if $q$ satisfies
$\HPtauA \tau A$ for all $A$ of $S_{i+1}$, and we write
$q \models \intro*\HPtau$ if $q$ satisfies $\HPtaui \tau i$ for all
$1\leq i <n$. Since the condition is restricted to indices $i<n$, $\HPtau$
trivially holds for any $\tau$ having only one "stable set".  \AP Finally, we
write $q\models \intro*\HP$ if there is a "$\Gamma$-sequence" $\tau$ of $q$
such that $q\models \HPtau$. Again, if $q$ has only one $\Gamma$-determinacy
class (for instance the query $\qTwo$ of \Cref{second-example}) then
$q\models\HP$ in a trivial way.

We illustrate the definition of $\HP$ with the following examples.
\begin{exa}\label{third-example}
  We recall the three queries from \Cref{first-example}.  The query
  $\qTwo = R_1(\underline{x}~y)\land R_2(\underline{y}~x)$ satisfies $\HP$
  since it has only one maximal "stable set".
  
  The query $\qOne = R_1(\underline{x}~y)\land R_2(\underline{y}~z)$ has two
  "stable sets": $R_1(\underline{x}~y)$ determines $R_2(\underline{y}~z)$ but the
  converse is false. For
  $\tau=\set{R_2(\underline{y}~z)}\set{R_1(\underline{x}~y)}$ we have
  $q\not\models\HPtau$ because we have the solutions $\qOne(R_2(bc)R_1(ab))$ and
  $\qOne(R_2(b'c)R_1(ab'))$ but not $\qOne(R_2(bc)R_1(ab'))$.  However for
  $\tau=\set{R_1(\underline{x}~y)}\set{R_2(\underline{y}~z)}$ it is easy to
  verify that $\qOne\models\HPtau$.
  Hence, $\qOne\models\HP$.

  The query $\qThree = R_1(\underline{x}~y)\land
  R_2(\underline{z}~y)$ has also two "stable sets", but no
  possible sequence $\tau$ makes $\HPtau$ true. This is because
  (i) $\qThree(R_1(ab)~R_2(cb))$ and $\qThree(R_1(a'b')~R_2(cb'))$ hold, but not
  $\qThree(R_1(ab)~R_2(cb'))$, 
  and (ii) $\qThree(R_2(ab) ~R_1(cb))$ and $\qThree(R_2(a'b')~R_1(cb'))$ hold, but not $\qThree(R_2(ab)~R_1(cb'))$. Therefore, $\qThree\not\models\HP$.  
\end{exa}

Our goal for the remaining part of this section is to show that $q\models \HP$
implies that $\Cqk[](q)$ computes $\certain(q)$ (Recall that $\Cqk[](q)$ denotes $\Cqk[k](q)$ where $k$ is the number of "atoms" in $q$) which is given by \Cref{theorem-HP-ptime}. In
\Cref{section-sjf-lower-bound} we will see that when $q\not\models\HP$ then
$\certain(q)$ cannot be computed by $\Cqk$ for any choice of $k$ and that $\certain(q)$ is actually
\conp-hard.

\begin{thm}\label{theorem-HP-ptime}
  Let $q$ be a "self-join-free query" with $k$ "atoms". If $q \models\HP$, then
  $\Cqk(q)$ computes $\certain(q)$.
\end{thm}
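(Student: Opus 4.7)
The plan is to show the contrapositive of the relevant implication: since \Cref{prop-underapprox} already gives the easy direction, it suffices to prove that whenever $q$ is "certain" for a "database" $D$, the empty set belongs to $\Delta_k(q,D)$. I would follow and substantially extend the strategy of \Cref{second-example}: find a "minimal" "repair" $r$, fix a "block" $B$ of $r$ occurring in some solution, and show by an infinite-descent argument that every "fact" of $B$ belongs to $\Delta_k(q,D)$; the inductive step of $\Cqk$ then places $\emptyset$ into $\Delta_k(q,D)$.

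The core work will be proving, by induction on $i\in\{0,\dots,n\}$ along a "$\Gamma$-sequence" $\tau=S_1\cdots S_n$ witnessing $q\models\HPtau$, the following invariant: for every "minimal" "repair" $r$ of $D$ and every "solution" $\balpha\bbeta$ to $q$ in $r$, where $\balpha$ "matches" $S_{\leq i}$, the "$k$-set" of "facts" in $\balpha$ belongs to $\Delta_k(q,D)$. The base case $i=0$ is trivial since $\balpha$ is empty. For $i=n$, the statement says that all "facts" of any "solution" in a "minimal" "repair" are in $\Delta_k(q,D)$, from which the conclusion follows by choosing $B$ to be the "block" of any "fact" occurring in such a solution and arguing, as in \Cref{second-example}, that every alternative representative of $B$ yields another "minimal" "repair" satisfying $q$ (by "certainty"), and therefore another singleton in $\Delta_k(q,D)$.

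For the inductive step from $i$ to $i+1$, I would refine the notion of minimality (in the spirit of the "$i$-minimal" and "$i$-compatible" concepts indexed in the paper) so that swapping a "fact" matching an "atom" $A\in S_{i+1}$ does not create new "solutions" matching $S_{\leq i}$ that were not already there. Given a "minimal" "repair" $r$ and a "solution" $\balpha\,\bar u\,\bbeta$ with $\balpha$ "matching" $S_{\leq i}$ and $\bar u$ "matching" $S_{i+1}$, I want to place $\balpha\bar u$ into $\Delta_k(q,D)$ using the algorithm's closure rule. For each $A\in S_{i+1}$ with $\bar u$-component $u$, and every $u'\simblock u$ in the "block" $B$ of $u$, "certainty" provides a "solution" $\balpha'u'\bbeta'$ in $r[u\to u']$ with $\balpha'$ "matching" $S_{\leq i}$. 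The condition $\HPtauA{\tau}{A}$ then yields mixed "solutions" $\balpha'u\bdelta$ in $r$ and $\balpha u'\bdelta'$ in $r[u\to u']$; the induction hypothesis (applied to both $r$ and $r[u\to u']$, which is again "minimal" by the refined notion) places the "$S_{\leq i}$"-parts of all these "solutions" into $\Delta_k(q,D)$. Combining this with the "atoms" of $S_{i+1}$ already "matched" and appealing to \Cref{cor-determinacy} to control the interaction of distinct "$S_{i+1}$"-matchings, one verifies that the algorithm's closure rule is applicable at the "block" $B$, adding $\balpha\bar u$ to $\Delta_k(q,D)$.

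The main obstacle will be twofold. First, choosing the right refinement of minimality so that the swap $r\mapsto r[u\to u']$ preserves the hypotheses of the induction: a naive counting of "solutions" is not fine enough, and one must lexicographically prioritize earlier layers $S_{\leq i}$ over $S_{i+1}$ to ensure no new lower-layer matches appear after a swap. Second, if at any step of the induction the inclusion into $\Delta_k(q,D)$ fails, I expect one has to replay the chase-like construction of \Cref{second-example} in its full generality: extract an infinite sequence of distinct "facts" of $D$ by iterated swaps along blocks of $S_{i+1}$, using \Cref{lem-determinacy} to guarantee distinctness from the finitely many earlier "facts" and $\HPtaui{\tau}{i}$ to propagate each swap into a genuinely new "solution". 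Finiteness of $D$ then forces the desired membership, closing the induction.
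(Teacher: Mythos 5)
Your overall architecture --- a layered induction along a "$\Gamma$-sequence" $\tau$ witnessing $q\models\HPtau$, a refined notion of minimality that prioritizes lower layers, and a chase-style infinite-descent argument when membership in $\Deltak(q,D)$ fails --- is exactly the paper's (the invariant $\Ind_i$ and \Cref{induc-prop}). However, as written your induction runs in the wrong direction, and this is not a cosmetic slip. Your level-$i$ statement asserts that the $S_{\leq i}$-part $\balpha$ of a solution in a suitably minimal "repair" lies in $\Deltak(q,D)$. At $i=0$ this says $\emptyset\in\Deltak(q,D)$, which is precisely the acceptance condition of the algorithm and hence the whole theorem; it is not ``trivial because $\balpha$ is empty''. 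Conversely, at $i=n$ the statement says that full solutions are in $\Deltak(q,D)$, which holds already by the initialization of $\Deltak(q,D)$ and yields nothing --- in particular it does not produce the singletons your endgame needs. Likewise your inductive step ``from $i$ to $i+1$'' tries to place the \emph{larger} tuple $\balpha\bar u$ into $\Deltak(q,D)$; but any superset (of size at most $k$) of a member of $\Deltak(q,D)$ is derivable in one application of the closure rule, so that direction is vacuous. The substantive implication is the reverse one, the paper's $\Ind_{i+1}\Rightarrow\Ind_i$: from ``all $S_{\leq i+1}$-parts are in $\Deltak(q,D)$'' deduce ``all $S_{\leq i}$-parts are in $\Deltak(q,D)$'', i.e., peel the $S_{i+1}$-layer off.

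Once the direction is fixed, the remaining gap is the peeling itself, which your sketch treats as a single application of the closure rule at the "block" of one "fact" $u$ "matching" one "atom" $A$ of $S_{i+1}$. In general $S_{i+1}$ contains several "atoms" and $\bar u$ several "facts"; moreover, when you replace $u$ by $u'\simblock u$, the new solution guaranteed by "certainty" and $\HPtaui \tau i$ has a \emph{different} $S_{i+1}$-part, so the witnesses the closure rule needs at the next "block" do not line up with the set you are trying to derive. This is why the paper's proof of \Cref{induc-prop} maintains a growing "connected" sequence $\bar p$ of replacement "facts" together with a maximal suffix $\bar c$ and five invariants, uses \Cref{claim-strong-delta} to show that swaps preserve "strong $i$-minimality", and obtains membership in $\Deltak(q,D)$ only by deriving a contradiction with the finiteness of $D$. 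Your final paragraph correctly anticipates that such a chase is needed, but the bookkeeping that makes it terminate in a contradiction (pairwise distinctness of the new "facts" via "connectedness" and \Cref{lem-determinacy}) is the actual content of the proof and is missing from the proposal.
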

Suppose $q$ has $k$ "atoms".
Let $\tau=S_1 \cdots S_n$ be a "$\Gamma$-sequence" of $q$ such that
$q \models \HPtau$. We show that $\Cqk(q)$ computes precisely
$\certain(q)$.

If $q$ has only one "stable set" (and thus it trivially satisfies $\HP$), the
proof is similar to the proof of \Cref{second-example}, starting with a
minimal "repair" and exploiting the mutual determinacy of the "atoms" of $q$. If
$q$ has more "stable sets", then the condition of minimality needs to be more
fine-grained and we proceed by induction on the index of the "stable sets", in the
order described by $\tau$.

  We start with some extra notations. Recall that $q(r)$  denotes the set of solutions to $q$ in a "repair" $r$; we additionally denote by $\ql{i}(r)$ 
  the projection of $q(r)$ on the first $i$ "stable sets" of $\tau$. More precisely 
  \[
    \ql{i}(r) = \{ \bar v~|~ \exists \bar u \in q(r)\textrm{  s.t.\ }\bar v 
  \textrm{ is the subset of }\bar u\textrm{ "matching"  }S_{\leq i}\},
  \]
  and if $\bar v \in \ql{i}(r)$ we write equivalently $r \models \ql{i}(\bar v)$. 
  Let $D$ be a "database" and $r$ a "repair" of $D$. 
  \AP
  We say that $r$ is
  ""$i$-minimal"" if there is no "repair" $r'$ such that
  $\ql{i}(r') \subsetneq \ql{i}(r)$.
  We say that a "fact" $u$ of a "database" $D$ is ""$i$-compatible"", if it "matches" some
  "atom" of $S_i$.
  We will need the limit case when $i=0$. In that case both $S_0$ as well as $S_{\leq 0}$  are empty sets (and hence $\HPtaui \tau 0$ is always true),
  $q_{\leq 0}(r)$ contains only the empty sequence $\epsilon$ for all $r$, and therefore all "repairs" are $0$-minimal.
   The proof of the theorem makes use of an induction based on the following invariant property of the "database", for each $0 \leq i \leq n$:
   \AP
  \begin{align*}
    \intro*\Ind_i = \text{For all "$i$-minimal" "repairs" $s$ and "facts" $\bar u$ s.t.\ $s
    \models \ql i(\bar u)$, we have $\bar u \in \Deltak(q,D)$.}
  \end{align*}

\begin{lem}
\label{induc-prop}
Given $q$, $D$ and a "$\Gamma$-sequence" $\tau$ for $q$, for every
$0 \leq i < n$, if $\Ind_{i+1}$ and $\HPtaui \tau i$, then $\Ind_{i}$.
\end{lem}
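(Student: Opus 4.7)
The plan is to prove $\Ind_i$ directly from $\Ind_{i+1}$ and $\HPtaui \tau i$. Given an $i$-minimal repair $s$ and a sequence $\bar u$ matching $S_{\leq i}$ with $s \models \ql i(\bar u)$, I first promote $s$ to a globally $(i+1)$-minimal witness $r^*$ still containing $\bar u$ at level $i$: take $r^*$ to be $(i+1)$-minimal among the repairs $r$ with $\ql i(r) \subseteq \ql i(s)$; the $i$-minimality of $s$ forces $\ql i(r^*) = \ql i(s)$, so $\bar u \in \ql i(r^*)$, and any $r'$ with $\ql{i+1}(r') \subsetneq \ql{i+1}(r^*)$ would satisfy $\ql i(r') \subseteq \ql i(s)$ and hence belong to the class, contradicting the choice of $r^*$. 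Thus $r^*$ is globally $(i+1)$-minimal. Picking $\bar v$ matching $S_{i+1}$ with $\bar u\bar v \in \ql{i+1}(r^*)$ and applying $\Ind_{i+1}$ then yields $\bar u \cup \bar v \in \Deltak(q,D)$.

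Second, I apply the fixpoint rule to derive $\bar u \in \Deltak(q,D)$. Fix an atom $A \in S_{i+1}$ with matching fact $v_A \in \bar v$, and let $B$ be the $\Gamma$-block of $v_A$. For each $a \in B$ I need some $S'_a \subseteq \bar u \cup \{a\}$ in $\Deltak(q,D)$. For $a = v_A$, $\bar u \cup \{v_A\}$ is contained in $\bar u \cup \bar v$ (and equals it when $|S_{i+1}|=1$). For $a \neq v_A$, consider the flipped repair $r_a := r^*[v_A \to a]$. The $(i+1)$-minimality of $r^*$, together with the loss of $(\bar u, \bar v)$ in passing from $\ql{i+1}(r^*)$ to $\ql{i+1}(r_a)$, forces a new $\ql{i+1}$-solution $(\bar u', \bar w)$ in $\ql{i+1}(r_a)$ whose $S_{i+1}$-part $\bar w$ contains $a$. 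Invoking $\HPtauA \tau A$ with $\balpha = \bar u$, $\balpha' = \bar u'$, $u = v_A$, $u' = a$ produces a full solution $q(\bar u, a, \bdelta')$ in $r_a$, hence a $\ql{i+1}$-solution $(\bar u, \bar v_a)$ in $r_a$ with $a \in \bar v_a$. Promoting $r_a$ to an $(i+1)$-minimal repair $r_a^{\star}$ that preserves this $\ql{i+1}$-solution (via another instance of the first-paragraph argument, anchored at $(\bar u, \bar v_a)$) and invoking $\Ind_{i+1}$ gives $\bar u \cup \bar v_a \in \Deltak(q,D)$.

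When $|S_{i+1}| = 1$, this closes the argument: $\bar v_a = \{a\}$, so $\bar u \cup \{a\} \in \Deltak(q,D)$ for every $a \in B$, and the fixpoint rule applied to $\bar u$ with block $B$ adds $\bar u$ to $\Deltak(q,D)$. When $|S_{i+1}| > 1$, the sets $\bar u \cup \bar v_a$ produced by $\Ind_{i+1}$ have size $|\bar u| + |S_{i+1}|$, exceeding the $|\bar u|+1$ budget of a single fixpoint step from $\bar u$; I would bridge the gap by a downward induction on $|X|$ for $X \subseteq \bar v$, showing $\bar u \cup X \in \Deltak(q,D)$ by peeling off one $v_j$ at a time through successive fixpoint applications with the blocks $B_j$, re-running the second-paragraph argument at each atom $A_j \in S_{i+1}$. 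The stability of $S_{i+1}$ via \Cref{cor-determinacy} keeps the various $\bar v_a$ coherent across different $a$'s and different blocks.

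The hardest part is this closing step for $|S_{i+1}| > 1$: $\Ind_{i+1}$ supplies sets that are $|S_{i+1}|$ facts larger than the fixpoint rule's per-step budget, so the downward induction must invoke $\HPtauA$ uniformly across all atoms of $S_{i+1}$ and rely on stability to align the peelings coherently. A secondary subtlety, present even when $|S_{i+1}| = 1$, is the second promotion to $(i+1)$-minimality while preserving the chosen $\ql{i+1}$-solution: the minimization class must be arranged so that the chosen solution survives the global $(i+1)$-minimization.
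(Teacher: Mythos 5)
Your first paragraph is sound: minimizing $\ql{i+1}$ over the class of repairs $r$ with $\ql i(r)\subseteq \ql i(s)$ does produce a globally $(i+1)$-minimal $r^*$ with $\ql i(r^*)=\ql i(s)$, and $\Ind_{i+1}$ then gives $\bar u\bar v\in\Deltak(q,D)$; likewise the argument that the flip $r_a=r^*[v_A\to a]$ must create a new $\ql{i+1}$-projection containing $a$, so that $\HPtauA \tau A$ yields $r_a\models q(\bar u\, a\,\bar\delta')$, is correct. But the proof does not close, for two reasons you flag without resolving. First, to invoke $\Ind_{i+1}$ you need an $(i+1)$-minimal repair whose $\ql{i+1}$-projection contains $(\bar u,\bar v_a)$; $r_a$ need not be $(i+1)$-minimal, and minimizing within the class of repairs whose projection contains $(\bar u,\bar v_a)$ does not give global $(i+1)$-minimality, since a strictly smaller repair outside that class is not excluded. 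The paper's way around this is to replace subset-minimality by the cardinality-based notion of strong $i$-minimality, which \Cref{claim-strong-delta} shows is \emph{preserved} by the flip (via a bijection between the solution sets built from $\HPtaui \tau i$ and \Cref{cor-determinacy}); with global $(i+1)$-minimality alone this preservation fails, and no re-minimization after the flip rescues it.

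Second, and more seriously, the peeling for $|S_{i+1}|>1$ does not work as sketched. To derive $\bar u\cup X'$ (with $X'=X\setminus\{v_j\}$) by the fixpoint rule on the block $B_j$, you need, for every $a\in B_j$, a set \emph{contained in} $\bar u\cup X'\cup\{a\}$ that is already in $\Deltak(q,D)$. What your construction supplies is $\bar u\cup\bar v_a$, where $\bar v_a$ is a full $S_{i+1}$-match containing $a$ but otherwise uncontrolled: $\HPtauA \tau A$ only asserts the existence of \emph{some} completion $\bar\delta'$, and \Cref{cor-determinacy} pins down $\bar v_a$ inside $r_a$ but provides nothing forcing $\bar v_a\setminus\{a\}\subseteq X'$. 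Even at the first peeling step nothing aligns $\bar v_a$ with $(\bar v\setminus\{v_j\})\cup\{a\}$, and from the second step on the containment is impossible by cardinality, since $|\bar v_a|=|S_{i+1}|>|X'\cup\{a\}|$. This mismatch between what $\Ind_{i+1}$ produces and what the fixpoint rule can consume is precisely why the paper abandons the direct derivation and instead argues by contradiction: assuming $\bar u\notin\Deltak(q,D)$, it chases the bad witnesses guaranteed by the negation of the fixpoint rule through the database along a connected sequence of pairwise distinct facts until the finiteness of $D$ is violated. As written, your proposal establishes the lemma only when every stable set is a singleton; the general case, which is the substance of the lemma, remains open in your write-up.
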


We first show how this statement already implies \Cref{theorem-HP-ptime}.
\begin{proof}[Proof of \Cref{theorem-HP-ptime}]
  From \Cref{prop-underapprox}, we know that if $D$ is a "database" such that
  $D\models\Cqk(q)$ then all "repairs" of $D$ satisfy $q$. It remains to show
  the converse.

Assume all "repairs" satisfy $q$ and that $q \models\HPtau$ for some sequence
$\tau$ of length $n$, which means that $\HPtaui \tau i$ holds for all $i$.
Observe that $\Ind_{n}$ holds true by the base case definition of
$\Delta_k(q,D)$. Hence by $n$ repeated applications of \Cref{induc-prop} we
obtain that $\Ind_0$ holds true.  Now take any "repair" $r$. By definition $r$
is "$0$-minimal" and by hypothesis it satisfies the query $q$. By $\Ind_0$ it
follows that the empty set (denoted by the empty tuple) is in $\Delta_k(q,D)$,
and hence $D\models \Cqk(q)$.
\end{proof}

We are now left with the proof of \Cref{induc-prop}, which is the main
technical content of the section.  {Towards this, we define a stronger version
  of "$i$-minimality".}  \AP For $1 \leq i < n$, we say that an "$i$-minimal"
"repair" $s$ is ""strong $i$-minimal"" if there exists no "repair" $s'$ such
that $\ql{i}(s') = \ql{i}(s)$ and $|\ql{i+1}(s')| < |\ql{i+1}(s)|$.  Note that
if $\ql{i+1}(s') \subsetneq \ql{i+1}(s)$ then either
$\ql{i}(s') \subsetneq \ql{i}(s)$ or $\ql{i}(s') = \ql{i}(s)$ but
$|\ql{i+1}(s')| < |\ql{i+1}(s)|$. In particular, every "strong $i$-minimal" "repair"
is "$(i+1)$-minimal".

  \begin{clm}\label{claim-strong-minimal}
    If there exists an "$i$-minimal" "repair" $s$ such that
    $s\models q_{\leq i}(\bar u)$, then there exists a "strong $i$-minimal"
    "repair" $s'$ such that $s'\models q_{\leq i}(\bar u)$.
  \end{clm}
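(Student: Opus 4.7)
The plan is to proceed by a direct ``double minimization'' argument: among all repairs $s'$ satisfying $q_{\leq i}(s')=q_{\leq i}(s)$, choose one that minimizes $|q_{\leq i+1}(s')|$. Since the number of repairs of $D$ is finite, such a repair exists. The three properties required of $s'$ should then follow essentially by unpacking definitions.

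First I would verify that $s'\models q_{\leq i}(\bar u)$. Since $s\models q_{\leq i}(\bar u)$ means $\bar u\in q_{\leq i}(s)$, and by the choice $q_{\leq i}(s')=q_{\leq i}(s)$, we immediately get $\bar u\in q_{\leq i}(s')$, i.e., $s'\models q_{\leq i}(\bar u)$.

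Next I would show that $s'$ is \emph{$i$-minimal}. This is where the hypothesis on $s$ is used: because $s$ is $i$-minimal, there is no repair $t$ with $q_{\leq i}(t)\subsetneq q_{\leq i}(s)$. Since $q_{\leq i}(s')=q_{\leq i}(s)$, the same statement holds with $s'$ in place of $s$, hence $s'$ is $i$-minimal.

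Finally I would argue that $s'$ is \emph{strong $i$-minimal}. Suppose for contradiction that there existed a repair $s''$ with $q_{\leq i}(s'')=q_{\leq i}(s')$ and $|q_{\leq i+1}(s'')|<|q_{\leq i+1}(s')|$. Then $q_{\leq i}(s'')=q_{\leq i}(s)$ as well, so $s''$ is a candidate in the same pool we minimized over, with a strictly smaller value of $|q_{\leq i+1}(\cdot)|$, contradicting the choice of $s'$. I do not expect any obstacle here; this claim is essentially a well-foundedness observation, and the real work of the section lies in \Cref{induc-prop}, which uses this claim to combine $i$-minimality with the additional leverage of counting solutions over $S_{i+1}$.
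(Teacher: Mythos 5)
Your proposal is correct and is essentially the paper's own argument: pick, among all repairs $s''$ with $\ql{i}(s'')=\ql{i}(s)$, one minimizing $|\ql{i+1}(\cdot)|$, and note that $i$-minimality transfers from $s$ since the $\ql{i}$-sets coincide. The only difference is that you spell out the $i$-minimality check explicitly, which the paper leaves implicit.
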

  \begin{proof}
    Among all "repairs" $s''$ having $\ql{i}(s'') = \ql{i}(s)$, choose $s'$ as
    having minimal $|\ql{i+1}(s')|$. In other words, $s'$ is a "repair" having
    $\ql{i}(s') = \ql{i}(s)$ and for every "repair" $s''$ with
    $\ql{i}(s'') = \ql{i}(s)$ we have $|\ql{i+1}(s')| \leq
    |\ql{i+1}(s'')|$. Hence, $s'$ is "strong $i$-minimal".
  \end{proof}

  \AP
  For a given "database" $D$, for a "repair" $r$ of $D$, we denote by
  $\intro*\midi r {i+1}$ the set of "facts" of $r$ which are not
  "$(i+1)$-compatible".  \AP A sequence $\bar p$ of "facts" of the "database"
  is ""connected"" with respect to $D' \subseteq D$ if for every "repair" $r$
  containing $\bar p$ and $D'$, and for every two consecutive "facts" $a\, b$
  of $\bar p$, if $r \models q(\bar\alpha a \bar\beta)$ for some
  $\bar\alpha, \bar\beta$, then $b \in \bar\alpha\bar\beta$.
  Note that if $\bar p$ is the empty tuple (or a tuple of size 1), then
  $\bar p$ is trivially "connected" with respect to every $D' \subseteq D$.

\begin{proof}[Proof of \Cref{induc-prop}]
  By contradiction, suppose the statement of the lemma is false. Then, there is
  some $i$ such that $\Ind_{i+1}$ and $\HPtaui \tau i$ holds, but for some
  "$i$-minimal" "repair" $s$ and tuple $\bar u$ we have
  \begin{align}
    s \models q_{\leq i}(\bar u) \text{ but }\bar u \not\in \Delta_k(q,D). \tag{h1}\label{hyp:contr:h1}
  \end{align}

  From \Cref{claim-strong-minimal}, we can assume that $s$ is "strong
  $i$-minimal".  We will build an infinite sequence of pairwise distinct
  "facts" $ p_1,p_2, \dotsc$ from $D$, contradicting the finiteness of $D$. We
  also maintain another sequence of "repairs" $r_1,r_2,\dotsc$. We set
  $r_0 = s$. For all $l > 0$, we define $\bar p_l = p_1,\ldots ,p_l$ with
  $\bar p_0$ as the empty "fact" sequence.

  The sequence is constructed by induction with the following invariant for
  every $l \geq 0$, assuming $\bar p = \bar p_l$ and $r = r_l$:
\begin{enumerate}[(a)]
\item\label{pi} $\bar p$ contains only "$(i+1)$-compatible" "facts" of $D$;
\item\label{pd} the elements of $\bar p$ are pairwise distinct;
\item\label{pc} $\bar p$ is "connected" with respect to $\midi r {i+1}$;
\item\label{ru} $r$ is "strong $i$-minimal", $r \models q_{\leq i}(\bu)$ and,
  if $\bar p$ is not empty and $v$ is the last "fact" of $\bar p$, then
  $r \models q(\bu v \bar \delta)$, for some $\bar \delta$;
\item\label{suf} $\bu\bc\not\in \Delta_k(q,D)$, where $\bc$ is the maximal suffix
  of $\bar p$ satisfying $r' \models q(\bu\bc\bbeta)$ for some $\bbeta$ and
  "strong $i$-minimal" "repair" $r'$ containing $\bar p$ and $\midi r {i+1}$.
\end{enumerate}

\subparagraph*{Base case} When $l = 0$, we have $r_0 = s$ and $\bar p$ is the
empty sequence.  Hence \ref{pi}, \ref{pd} and \ref{pc} are trivially true by
emptiness of $\bar p$; \ref{ru} holds since $s \models q_{\leq i}(\bu)$ (note
that we have assumed $s$ to be "strong $i$-minimal"); finally \ref{suf} holds
with empty $\bar c$ since $\bu \notin \Delta_k(q,D)$ by \eqref{hyp:contr:h1}.

\subparagraph*{Induction step} Assume we have $r = r_{l-1}$ and
$\bar p_{l-1} = p_1,\ldots ,p_{l-1}$ (possibly empty) satisfying the five properties
above (for the rest of the proof we denote $\bar p_{l-1}$ as $\bar p$ for brevity).  Consider the maximal suffix $\bar c$ concerned by
property~\ref{suf}. That is, for some $\bbeta$ and "strong $i$-minimal"
"repair" $r'$ containing $\bar p$ and $\midi r {i+1}$ we have:
  \begin{align}
    \text{$\bu \bc \not\in \Delta_k(q,D)$ and   $r'\models q(\bu\bc\bbeta)$}  \tag{h2}\label{hyp:contr:h12}
  \end{align}
  First let $\bbeta =d_1d_2,\ldots d_t$. Since $\bu \bc \not\in \Delta_k(q,D)$,
  by definition of $\Delta_k(q,D)$ there exists some $d_1'\simblock d_1$ such
  that $\bu\bc d_1'\not\in \Delta_k(q,D)$. This again implies that there exists
  some $d_2'\simblock d_2$ such that $\bu\bc d_1'd_2' \not\in
  \Delta_k(q,D)$. Since $k$ is the number of "atoms" in $q$, we can continue this
  to obtain $\bbeta' = d_1'd_2'\ldots d_t'$ where $d_i' \simblock d_i$ but
  $\bu\bc\bbeta'$ contains no "$k$-set" in $\Delta_k(q,D)$.
 
  We now show that $\bar c$ cannot "match" the entire set $S_{i+1}$.  Suppose,
  by means of contradiction, that $r' \models q_{\leq i+1}(\bu \bc)$. As $r'$
  is "strong $i$-minimal", it is "$(i+1)$-minimal". Hence by
  \eqref{hyp:contr:h1}, since $\Ind_{i+1}$ holds by hypothesis,
  $\bu \bc \in \Delta_k(q,D)$, which is in contradiction with
  \eqref{hyp:contr:h12}.  Then,
  \begin{align}
      r' \models \lnot\ql{i+1}(\bu \bc). \tag{h3}\label{hyp:contr:h2}
  \end{align}
 
  This means that, since $r' \models q (\bu \bc \bbeta)$ by
  \eqref{hyp:contr:h12}, and $\bc$ "matches" a subset of $S_{i+1}$, there must be
  an "atom" $C$ of $S_{i+1}$ that is not "matched" by any "fact" of $\bar
  c$. Consider the "atom" $A$ of $S_{i+1}$ "matching" the last element of $\bar
  c$. If instead $\bar c$ is empty, choose $A$ as an arbitrary "atom" of
  $S_{i+1}$.

  Since $A$ and $C$ are both in $S_{i+1}$, which is "stable", there exists a
  "$\Gamma$-derivation" $\sigma$ from $A$ to $C$. (Notice that $\sigma$ may
  contain "atoms" outside $S_{i+1}$.)  Consider the first "atom" $B$ of $\sigma$
  which is in $S_{i+1}$ and which is not "matched" by any "fact" of $\bar
  c$. The following depiction may help to see the situation:
  
  \begin{center}
   \includegraphics[width=.8\textwidth]{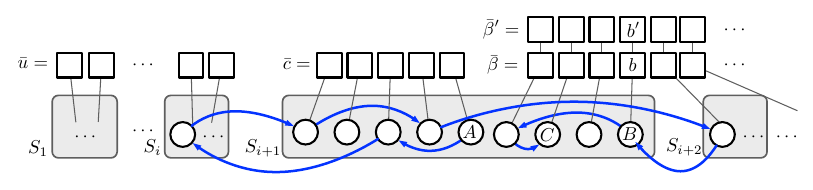}
  \end{center}
  
  In the picture directed edges connecting "atoms" of the query represent the successor relation in the 
  "$\Gamma$-derivation" from $A$ to $C$.
  
  Let $b$ be the "fact" of $\bar \beta$ "matching" $B$ and $b' \simblock b$ be
  the corresponding "fact" in $\bar \beta'$.  We show that
  \begin{align}
  \text{$b \not\in \bar p$. \tag{h4}\label{hyp:h34}}
  \end{align}
  Suppose $b$ is in $\bar p$. By construction, $b$ is not in $\bar c$, thus it
  must occur before $\bar c$ in $\bar p$ and hence the suffix $\bar b$ of $\bar
  p$ starting with $b$ strictly contains
  $\bar c$. By "connectedness" of $\bar p$ with respect to $\midi r {i+1}$, as $b
  \in \bbeta$ is part of the
  solution $\bar u \bar c \bar \beta$, we must have
  $r' \models q(\bar u \bar b \bar \gamma)$ for some $\bar \gamma$.  This
  contradicts the maximality of $\bar c$ imposed by \ref{suf}, thus proving
  that \eqref{hyp:h34} holds.  Note that this also implies $b'\not\in \bar p$,
  otherwise if $b' \in \bar p$, we have that $b'\simblock b$ are both in $r'$,
  thus $b = b' \in \bar p$, contradicting~\eqref{hyp:h34}.

  Assign $p_l = b'$, so we have $\bar p' = \bar p \cdot p_l$ and let
  $ r_l = r'[b\to b']$.  (To avoid many subscripts, let $r_l = s'$).  Observe
  that
  \begin{align}
  \text{$s'$ contains $\bar p'$ and $\midi r {i+1}$.}  \tag{h5}\label{hyp:contr:h4}
  \end{align}
  In fact $s'$ contains $\bar p$, as observed earlier, and $s'$ contains $b'$
  by construction; moreover $s'$ contains $\midi{r'}{i + 1}$ which contains
  $\midi r {i+1}$ by \ref{suf}.
  We now show that $\bar p'$ and $s'$ have all the desired properties.
  
  \proofcase{\ref{pi}} By construction $b'$ is "$(i+1)$-compatible".
  
  \proofcase{\ref{pd}} The elements of $\bar p'$ are pairwise distinct, as $b' \not\in \bar p$.
  
  \proofcase{\ref{pc}} By our choice of $b$ we show that $\bar p'$ is connected
  with respect to $\midi{s'}{i+1}$. Without loss of generality assume that
  $\bar p'$ has at least size 2 (otherwise it is trivially
  connected). Therefore, $\bar p$ is not empty. Since $\midi{s'}{i+1}$ contains
  $\midi r {i+1}$ by \eqref{hyp:contr:h4}, the connectedness property of
  $\bar p$ with respect to $\midi r {i+1}$ implies that for every "repair"
  containing $\bar p'$ and $\midi {s'} {i+1}$ and for every pair of consecutive
  facts $a\, b$ in $\bar p$, every solution in $s'$ containing $a$ also
  contains $b$.

  It remains to show the same property for the last "fact" $a$ of $\bar
  p$. Consider a "repair" $t$ containing $\bar p'$ and $\midi{s'}{i+1}$ and
  suppose $t \models q(\bar \gamma a \bar \delta)$ for some $\bar\gamma$ and
  $\bar\delta$.  We have to show $b' \in \bar\gamma\bar\delta$.
  Let $\sigma_{AB}$ be the prefix of the "$\Gamma$-derivation" $\sigma$ going
  from $A$ to $B$ in the "$\Gamma$-derivation" from $A$ to $C$.  (Notice that, since $\bar p$ is not empty we have $A \neq B$.)
  By property \ref{ru} of $\bar p$, since $\bar p$ is not empty, $a$ is the
  last "fact" in $\bar c$.  Recall that by~(\ref{hyp:contr:h12})
  $r' \models q(\bar u \bar c \bar \beta)$; thus in this solution the "atom" $A$
  is "matched" by $a$.  So we can apply \Cref{lem-determinacy} to $r'$ and $t$
  with solutions $(\bar u \bar c \bar \beta)$ and $(\bar \gamma a \bar \delta)$
  respectively, and "$\Gamma$-derivation" $\vars(A)~\sigma_{AB}$. The hypotheses of
  \Cref{lem-determinacy} are satisfied since:
 \begin{itemize}
 \item in both solutions $A$ is "matched" by $a$;
 \item  by construction of $B$, for each "atom" $D$  strictly preceding $B$ in $\sigma_{AB}$, the "fact" "matching" $D$ in $(\bar u \bar c \bar \beta)$ is either in $\bar c$ or in $\midi{r'}{i+1}$, both contained in $t$ (in fact $t \supseteq \bar p' \supseteq \bar p \supseteq \bar c$ and $t \supseteq \midi{s'}{i+1} \supseteq \midi{r'}{i+1}$).
 \end{itemize}
 
 We conclude, by \Cref{lem-determinacy}, that the "facts" "matching" $B$
 in the two solutions are "equivalent", "ie", the "fact" "matching" $B$ in
 $(\bar \gamma a \bar \delta)$ is "equivalent" to $b$ (which is the "fact" "matching"
 $B$ in $\bar u \bar c \bar \beta$).

 In $t$ the unique "fact" "equivalent" to $b$ is $b'$ (since
 $b' \in \bar p' \subseteq t$), thus the "fact" "matching" $B$ in
 $(\bar \gamma a \bar \delta)$ is $b'$.  We have thus proved that any solution
 in $t$ containing the last "fact" $a$ of $p$ also contains $b'$.

 \proofcase{\ref{ru}} The following claim, together with "strong
 $i$-minimality" of $r'$ and $r'\models q(\bar u b \bar\gamma)$ for some
 $\bar \gamma$, shows that
  \begin{enumerate}[(I)]
  	\item $s'$ is also
  "strong $i$-minimal", 
  \item $s' \models \ql{i}(\bar u)$, and 
  \item \label{it:bprime-in-sol}
  $s' \models q(\bar u b' \bar \delta)$ for some $\bdelta$.
  \end{enumerate}
  
  \begin{clm}\label{claim-strong-delta}
  Assume $\HPtaui \tau i$. Let $s$ be a "strong $i$-minimal" "repair" such that
  $s\models q(\bar \alpha a \bar \beta)$ where $\bar \alpha$ "matches"
  $S_{\leq i}$ and $a$ is "$(i+1)$-compatible".  
  Then for any $a' \simblock a$ we have that $s' = s[ a \mapsto a']$ is
  "strong $i$-minimal" and $s'\models q(\bar \alpha a'\bar \delta)$ for some $\bar\delta$.
\end{clm}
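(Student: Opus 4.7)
The plan is to apply $\HPtaui \tau i$—or more precisely, its instance $\HPtauA \tau A$ for the atom $A$ of $S_{i+1}$ matched by $a$ (and by $a' \simblock a$)—to the two repairs $s$ and $s' = s[a \to a']$. The hypothesis of $\HPtauA \tau A$ requires not only the given solution $s \models q(\bar\alpha a \bar\beta)$, but also some solution in $s'$ that uses $a'$. Producing such a solution will be the main obstacle; once it is available, both the existence of a solution $s' \models q(\bar\alpha a' \bar\delta)$ and the strong $i$-minimality of $s'$ will follow by invoking the conclusion of $\HPtauA \tau A$ in both directions.

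First I would establish the witness in $s'$ by contradiction. Since $s$ and $s'$ differ only by the swap $a \leftrightarrow a'$, the solutions of $q$ in $s'$ not involving $a'$ are exactly the solutions in $s$ not involving $a$. If no solution of $q$ in $s'$ used $a'$, then $q(s') \subseteq q(s)$ and therefore $\ql{j}(s') \subseteq \ql{j}(s)$ for every $j$. By $i$-minimality of $s$ this forces $\ql i(s') = \ql i(s)$. On the other hand, the $S_{\leq i+1}$-projection of the solution $\bar\alpha a \bar\beta$ contains $a$ and therefore belongs to $\ql{i+1}(s) \setminus \ql{i+1}(s')$, giving $|\ql{i+1}(s')| < |\ql{i+1}(s)|$. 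This contradicts the strong $i$-minimality of $s$. So some solution $s' \models q(\bar\alpha^* a' \bar\beta^*)$ with $\bar\alpha^*$ matching $S_{\leq i}$ must exist.

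With that witness in hand, I would invoke $\HPtauA \tau A$ on the two solutions $\bar\alpha a \bar\beta$ in $s$ and $\bar\alpha^* a' \bar\beta^*$ in $s' = s[a \to a']$ to obtain some $\bar\delta$ with $s' \models q(\bar\alpha a' \bar\delta)$, settling the first part of the claim. For strong $i$-minimality of $s'$, the same property applied symmetrically—with the roles of $s$/$s'$ and $a$/$a'$ exchanged, using the original $\bar\alpha a \bar\beta$ as the reverse-direction witness—shows that the set $X := \{\bar\gamma : s \models q(\bar\gamma a \bar\eta) \text{ for some } \bar\eta\}$ coincides with $X' := \{\bar\gamma : s' \models q(\bar\gamma a' \bar\eta) \text{ for some } \bar\eta\}$. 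Combining $X = X'$ with the bijection between solutions of $s$ and $s'$ avoiding $a$ and $a'$ respectively, and with the uniqueness—guaranteed by \Cref{cor-determinacy} applied to the stable set $S_{i+1}$—of the $S_{i+1}$-part of any solution containing $a$ in $s$ (resp.\ $a'$ in $s'$), one obtains $\ql i(s) = \ql i(s')$ and $|\ql{i+1}(s)| = |\ql{i+1}(s')|$. Any repair witnessing a failure of strong $i$-minimality for $s'$ would therefore, through these equalities, witness the same failure for $s$, which is impossible by hypothesis. The only genuinely delicate point is the first step: ordinary $i$-minimality does not suffice to guarantee an $a'$-witnessing solution in $s'$, and the extra cardinality refinement provided by strong $i$-minimality is essential there.
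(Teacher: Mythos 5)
Your proof is correct and follows essentially the same route as the paper's: the $a'$-witnessing solution in $s'$ is obtained by contradiction with strong $i$-minimality (the paper phrases the same cardinality argument via $\ql{i+1}(s')\subsetneq\ql{i+1}(s)$), the solution $s'\models q(\bar\alpha a'\bar\delta)$ then follows from $\HPtaui \tau i$, and strong $i$-minimality of $s'$ is transferred via a solution-correspondence built from the two directions of $\HPtauA \tau A$ together with \Cref{cor-determinacy} applied to $S_{i+1}$. The paper packages this last step as an explicit bijection from $\ql{i+1}(s)$ to $\ql{i+1}(s')$ preserving $S_{\leq i}$-projections, which is exactly the content of your $X=X'$ argument.
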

\begin{proof}
  Notice that $s$ and $s'$ agree on all their solutions to $q$ that contain
  neither $a$ nor $a'$. Hence, if $a'$ is in no solution for $q$ in $s'$, we
  have $\ql{i+1}(s') \subsetneq \ql{i+1}(s)$ and therefore
  $|\ql{i+1}(s')| < |\ql{i+1}(s)|$ and $\ql{i}(s') \subseteq \ql{i}(s)$.  The
  latter implies $\ql{i}(s') = \ql{i}(s)$ by "$i$-minimality" of $s$.  This
  contradicts "strong $i$-minimality" of $s$.  Hence
  $s'\models q(\bar \alpha' a' \bar \beta')$, for some $\bar \alpha'$,
  $\bar \beta'$. By $\HPtaui \tau i$ this implies that
  $s'\models q(\bar \alpha a' \bar \delta)$ for some $\bar\delta$.

  It remains to prove that $s'$ is "strong $i$-minimal".  To this end, we
  exhibit a bijection from $\ql{i+1}(s)$ to $\ql{i+1}(s')$ preserving the
  $(S_1 \cup \dotsb \cup S_i)$-projection of the solutions. The existence of
  such bijection implies $\ql{i}(s) = \ql{i}(s')$ and
  $|\ql{i+1}(s)| = |\ql{i+1}(s')|$, thus showing that $s'$ is "strong
  $i$-minimal", provided $s$ is too.
  
  The mapping is the identity for the solutions that do not contain the "fact"
  $a$.

  It remains to map bijectively solutions in $s$ containing the "fact" $a$ to
  solutions in $s'$ containing the "fact" $a'$.  Let $\bar a$ (resp. $\bar a'$)
  be the "facts" "matching" $S_{i+1}$ in $\bar \alpha a \bar \beta$
  (resp. $\bar \alpha' a' \bar \beta')$.  By \Cref{cor-determinacy} in
  all solutions of $s$ containing $a$, $S_{i+1}$ is "matched" by $\bar
  a$. Similarly in all solutions of $s'$ containing $a'$, $S_{i+1}$ is
  "matched" by $\bar a'$.

  Moreover, by $\HPtaui \tau i$ for each $\bar u$,
  $\bar u\bar a \in q_{\leq i+1}(s)$ if{f}
  $\bar u\bar a' \in q_{\leq i+1}(s')$. Hence mapping each
  $\bar u \bar a \in \ql{i+1}(s)$ to $\bar u\bar a' \in \ql{i+1}(s')$ forms a
  bijection.
\end{proof}

\proofcase{\ref{suf}} Let $\bar e$ be the maximal suffix of $\bar p'$ such
that, for a "strong $i$-minimal" "repair" $t$ containing $\bar p'$ and
$\midi{s'}{i+1}$ we have $t \models q(\bar u \bar e \bar\delta)$ for some
$\bar \delta$.  Since $s' \models q(\bar u b' \bar \delta')$ for some
$\bar \delta'$ by Item~\ref{it:bprime-in-sol} above, $\bar e$ cannot be
empty.  Then let $\bar e = \bar d b'$, where $\bar d$ is a suffix of $\bar p$.
  
Since $\midi{s'}{i+1}$ contains $\midi{r}{i+1}$ by \eqref{hyp:contr:h4}, in
particular $t$ is a "strong $i$-minimal" "repair" containing $\bar p$ and
$\midi{r}{i+1}$.  Then, by maximality of $\bar c$, $\bar d$ must be a suffix of
$\bar c$, implying that $\bar u \bar d b'$ is a subset of $\bu \bc
\bbeta'$. Since by definition $\bu\bc \bar \beta'$ does not contain any
"$k$-set" in $\Delta_k(q,D)$, we have $\bar u \bar d b' \not\in \Delta_k(q,D)$ as
needed.
 
 This completes the proof of \Cref{induc-prop}.
\end{proof}


\section{Lower bounds}
\label{section-sjf-lower-bound}
\label{sjf-fails-cqk}


Now we turn to the lower bounds. In particular, we consider the
"self-join-free" queries $q$ falsifying $\HP$  and show that for such queries, $\certain(q)$ is \conp-hard. 
Towards proving this we also obtain other results  of independent interest. The results can be
summarized as follows.

\begin{enumerate}
  \item We first consider the query with "self-join"
    \AP
  \[
    \intro*\qFour \defeq R(\underline{x}\, yz) \land R (\underline{z}\, xy).
  \]

  We first show, in \Cref{thm-X+lowerBound}, via a combinatorial argument, that there
  is no $k$ such that $\certain(\qFour)$ can be computed using $\Cqk(\qFour)$
  ("ie", for every $k$ there exists a database $D$ such that
  $D\models \certain(q)$ but $\Cqk(\qFour)$ ouputs a false negative for $D$).
  
  On the other hand we prove, in \Cref{thm:q4-sbm-complete}, that $\certain(\qFour)$ is
  equivalent (modulo \logspace reductions) to a "matching problem@SBM problem"
  whose precise complexity is a long-standing open problem. This confirms the
  difficulty to obtain a complete complexity classification of $\certain(q)$
  for "self-join" queries $q$ even on two atoms.\footnote{The dichotomy for two-atom queries has recently been proved
  in \cite{ourPODS}. Remarkably, the polynomial cases are solved via a combination of
  the $\Cqk$ algorithm and the "bipartite matching algorithm@SBM problem".}

  \item In \Cref{sec:caseq5} we show that for the
  "self-join-free query" 
  \AP
  \[
    \intro*\qFive \defeq R_1(\underline{x}\, y) \land S_1(\underline{y z}\, x),
  \]
  $\certain(\qFive)$ 
  cannot be computed using $\Cqk(\qFive)$ for any choice of $k$. This is shown
  by reducing the case of $\qFive$ to the case of $\qFour$.
\item   In \Cref{sec:koutriswijsen-hard} we describe some of the techniques developed
  in~\cite{DBLP:journals/tods/KoutrisW17} and show that they imply that for
  "self-join-free" queries $q$, $\certain(q)$ is
  \conp-hard when $\HP$ fails for $q$. In particular, assuming $\ptime\neq\conp$, this
  implies that when $q\not\models\HP$, $\certain(q)$ cannot be solved using $\Cqk(q)$,
  for any choice of $k$.

\item Finally, in \Cref{sec:caseanyquery} we prove, without any complexity theoretic
  hypothesis, that for any "self-join-free query" $q$ such that
  $q\not\models\HP$, that $\certain(q)$ 
  cannot be computed using $\Cqk(q)$ for any choice of $k$ by 
  reducing the case of $\qFive$ to such query $q$.
\end{enumerate}

%


\newcommand{\XX}{\mathbb{X}}

\subsection{The case of \texorpdfstring{$\qFour$}{q4}}
\label{sec:caseq4}
\AP
Note that the query $\qFour$ is not a "self-join-free query". In particular, we
can have $D \models \qFour(aa)$ for some fact $a\in D$. We call such "solutions" ""self-loops"".

To prove that $\Cqk(\qFour)$ does not compute $\certain(\qFour)$ for any choice
of $k$, we actually prove a stronger statement: not even an extension
``$\Cqkp(\qFour)$'' of $\Cqk(\qFour)$ can capture $\certain(\qFour)$. This
stronger form will be needed later for reducing $\qFive$ to $\qFour$.

Let us first explain the extension, which is tailored to two-"atom" queries
(thus, strictly speaking, it `extends' $\Cqk(\qFour)$ only in this
context). Recall that the definition of $\Cqk(q)$ iteratively adds a "$k$-set"
$S$ to $\Deltak(q,D)$ if there exists a "block" $B$ of $D$ such that for every
"fact" $u\in B$ there exists $S' \subseteq S\cup \{ u\}$ such that
$S' \in \Deltak(q,D)$.  \AP This rule is henceforth called the \AP""first
derivation rule"".

\AP The algorithm $\intro*\Cqkp(q)$ initializes the set $\intro*\Deltakp(q,D)$
of "$k$-sets" as in $\Cqk(q)$, and inherits the aforementioned derivation rule
of $\Cqk(q)$, but it also contains the following \AP""second derivation rule"".
A "$k$-set" $S$ is
also added to $\Deltakp(q,D)$ if there exists a "fact" $a$ of $D$ which 
is not a "self-loop" and for
every "fact" $u \in \set{ b \in D : D \models q(ab) \lor q(ba) \lor a=b }$
there exists $S' \subseteq S\cup \set{u}$ such that $S' \in \Deltakp(q,D)$. 
As before, we define $\Cqkp(q)$ to accept if the empty set is eventually derived and denote it by $D\in \Cqkp(q)$.

Note that, like $\Cqk(q)$, $\Cqkp(q)$ also runs in polynomial time
{but it no
  longer satisfies the inductive property \eqref{eq:Delta:invariant} and may
  therefore give false positive answers. However, since the "first derivation rule" is present in both $\Cqk(q)$ and $\Cqkp(q)$,
  whenever $\emptyset \in \Cqk(q)$ we also have $\emptyset \in \Cqkp(q)$, "ie", if $D\in \Cqkp(q)$ then $D\in \Cqk(q)$.}  
  We  now formally  state the first result of this subsection:
\begin{thm}
\label{thm-X+lowerBound}
  \AP
  For every choice of $k$, there exists a database $D$ such that $D\models \certain(\qFour)$ but $D\not\in \Cqkp(\qFour)$ (and hence $D\not\in\Cqk(\qFour)$).
\end{thm}

Before proving \Cref{thm-X+lowerBound} we discuss some special properties of the query
$\qFour$. Note that $\qFour$ is a "self-join" two-"atom" query. 
\AP
We define the ""solution graph"" of $D$, denoted by $\intro*\solgraph$, to be an undirected graph whose vertices are the "facts" of $D$ and it contains an edge $\set{a,b}$ whenever $D \models \qFour(ab)$ holds. In the context of "solution graphs", a ""triangle"" is just a clique on three vertices (without self-loops).

\begin{remark}\label{rk:properties-solgraph-q4}
We first state some key properties of $\qFour$ for any "database" $D$ and "facts" $a,b,c\in D$, which are easy to verify:
\begin{enumerate}
  \item if $D \models \qFour(ab) \land \qFour(ac)$ then $b= c$,
  \item if $D \models \qFour(ba) \land \qFour(ca)$ then $b= c$,
  \item if $D\models \qFour(ab) \land \qFour(bc)$ then $D\models \qFour(ca)$,
  \itemAP hence, every connected component of $\solgraph$ is either a "triangle", a 2-clique (without self-loops), or a single vertex (with or without a self-loop).\label{shape-solgraph-q4}
\end{enumerate}
\end{remark}

\knowledgenewrobustcmd{\Dn}[1][n]{\cmdkl{D_{#1}}}
Towards proving \Cref{thm-X+lowerBound}, for every $n \geq 4$ we exhibit a "database"
$\Dn$ such that
\begin{itemize}
  \item $\Dn \models \certain(\qFour)$ (\Cref{prop-Dk in certain}), and
  \item $\Dn[n] \not\in \Cqkp[n-2](\qFour)$ (\Cref{prop-Dk not in Delta(k-2)}).
\end{itemize}
Intuitively, the "database" $\Dn$ has two kinds of "blocks". The first kind has $n$ "blocks" denoted by
$B_1,\cdots,B_n$ where each $B_i$ consists of $n-1$ "facts" denoted
$b_i^1,\cdots,b_i^{n-1}$. The second kind has $(n-1)(n-3)$ "blocks" denoted by $E^j_l$ for every $1\leq j \leq n-1$ and $1\leq l \leq n-3$, where each
$E_l^j$ has two "facts" denoted by $u_l^j$ and $v_l^j$.  The "solution graph" of
$\Dn$ is depicted in \Cref{fig:counterexample}.

  \begin{figure}
    \centering
    \includegraphics[width=\textwidth]{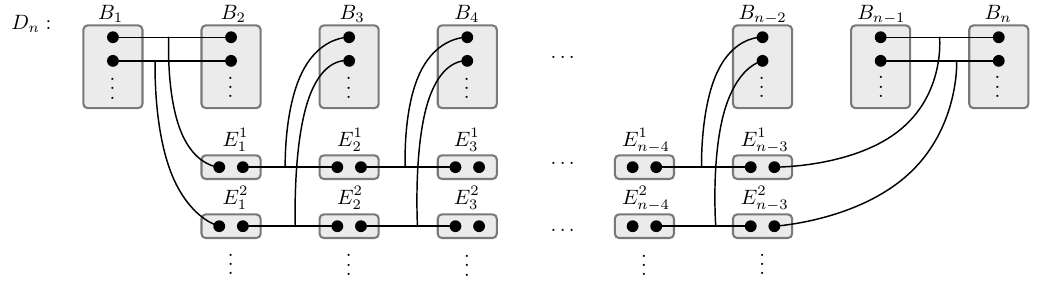}
    \caption{"Solution graph" for "database" $\Dn$. Black dots denote "facts",
      rectangles denote "blocks", and three-pointed edges denote "triangles" ("ie", 3-cliques) in the
      "solution graph" of $\Dn$.  There are $n-1$ "facts" in each "block" $B_i$ and two "facts" in each "block" $E^j_k$.}
    \label{fig:counterexample}
  \end{figure}

\AP
\paragraph{Definition of $\intro*\Dn$}
We now formally define the "facts" of the "database" $\Dn$. Fix some $n\ge 4$.
Define $n$ "blocks" of the form
$B_1,\cdots,B_n$ in $\Dn$
where each $B_i$ consists of $n-1$ "facts" denoted
$b_i^1,\cdots,b_i^{n-1}$ and $(n-1)(n-3)$ "blocks" of the form
$E^j_l$ for $1\leq j \leq n-1, 1\leq l \leq n-3$, where each $E_l^j$ have two "facts" denoted by
$u_l^j$ and $v_l^j$. All "facts" in $\Dn$ are $R$-"facts", and for this reason we shall henceforth drop the $R$ from "facts" $R(\bar c)$ and simply write $\bar c$.

Let $a_1,\cdots,a_n$ and  $e_i^j$ where $1\leq j \leq n-1, 1\leq i \leq n-3$ be fresh active domain elements.
The "facts" of $\Dn$ are defined as follows.
\begin{itemize}
\item $b_1^j=(\underline{a_1}~~ a_2e^j_1)$,
  $b_2^j = (\underline{a_2}~~e^j_1a_1)$;
\item for every $3 \le i \le n-2$:~~~
  $b^j_i = (\underline{a_i}~~e^j_{i-2}e^j_{i-1})$;
\item $b^j_{n-1} = (\underline{a_{n-1}}~~ e^j_{n-3}a_{n})$ and
$b^j_{n} = (\underline{a_{n}} ~~ a_{n-1}e^j_{n-3})$;
\item $u_1^j=(\underline{e^j_1}~~a_1a_2)$ and
  $v_{n-3}^j=(\underline{e^j_{n-3}}~~a_{n}a_{n-1})$;
\item for every $1\le l < n-3$:~~ $v_l^j=(\underline{e^j_l}~~e^j_{l+1}a_{l+2})$
  and $u_{l+1}^j=(\underline{e^j_{l+1}}~~a_{l+2}e^j_{l})$.
\end{itemize}

As shown in \Cref{fig:counterexample:triangles}, it can be verified that for every $1\le j \le n-1$ we have the "triangles" $\{b^j_1,b^j_2,u^j_1\}$ and
$\{b^j_{n-1},b^j_{n}, v^j_{n-3}\}$ and for every $1\le l < n-3$ we have a "triangle"
$\{v^j_l,u^j_{l+1},b^j_{l+2}\}$. Thus we get the "solution graph" described in Figure~\ref{fig:counterexample}. Notice that $\Dn$ is defined in such a way that every "fact" of $\Dn$ is part of a
"triangle". 

\begin{figure}
  \centering
  \includegraphics[width=\textwidth]{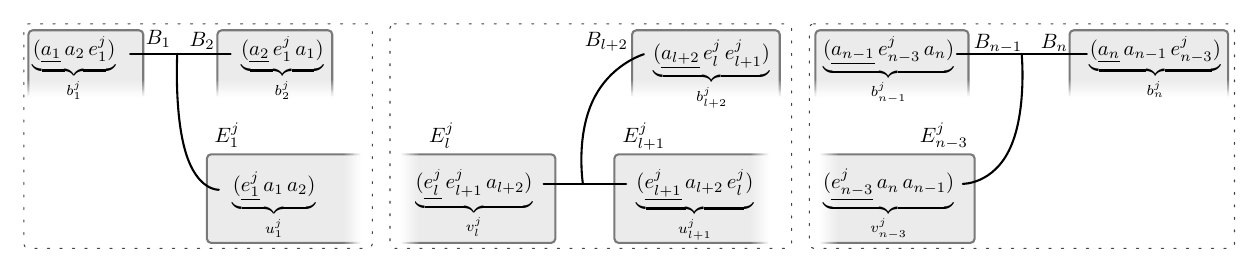}
  \caption{"Triangles" in the "database" $\Dn$.}
  \label{fig:counterexample:triangles}
\end{figure}

\begin{prop}
\label{prop-Dk in certain}
For every $n\ge 4$, $\Dn \models \certain(\qFour)$.
\end{prop}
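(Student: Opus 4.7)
The plan is a pigeonhole argument resting on a single combinatorial identity. By construction (as explicitly listed just before the proposition and illustrated in \Cref{fig:counterexample:triangles}), the "solution graph" $\solgraph[\Dn]$ decomposes as a disjoint union of "triangles", with every "fact" of $\Dn$ belonging to exactly one such "triangle". So the first step is just bookkeeping: verify that the listed "triangles" exhaust all edges and cover each "fact" once.

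Next I would count. The total number of "facts" of $\Dn$ is $n(n-1)$ (from the blocks $B_1,\dotsc,B_n$) plus $2(n-1)(n-3)$ (from the blocks $E_l^j$), which equals $3(n-1)(n-2)$. Dividing by three gives that $\solgraph[\Dn]$ contains exactly $(n-1)(n-2)$ "triangles". On the other hand, every "repair" $r$ of $\Dn$ selects one "fact" from each "block", so
\[
  |r| \;=\; n + (n-1)(n-3) \;=\; n^2 - 3n + 3 \;=\; (n-1)(n-2) + 1,
\]
i.e., any "repair" has exactly one more "fact" than there are "triangles".

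By the pigeonhole principle, any "repair" $r$ therefore contains two distinct "facts" $a$ and $b$ that lie in the same "triangle" $T$ of $\solgraph[\Dn]$. Since by definition $T$ is a $3$-clique without "self-loops", the pair $\{a,b\}$ is an edge of $\solgraph[\Dn]$, which by the definition of $\solgraph$ means $\Dn \models \qFour(ab)$ (or $\qFour(ba)$). Since $r\supseteq\{a,b\}$, it follows that $r\models \qFour$. As $r$ was arbitrary, $\Dn \models \certain(\qFour)$.

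No serious obstacle is anticipated: the entire ingenuity sits in the construction of $\Dn$, which is tuned so that $|r| - |\text{triangles}| = 1$. The only care needed is the initial verification that the listed "triangles" really partition the "facts" into triangle-components of $\solgraph[\Dn]$, a check that follows directly from the properties of $\qFour$ spelled out in \Cref{rk:properties-solgraph-q4} and the explicit tuple definitions.
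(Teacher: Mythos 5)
Your proof is correct, but it packages the pigeonhole differently from the paper. The paper's proof applies the pigeonhole principle to the superscripts: each of the $n$ blocks $B_i$ contains $n-1$ facts $b_i^1,\dotsc,b_i^{n-1}$, so every repair contains $b_i^j$ and $b_{i'}^j$ for some $i\neq i'$ and a common $j$; it then asserts, with the verification delegated to \Cref{fig:counterexample}, that any repair containing two such facts satisfies $\qFour$ --- a step that itself hides a second counting argument along the $j$-th chain of triangles. You instead run a single global count: the listed triples are pairwise vertex-disjoint 3-cliques covering all $3(n-1)(n-2)$ facts, hence by \Cref{rk:properties-solgraph-q4} they are exactly the connected components of $\solgraph$ and number $(n-1)(n-2)$; since a repair has $n+(n-1)(n-3)=(n-1)(n-2)+1$ facts, two of them land in a common triangle and form a solution. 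Your version is shorter and eliminates the unproved intermediate claim, at the price of leaning on the exact numerology of $\Dn$ (that $|r|$ exceeds the triangle count by exactly one); the paper's per-$j$ decomposition is less slick here but mirrors the machinery ($\back(j,l)$, $\front(j,l)$, obstruction sets) reused in the harder direction, \Cref{prop-Dk not in Delta(k-2)}. The one step you must not wave away is the one you flag yourself: checking from the explicit tuple definitions that the listed triples really are 3-cliques of $\solgraph$ and that every fact occurs in exactly one of them.
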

\begin{proof}
Fix some $n\ge 4$ and consider the "database" $\Dn$. Let $B_1,\ldots B_n$ be the "blocks" of $\Dn$. Now for every $1\le j \le n-1$ if a "repair" $r$
  contains two indices $i,i'$ such that $i\ne i'$ and $b^j_i, b^j_{i'}\in r$ then it is easy to
  verify that $r\models q$ (\cf~Figure~\ref{fig:counterexample}).
  
  Now since each $B_i$ contains exactly $n-1$ "facts" of the form
  $b^1_i,\ldots ,b^{n-1}_i$, by the pigeonhole principle, every "repair" $r$ must
  contain two $i,i'$ such that $i\ne i'$ and $b^j_i, b^j_{i'}\in r$. Hence, every "repair" $r$ of $\Dn$ verifies $r\models q$.
\end{proof}

It remains to prove the following:
\begin{prop}
\label{prop-Dk not in Delta(k-2)}
Let $k\geq 2$. $\Dn[k+2] \not\in \Cqkp(\qFour)$.
\end{prop}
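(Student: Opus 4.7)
The plan is to prove by induction on the iterations of $\Cqkp(\qFour)$ an invariant on the sets in $\Deltakp(\qFour, \Dn[k+2])$ that precludes $\emptyset$ from ever being derived. The invariant states: every $S \in \Deltakp(\qFour, \Dn[k+2])$ contains at least two distinct facts from a common column of $\Dn[k+2]$, where a \emph{column $j$} is the collection of facts carrying superscript $j$, for $j \in \{1, \dotsc, n-1\}$ with $n = k+2$. Since $\emptyset$ contains zero facts, the invariant immediately implies $\emptyset \notin \Deltakp(\qFour, \Dn[k+2])$, i.e.\ $\Dn[k+2] \not\models \Cqkp(\qFour)$.

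For the base case, the initial elements of $\Deltakp$ are two-element sets $\{a,b\}$ with $\Dn[k+2] \models \qFour(ab) \lor \qFour(ba)$, hence forming an edge of $\solgraph(\Dn[k+2])$; by \Cref{rk:properties-solgraph-q4} and the construction of $\Dn[k+2]$, each such edge lies within a single triangle, and by construction every triangle lies in a single column. For the inductive step, consider a new set $S$ derived via rule 1 with block $B$. For each $u \in B$ there is $S'_u \subseteq S \cup \{u\}$ in $\Deltakp$ satisfying the invariant, i.e.\ containing two facts in some column $c_u$; these either both lie in $S$ (giving the invariant for $S$ directly), or they consist of $u$ itself together with a ``companion'' $v_u \in S$ in the column of $u$. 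For $B = B_i$, which consists of exactly one fact per column across all $n - 1 = k + 1$ columns, having the companion case for every $u \in B_i$ would force $k+1$ distinct companions $v_u$ across pairwise distinct columns, contradicting $|S| \leq k$; hence some $u$ yields the first case.

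The main obstacle is the case of rule 1 applied with the small block $E_l^j$ and rule 2 applied with the triangle $N[a]$ (where $a$ is a non-self-loop fact). In both cases, the block or triangle is contained in a single column, so the counting argument fails precisely when the companions $v_u$ for the different choices of $u$ all collapse to the same single fact $v \in S$. To rule this scenario out, the plan is to prove in tandem a secondary structural claim on $\Deltakp$: whenever two distinct facts $\{f, g\}$ appear together in a set of $\Deltakp$ and lie in the same column $j$, they must belong to the same triangle or to two adjacent triangles $T_l, T_{l+1}$ of that column (linked by the block $E_l^j$). Applied to the pairs $\{u_l^j, v\}, \{v_l^j, v\}$ forced by a hypothetical common companion $v$ (and analogously to the pairs $\{u, v\}$ for all $u \in N[a]$ in rule 2), this constrains $v$ to lie simultaneously in disjoint triangles (for example in both $T_{l-1}$ and $T_{l+2}$ for rule 1 with $E_l^j$, or to be ruled out altogether when the relevant triangle is $T_1$ or $T_{n-2}$ because no adjacent triangle exists on one side). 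Since this is impossible, the companions $v_u$ must differ for at least two choices of $u$, producing two distinct facts of $S$ in the column of the block or triangle and completing the induction.
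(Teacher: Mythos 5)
Your overall strategy---an invariant on the members of $\Deltakp(\qFour,\Dn[k+2])$, proved by induction on the derivation and contradicting $\emptyset\in\Deltakp(\qFour,\Dn[k+2])$---has the same shape as the paper's proof, and your counting argument for the blocks $B_i$ (one fact per superscript, $k+1$ superscripts versus $|S|\le k$) is sound. The gap is exactly in the two cases you flag as the main obstacle, and the secondary structural claim you introduce to close them fails for two independent reasons. First, the claim is false: $\Deltakp$ is closed under supersets of size at most $k$ (if $S_0\in\Deltakp$ and $S_0\subseteq T$ with $|T|\le k$, then rule~1 applied to \emph{any} block $B$, taking $S'=S_0\subseteq T\cup\{a\}$ for every $a\in B$, derives $T$). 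So for $k\ge 3$ one may pad an initial edge $\{x,y\}$ of some column $j$ with an arbitrary third fact $f$ of the same column lying in a distant triangle; the resulting set is in $\Deltakp$ and its pair $\{x,f\}$ violates the ``same or adjacent triangles'' requirement. Second, even granting the claim, the intended contradiction does not follow. Writing $T_1,\dots,T_{n-2}$ for the triangles of column $j$ in order, we have $u^j_l\in T_l$ and $v^j_l\in T_{l+1}$, so a common companion $v$ is only constrained to lie in $(T_{l-1}\cup T_l\cup T_{l+1})\cap(T_l\cup T_{l+1}\cup T_{l+2})=T_l\cup T_{l+1}$, which is not empty. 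Concretely $v=b^j_{l+2}\in T_{l+1}$ satisfies both constraints, and this is not a hypothetical scenario: $\{v^j_l,b^j_{l+2}\}$ is an initial edge, so any $S$ containing $b^j_{l+2}$ already has $S\cup\{v^j_l\}\in\Deltakp$ with $b^j_{l+2}$ as the companion. The same collapse occurs for rule~2 applied to a triangle.

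The missing ingredient is directional. What actually prevents $E^j_l$ (or a triangle) from firing is not that its facts share a column, but that $u^j_l$ and $v^j_l$ point to \emph{opposite sides} of the chain of triangles in column $j$, so no single previously derived fact of $S$ can serve both. The paper's proof records precisely this: it defines $k$-obstruction sets, which are partial repairs whose column-$j$ part is contained in some $\back(j,l)\cup\front(j,l)$ (a consistent ``cut'' of the chain) and whose $B$-facts use pairwise distinct superscripts, and shows by induction that no such set (nor any subset of one) is ever derived, because every candidate witnessing block or triangle contains a fact that extends the obstruction set consistently. Your column-counting invariant discards the side information, and that is exactly where the induction breaks; repairing it would amount to reintroducing, per column, a commitment to one of the selections $\back(j,l)\cup\front(j,l)$, i.e., to reconstructing the paper's obstruction sets.
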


%
For showing this, we first need to set up some definitions and prove some useful properties.
\AP
\knowledgenewrobustcmd{\allB}{\cmdkl{\mathbb{B}}}
When $\Dn$ is clear from the context, we denote $\intro*\allB \defeq \{B_1,\ldots B_n\}$ and, for every $1\le j \le n-1$, we denote
\knowledgenewrobustcmd{\allE}[1]{\cmdkl{\mathbb{E}^{#1}}}
\AP
$\intro*\allE{j} \defeq \{ E^j_l \mid 1\le l \le n-3\}$ 
where the $B_i$'s and $E^j_l$'s are the "blocks" of $\Dn$ defined above.

\AP
If $\XX =\{X_1, \dotsc, X_k\}$ is a set of "blocks" of $\Dn$, a set of "facts" $W = \{ w_1,\ldots w_k\}$ is called a ""partial repair"" of $\XX$
if $w_i\in X_i$ for every $1\le i \le k$. 
\knowledgenewrobustcmd{\Wprep}[1]{\cmdkl{W[}#1\cmdkl{]}}
\AP We denote by $\intro*\Wprep{\allE{j}}$ to be the set
of "facts" from $W$ in the "blocks" $\XX \cap \allE{j}$ (for $1\le j\le n-1$),
and $\reintro*\Wprep{\allB}$ to be the set of facts from $W$ in the "blocks" $\XX \cap \allB$.

Recall that for every $3\le l \le n-2$ and every $1\le j\le n-1$ we have $b^j_l \in B_l$
and the "triangle" $\{b^j_l, u^j_{l-1}, v^j_{l-2}\}$.
For each such $j$ and $l$ we define 
\knowledgenewrobustcmd{\back}{\cmdkl{U}}
\knowledgenewrobustcmd{\front}{\cmdkl{V}}
\AP
$\intro*\back(j,l) \defeq \{ u^j_k \mid 1\le k \le l-2\}$ and
$\intro*\front(j,l) \defeq \{ v^j_k \mid l-1 \le k \le n-3\}$, which are depicted in \Cref{fig:counterexample:U-V}. 

\begin{figure}
  \centering
  \includegraphics[width=\textwidth]{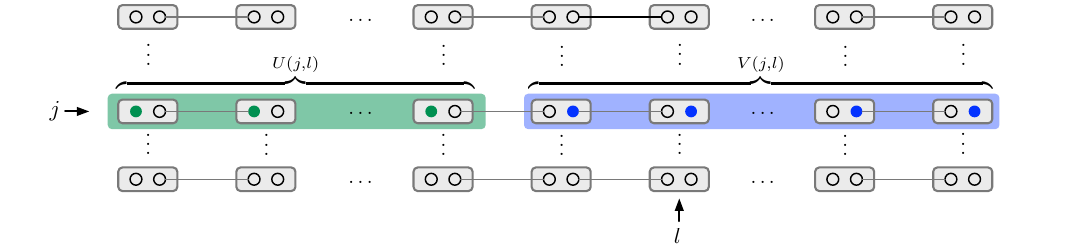}
  \caption{Depiction of $\back(j,l)$ as  green solid discs, and $\front(j,l)$ as  blue solid discs.}
  \label{fig:counterexample:U-V}
\end{figure}

Intuitively, $\back(j,l)$ and $\front(j,l)$ represent the "facts" that need to be picked in "blocks" of $\allE{j}$ if one wants to construct a "repair" of $\Dn$ containing $b_l^j$ and having no query "solutions".
In "fact" note that $\back(j,l) \cup \front(j,l)$ picks exactly one "fact" for every "block"
in $\allE{j}$. But $u^j_{l-1},~v^j_{l-2} \not\in \back(j,l) \cup
\front(j,l)$. Consequently, there are no "solutions" in the set of "facts"
$\back(j,l)\cup \front(j,l)\cup \{ b^j_l\}$ (\Cref{fig:counterexample:U-V}).

In the same spirit, recall that for every $1\le j\le n-1$, $\Dn$ contains the "triangle" $\{b^j_1, b^j_2, u^j_1\}$, so if a "repair" contains 
either $b^j_1$ or $b^j_2$ and contains no "solution" to $\qFour$, this "repair" must pick no $u^j_i$ "fact" in the "blocks" of $\allE{j}$. Thus, we define
 $\back(j,1) = \back(j,2) = \emptyset$ and
$\front(j,1) = \front(j,2) = \{ v^j_k \mid 1 \le k \le n-3\}$, so that there are no "solutions" in the set of "facts"
$\back(j,l)\cup \front(j,l)\cup \{ b^j_l\}$, for $l\in\{1,2\}$.

Dually,
$\{b^j_{n-1}, b^j_{n}, v^j_{n-3}\}$ forms a "triangle", so define
$\back(j,n) = \back(j,n-1) = \{ u^j_k \mid 1\le k \le n-3\}$ and
$\front(j,n) = \front(j,n-1) = \emptyset$. Again
there are no "solutions" in the set of "facts"
$\back(j,l)\cup \front(j,l)\cup \{ b^j_l\}$, for $l\in\{n-1,n\}$.

%
 Overall, for every $1\le l \le n$ and every $1\le j\le n-1$ we have:
 \begin{align*}
  \begin{aligned}
    &\text{$\back(j,l)\cup \front(j,l)\cup \{ b^j_l\}$ forms a "partial repair" over the
    "blocks" $\{B_j\} \cup \allE{j}$, and } \\
    &\text{$\back(j,l)\cup \front(j,l)\cup \{ b^j_l\}$ contains no "solution".} 
    \end{aligned}
 \end{align*}

 
\AP
A set of "facts" $W$ of size $k$ is called a ""$k$-obstruction set"" if $W$ is
a "partial repair" of some $\XX = \{X_1,\ldots X_k\}$ and if
$\XX \cap \allB = \{B_{i_1},B_{i_2}\ldots B_{i_l}\}$ for some $l\le k$ with
$\Wprep{\allB} = \{ b^{j_1}_{i_1}, b^{j_2}_{i_2}\ldots b^{j_l}_{i_l}\}$, then the
following conditions hold:

\begin{enumerate}
\item \label{condneq}  $j_1,j_2,\ldots j_l$ are pairwise distinct. 

{\noindent(Informally: elements of $W$ in $\allB$ are in different `rows' in \Cref{fig:counterexample}.)}

\item \label{condt} For every $1\le m \le l$ we have
  $\Wprep{\allE{j_m}} \subseteq \back(j_m,i_m) \cup \front(j_m,i_m)$.
  
{\noindent(Informally: if $W$ contains an element at row $j_m$ connected to the block $B_{i_m}$ then it must be painted green or blue in \Cref{fig:counterexample:U-V}, under the renaming $j \mapsto j_m$ and $l \mapsto i_m$.)}

\item \label{condj} For every $1\le j \le n-1$ there exists $1\le l \le n$ such that $\Wprep{\allE{j}} \subseteq \back(j,l) \cup \front(j,l)$.

{\noindent(Informally: all the elements of $W$ at a row $j$ must be painted green or blue in \Cref{fig:counterexample:U-V}, for some choice of indices.)}
\end{enumerate}

{
\begin{lemma}
\label{prop-extend-singleton-to-obst-set}
If $b$ is a "fact" in $D_{k+2}$ then there always exists a "$k$-obstruction set" that contains~$b$.
\end{lemma}
\begin{proof}
Let $B$ be the "block" such that $b\in B$.  Now if $B$ is of the form $B_i$ then let $b = b^j_i$ and we can choose the $k-1$ other blocks of the form $B_{i'}$ and pick $b^j_i$ from $B_i$ and pick $b^{j'}_{i'}$ from each chosen $B_{i'}$ such that \Cref{condneq} is satisfied which forms a "$k$-obstruction set" (this is always possible since each $B_l$ has $k+1$ facts).
 
Suppose $B$ is of the form $E^j_i$ then if $b = u^j_i$ then we can pick $\front(j,l-1) \cup \back(j,l-1) \cup \{b^j_{l-1}\}$ that forms an obstruction set for $\{B_j\} \cup \allE{j}$. If $b = v^j_i$ then we can pick $\front(j,l+1) \cup \back(j,l+1) \cup \{b^j_{l+1}\}$ that again forms an obstruction set for $\{B_j\} \cup \allE{j}$.
\end{proof}
}

\begin{lem}
\label{lemma-no solution in obstruction}
If $W$ is a "$k$-obstruction set", then there are no "solutions" to the query $\qFour$
within $W$.
\end{lem}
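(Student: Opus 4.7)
The plan is to argue that a $\qFour$-solution inside $W$ would force two distinct facts of $W$ to be vertices of a common triangle of $\solgraph[\Dn]$, and then to rule out this possibility by case analysis on the three families of triangles described right after the definition of $\Dn$. First, by the observations in \Cref{rk:properties-solgraph-q4}, every connected component of $\solgraph[\Dn]$ is a triangle, a $2$-clique or an isolated vertex; moreover the explicit listing of the triangles together with the fact that no fact of $\Dn$ has all three positions equal to the same element (since the $a_i$'s and the $e^j_l$'s are fresh distinct constants) implies that $\solgraph[\Dn]$ has no "self-loops" and that every edge of $\solgraph[\Dn]$ lies in one of the listed triangles. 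Hence it suffices to fix any such triangle $T$ and to show that $|T\cap W|\le 1$.

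For the boundary triangle $T=\{b^j_1,b^j_2,u^j_1\}$ the plan is the following. If both $b^j_1$ and $b^j_2$ were in $W$, then $\Wprep{\allB}$ would contain two facts with the same superscript $j$, violating condition~\ref{condneq}. If $b^j_l\in W$ for some $l\in\{1,2\}$ and also $u^j_1\in W$, condition~\ref{condt} applied to $b^j_l$ would force $u^j_1\in \back(j,l)\cup \front(j,l)$; but $\back(j,l)=\emptyset$ and $\front(j,l)=\{v^j_k : 1\le k\le n-3\}$ by the boundary definition, and neither set contains $u^j_1$. The symmetric boundary triangle $\{b^j_{n-1},b^j_n,v^j_{n-3}\}$ is handled by the same argument using $\back(j,l)=\{u^j_k : 1\le k\le n-3\}$ and $\front(j,l)=\emptyset$ for $l\in\{n-1,n\}$.

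For a middle triangle $T=\{v^j_l,u^j_{l+1},b^j_{l+2}\}$ with $1\le l\le n-4$, each of the three pairs is disposed of by a subscript inequality. If $\{v^j_l,u^j_{l+1}\}\subseteq W$, condition~\ref{condj} gives some $l^*$ with both facts in $\back(j,l^*)\cup\front(j,l^*)$; but $u^j_{l+1}\in\back(j,l^*)$ needs $l^*\ge l+3$ while $v^j_l\in\front(j,l^*)$ needs $l^*\le l+1$, contradiction. If $\{v^j_l,b^j_{l+2}\}\subseteq W$, condition~\ref{condt} applied to $b^j_{l+2}$ forces $v^j_l\in \back(j,l+2)\cup\front(j,l+2)=\{u^j_k:1\le k\le l\}\cup\{v^j_k:l+1\le k\le n-3\}$, which clearly excludes $v^j_l$. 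The pair $\{u^j_{l+1},b^j_{l+2}\}$ is identical, with $u^j_{l+1}$ missing from the analogous set since $l+1>l$.

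The main obstacle is simply the bookkeeping on the boundary cases $l\in\{1,2\}$ and $l\in\{n-1,n\}$, where the definitions of $\back(j,l)$ and $\front(j,l)$ are not given by the generic formula; but once these are unfolded, every case reduces to an incompatibility between two explicit subscript ranges, so no further combinatorial content is needed.
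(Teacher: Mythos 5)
Your proof is correct and follows essentially the same route as the paper: both arguments reduce the claim to showing that no two facts of $W$ lie in a common "triangle" of the "solution graph", and then discharge each candidate pair using conditions~\eqref{condneq}, \eqref{condt} and \eqref{condj} of the "$k$-obstruction set" definition. Your version is merely more explicit (a per-triangle case analysis, including the $b^j_1$--$b^j_2$ pair handled via condition~\eqref{condneq}, which the paper leaves implicit), so no further comment is needed.
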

\begin{proof}
  Let $W$ be a "partial repair" of $\XX = \{X_1,\ldots X_k\}$. By
  Condition~\eqref{condj}, for every $1\le j \le n-1$ there are no "solutions"
  within $\Wprep{\allE{j}}$. By construction, it is also not possible to have
  "solutions" involving one "fact" from $\Wprep{\allE{j}}$ and another from
  $\Wprep{\allE{j'}}$ for $j\ne j'$ (\cf~Figure~\ref{fig:counterexample}).

  So if a "solution" exists, it has to involve some "fact" of the form
  $b^j_l \in \Wprep{\allB}$. By construction of $\Dn$ this "solution" must involve  either
  $u^j_{l-1}$ or $v^j_{l-2}$. But by 
  Condition~\eqref{condt} none of them are in $W$.
\end{proof}

\smallskip

We are now in shape to prove \Cref{prop-Dk not in Delta(k-2)}, which is a consequence of the following two claims.
  %
  \begin{clm}
    \AP\label{claim1} For every set of $k$ "blocks" $\XX = \{X_1,\ldots X_k\}$ of $\Dn[k+2]$, there exists  a "partial repair" $W$ of $\XX$ such that $W$ is a "$k$-obstruction set".
  \end{clm}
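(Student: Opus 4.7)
The plan is to construct $W$ directly by exploiting the slackness $n-1 = k+1 > k \ge |\XX \cap \allB|$. First I would decompose $\XX$ into its $B$-blocks $\XX \cap \allB = \{B_{i_1},\dotsc,B_{i_l}\}$ (with $l \le k$) and its $E$-blocks. Because the set of available superscripts $\{1,\dotsc,n-1\}$ strictly outnumbers the $B$-blocks lying in $\XX$, we can pick pairwise distinct $j_1,\dotsc,j_l \in \{1,\dotsc,n-1\}$, one per $B$-block. Assigning to each $B_{i_m}$ the fact $b^{j_m}_{i_m}$ then secures Condition~\eqref{condneq} by construction.

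For every $E$-block $E^j_p \in \XX$ I would choose its fact according to whether $j$ is among the assigned superscripts. If $j = j_m$ for some $m$, I pick $u^{j_m}_p$ when $p \le i_m - 2$ and $v^{j_m}_p$ otherwise; by the definitions of $\back$ and $\front$, this choice always lands in $\back(j_m,i_m) \cup \front(j_m,i_m)$, establishing Condition~\eqref{condt}. If $j$ is not assigned, I pick $v^j_p$, which by the edge-case definitions at $l=1$ belongs to $\back(j,1) \cup \front(j,1) = \{v^j_k : 1 \le k \le n-3\}$.

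Condition~\eqref{condj} then follows by exhibiting a threshold for every $j \in \{1,\dotsc,n-1\}$: take $l = i_m$ when $j = j_m$, and $l = 1$ when $j$ is unassigned. These values are exactly the ones already implicit in the construction of the preceding paragraph. The only thing to check carefully is that the edge cases $i_m \in \{1,2\}$ (where the condition $p \le i_m - 2$ is vacuous, forcing all $v$'s) and $i_m \in \{n-1,n\}$ (where $p \ge i_m - 1$ is vacuous, forcing all $u$'s) remain coherent with the special definitions of $\back$ and $\front$ in these regimes; this is a routine case split.

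There is no genuine obstacle here: the argument is essentially a pigeonhole step ($n-1 > k$ ensures distinct superscripts for the $B$-blocks) followed by a local, block-by-block choice of which of the two candidate facts to place into $W$. Since the blocks of $\XX$ are disjoint and the superscripts $j_1,\dotsc,j_l$ are distinct, no two rules from the construction ever interfere with each other across different $j_m$'s, so nothing needs to be reconciled globally.
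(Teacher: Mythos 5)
Your proposal is correct and follows essentially the same route as the paper's proof: assign pairwise distinct superscripts to the $B$-blocks of $\XX$ (the paper simply uses $1,\dotsc,m$), fill the $E$-blocks with assigned superscript $j_m$ from $\back(j_m,i_m)\cup\front(j_m,i_m)$, and fill those with unassigned superscripts from $\front(j,1)$, then check the three conditions exactly as you do.
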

  \begin{clm}
    \AP\label{claim2} For every set of "facts" $W$, if $W$ is a "$k$-obstruction set", then $W\not\in \Deltakp(\qFour,\Dn[k+2])$.
  \end{clm}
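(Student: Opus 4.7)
I would prove the strengthening: \emph{no ``generalized obstruction set''---meaning a set of facts of \emph{any} size $\le k$ satisfying conditions~1--3 of the $k$-obstruction-set definition---belongs to $\Deltakp(\qFour, \Dn[k+2])$}. Since any $k$-obstruction set is a generalized obstruction set of size $k$, this implies \Cref{claim2}. The proof is by strong induction on the derivation step. Two observations drive it: (i)~subsets of generalized obstruction sets are again generalized obstruction sets (each of the three conditions is monotone under taking subsets); and (ii)~by \Cref{lemma-no solution in obstruction}, generalized obstruction sets contain no $\qFour$ solution. Observation~(ii) handles the base case, since $\Deltakp$ initially consists of $k$-sets satisfying $\qFour$.

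For the Rule~1 inductive step, suppose a generalized obstruction set $S$ of size $\le k$ is added via some block $B$. The plan is to exhibit $u \in B$ such that $S \cup \{u\}$ is \emph{itself} a generalized obstruction set: then by~(i) every size-$\le k$ subset of $S \cup \{u\}$ is a generalized obstruction set, so by the induction hypothesis none of them lies in $\Deltakp$, contradicting Rule~1. The choice of $u$ splits into cases. If $B \cap S \neq \emptyset$, take $u \in B \cap S$, so $S \cup \{u\} = S$. If $B = B_i$ and $B \cap S = \emptyset$, a pigeonhole argument on the $n - 1 = k + 1$ possible $j$-indices against the $\le k$ facts of $S$ yields a ``free'' index $j^*$ which is unused in $S \cap \allB$ and satisfies $S[\allE{j^*}] = \emptyset$; take $u = b^{j^*}_i$. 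If $B = E^j_l$ and $B \cap S = \emptyset$, then condition~2 (when $j$ is used in $S \cap \allB$) or condition~3 (when it is not) provides an $l'$ such that $S[\allE{j}] \subseteq \back(j,l') \cup \front(j,l')$, and exactly one of $\{u^j_l, v^j_l\}$ lies in this set---take $u$ to be it. In each case the three conditions are easily verified for $S \cup \{u\}$.

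For the Rule~2 inductive step, by \Cref{rk:properties-solgraph-q4} and the facts that $\Dn[k+2]$ has no self-loops and every fact lies in a triangle of $\solgraph$, the neighborhood $N[a]$ of a non-self-loop fact $a$ is a triangle $\{a,a',a''\}$ whose three facts lie in three distinct blocks. If $N[a] \cap S \neq \emptyset$ or if at least one block of $N[a]$ lies outside $\XX(S)$, we conclude exactly as in Rule~1 by picking $u \in N[a] \cap S$ or by applying the same construction to $u \in N[a]$ in the free block. The main obstacle---and the hardest case---is when $N[a] \cap S = \emptyset$ and all three blocks of $N[a]$ are in $\XX(S)$: then the $S$-facts in these three blocks are fully determined (they are the unique non-$N[a]$ elements of each block). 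A case analysis over the three triangle types (I, II, III of \Cref{fig:counterexample:triangles}), combined with conditions~2 and~3 of the obstruction-set definition, shows that this configuration either contradicts those conditions outright or leaves, for a well-chosen $u \in N[a]$, only ``bad'' subsets of $S \cup \{u\}$ (namely those containing two facts of the same block); ruling these out of $\Deltakp$ requires a supplementary structural argument tracing the derivation rules of $\Cqkp$, which exploits that a same-block pair in $\Dn[k+2]$ does not form a $\qFour$ solution (since no fact of $\Dn[k+2]$ has a repeated active-domain element) and that singletons and the empty set, being generalized obstruction sets themselves, are excluded by the main induction.
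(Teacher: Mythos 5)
Your overall strategy is the same as the paper's: a minimal-counterexample induction on the derivation step of $\Cqkp$, with \Cref{lemma-no solution in obstruction} handling the base case and, for the inductive step, the exhibition of a fact $u$ in the witness set such that $S\cup\{u\}$ is still covered by obstruction sets. Your strengthening to generalized obstruction sets of size at most $k$ is a clean way of making precise what the paper handles by asserting that the witnessing subsets $W'$ are neither $k$-obstruction sets nor subsets thereof, and your Rule-1 analysis (pigeonhole on a free index $j^*$ when the witness is a block $B_i$; the $l'$ supplied by conditions 2--3 when it is a block $E^j_l$) coincides with the paper's case split.

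The gap is in the Rule-2 step, and you have located it yourself: when $N[a]\cap S=\emptyset$ and all three blocks of the triangle lie in $\XX$, every choice of $u\in N[a]$ makes $S\cup\{u\}$ contain two facts of a common block, so $S\cup\{u\}$ is contained in no partial repair and your induction hypothesis --- which only speaks about subsets of partial repairs --- says nothing about those witnessing subsets $W'\subseteq S\cup\{u\}$ that contain such a pair. The ``supplementary structural argument'' you invoke to exclude these $W'$ from $\Deltakp$ is precisely the missing content, and it is not routine: the invariant \eqref{eq:Delta:invariantplus} is vacuous for a set containing two equivalent facts, a set containing a same-block pair can perfectly well enter $\Deltakp$ (for instance if its remaining facts contain a solution), and the natural candidate lemma (one of the two equivalent facts can always be dropped while staying in $\Deltakp$) does not go through by a direct induction, because different facts $u$ of the witness set may force dropping different elements of the pair. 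The paper never confronts this: it splits on whether $\lambda\in\{j_1,\dots,j_m\}$ and whether $\XX\cap\allE{\lambda}=\emptyset$, and in each branch selects a vertex $w$ of the triangle designated by $\back(\lambda,\cdot)\cup\front(\lambda,\cdot)$ (or falls back on $A\cap W\neq\emptyset$), so that $W\cup\{w\}$ remains a partial repair and the same induction hypothesis applies; no reasoning about same-block pairs inside $\Deltakp$ is needed. (The paper's own treatment of the exact configuration you isolate --- $W$ containing $b^j_l$, $u^{\lambda}_{l-2}$ and $v^{\lambda}_{l-1}$ with all three blocks of the triangle in $\XX$ --- is itself quite terse, so your instinct that this is the delicate spot is sound.) Until you either supply the supplementary argument or reorganize the choice of $u$ so that it never creates a same-block pair, the proof is incomplete.
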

  \begin{proof}[Proof of \Cref{prop-Dk not in Delta(k-2)}]
  Assuming that the two claims are true, they together imply that for every set of "blocks"
  $\set{X_1, \dotsc, X_k}$ of $\Dn[k+2]$ there exists a "partial repair" $W$ for
  $\set{X_1, \dotsc, X_k}$ such that $W\not\in \Deltakp(\qFour,\Dn[k+2])$. 
  
  Suppose $\emptyset \in \Deltakp(\qFour,\Dn[k+2])$ was obtained by using the "first derivation rule". Then, there is some block $B$ such that for every "fact" $a\in B$ we have $\set a\in \Deltakp(\qFour,\Dn[k+2])$. Let $\XX$ be an arbitrary set of blocks of size $k$ such that $B\in \XX$ (such $\XX$ can always be picked since $\Dn[k+2]$ has sufficiently many blocks). By \Cref{claim1} there exists a "$k$-obstruction set" $W$ which is a "partial repair" of $\XX$ and by \Cref{claim2} $W\not\in \Deltakp(\qFour,\Dn[k+2])$. This is a contradiction for the "fact" $a \in B\cap W$.  
  Thus, $\Dn[k+2]\not\models \Cqkp(\qFour)$. 

  If, on the other hand, $\emptyset \in \Deltakp(\qFour,\Dn[k+2])$ was obtained by using the "second derivation rule", then there is a "fact" $a$ such that for every $a' \in A=\set{ b \in \Dn[k+2] : \Dn[k+2] \models q(ab) \lor q(ba) \lor a=b }$ we have $\set{a'} \in \Deltakp(\qFour,\Dn[k+2])$. But from \Cref{prop-extend-singleton-to-obst-set} there is a "$k$-obstruction set" $W$ such that $a\in W$.  Moreover, from the two rules to compute $\Deltakp(\qFour,\Dn[k+2])$, it follows that if $S\in \Deltakp(\qFour,\Dn[k+2])$ and $S\subseteq S'$ where $|S'| \le k$ then $S'\in\Deltakp(\qFour,\Dn[k+2])$. Hence, $W\in \Deltakp(\qFour,\Dn[k+2])$ which contradicts \Cref{claim2}.
\end{proof}

Hence, we are only left with the proofs of \Cref{claim1,claim2}.

\begin{proof}[Proof of \Cref{claim1}] Recall that $\Dn[k+2]$ has $k+2$ "blocks"
of the form $B_i$ and each $B_i$ has $k+1$ "facts". Further, for every
$1 \le j \le k+1$ we have the sets of "blocks" of the from
$E^j_1\ldots E^j_{k-1}$.

First let $\XX \cap \allB = \{ B_{i_1},\ldots B_{i_m}\}$ for some $m \le k$ and
$i_1,i_2\ldots i_m \le k+2$. We let $W$ to contain
$\set{b^1_{i_1},b^2_{i_2},\ldots b^m_{i_m}}$. Moreover for every $1\le j \le m$
we add to $W$ the "partial repair" induced by $\back(j,i_j) \cup \front(j,i_j)$. Formally, we add $(\back(j,i_j) \cup \front(j,i_j)) \cap {\bigcup} \XX$ to $W$ ("ie", all "facts" of $\back(j,i_j) \cup \front(j,i_j)$ that are in a "block" of $\XX$). This ensures that conditions~\eqref{condneq} and~\eqref{condt} are satisfied.

Now for all $m < j \le k+1$ we add to $W$ the "partial repair" induced by
$\front(j,1)$ ("ie", $\front(j,1) \cap {\bigcup} \XX$).  Hence, condition~\eqref{condj} is also satisfied and $W$ is a "$k$-obstruction set".
\end{proof}

\begin{proof}[Proof of \Cref{claim2}] Let $\intro*\Deltakpi(\qFour,\Dn[k+2],i)$ be
  the set computed by the algorithm of $\Cqkp(\qFour)$ at step~$i$ of the fixpoint computation.

  Suppose the claim is false, and let $n$ be the least index such that
  $W\in \Deltakpi(\qFour,\Dn[k+2],n)$ for some "$k$-obstruction set" $W$ over the
  "blocks" $\XX=\set{X_1,\ldots, X_k}$.

  Note that $n=0$ is not possible since $W$ does not contain any "solution"
  (cf.\ \Cref{lemma-no solution in obstruction}); hence $n >0$. By definition
  of $\Deltakpi(\qFour,\Dn[k+2],n)$, this implies that there exists a set of "facts" $A$ such that either
  \begin{itemize}
    \item ("first derivation rule") $A$ is a "block" in $\Dn[k+2]$, or 
    \item ("second derivation rule") for some "fact" $a$ we have $D\not\models q(aa)$ and $A = \{a\} \cup \{b\mid D\models q(ab)\cup a(ba)\}$,
  \end{itemize}
   and for all
  $a'\in A$ there is a subset $W'\subseteq W \cup \{a'\}$ of size at most $k$ such
  that $W'\in \Deltakpi(\qFour,\Dn[k+2],n-1)$ and further $W'$ is not a
  "$k$-obstruction set", nor a subset thereof.

  Let $\XX \cap \allB = \{B_{i_1},\ldots B_{i_m}\}$ for some $m\le k$ and
  $\Wprep{\allB} = \{b^{j_1}_{i_1},b^{j_2}_{i_2},\ldots b^{j_m}_{i_m}\}$. So there
  exists $k+1-m$ many distinct indices $j$ that are not in
  $\set{j_1, \dotsc, j_m}$.  Let $j_{m+1}, \dotsc, j_{k}, j_{k+1}$ be those
  indices. Since $\XX$ is of size $k$, among those indices there can be at most
  $k-m$ many indices $s$ such that $\XX \cap \allE{s} \ne \emptyset$. Thus,
  there exists at least one index $s$ such that $s\not\in \set{j_1,\ldots j_m}$
  and $\XX \cap \allE{s} = \emptyset$.

\begin{figure}
  \centering
  \includegraphics[width=\textwidth]{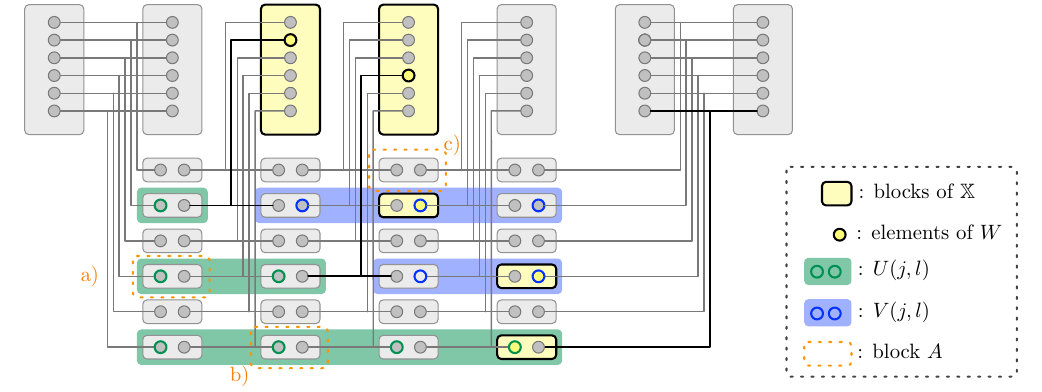}
  \caption{"Database" $\Dn[k+2]$ in the proof of Claim~\ref{claim2}, with  "$k$-obstruction set" $W$ over "blocks" $\XX$. Dotted-line boxes depict the "block" $A$ in cases a) b) and c) of the proof.}
  \label{fig:counterexample:3cases}
\end{figure}
Now we consider various candidates for $A$ and show that each case leads to a contradiction.
\begin{itemize}
\item If $A = B_l$ for some $1\le l\le k+2$ then consider $b_l^{s} \in
  B_l$. Observe that every "$k$-set" of size $k$ over $W \cup \{ b_l^{s}\}$ forms a
  "$k$-obstruction set" since condition~\eqref{condt} holds vacuously for
  $s$. Hence, there must exist a "$k$-obstruction set" inside
  $\Deltakpi(\qFour,\Dn[k+2],n-1)$, which is in contradiction with the
  minimality of $n$.

\item If, otherwise, $A=E^\lambda_l$ for some $1\le \lambda \le k+1$ and $1\le
  l \le k-1$, there are three cases to consider depending on where $A$ lies in relation to $\XX$ (see Figure~\ref{fig:counterexample:3cases}):
\begin{itemize}
\item[a)] If $\lambda \in \set{j_1,\ldots j_m}$, say $\lambda = j_t$, then by condition~\eqref{condt}
  $\Wprep{\allE{\lambda}} \subseteq \back(j_t,i_t) \cup \front(j_t,i_t)$. Let $w$ be the
  "fact" in $E^\lambda_l$ picked by $\back(j_t,i_t) \cup \front(j_t,i_t)$. Then every
  $k$-sized subset of $W\cup \{ w\}$ is a "$k$-obstruction set" (all conditions
  follow since $W$ is already a "$k$-obstruction set"). This is in contradiction with the minimality of $n$.

\item[b)] If $\lambda \not\in \set{j_1,\ldots j_m}$ but $\XX \cap \allE{r} \ne \emptyset$ then by
  condition~\eqref{condj}, $\Wprep{\allE{\lambda}} \subseteq \back(\lambda,\lambda')\cup \front(\lambda,\lambda')$ for some
  $1\le \lambda'\le k+2$. Let $w$ be the "fact" in $E^\lambda_l$ picked by
  $\back(\lambda,\lambda')\cup \front(\lambda,\lambda')$. Then, again every $k$-size subset of
  $W\cup \{ w\}$ is a "$k$-obstruction set" (all conditions follow since $W$ is
  already a "$k$-obstruction set"). This is a contradiction.

\item[c)] Otherwise, $\lambda\not\in \set{j_1,\ldots j_m}$ and $\XX \cap \allE{\lambda} = \emptyset$. In this
  case, pick $v^\lambda_l\in E^\lambda_l$ for some arbitrary $l$. We have
  $\{v^\lambda_l\} \subseteq \back(\lambda,1) \cup \front(\lambda,1)$. Hence, it can be verified
  that every $k$-sized subset of $W \cup \{v^\lambda_l\}$ is a "$k$-obstruction set",
  which again is a contradiction.
\end{itemize}
\item Otherwise $A =  \set{ b \in D : D \models q(ab) \lor q(ba) \lor a=b }$ for some fact $a\in \Dn[k+2]$ which implies that $A$ is of the form $\set{b^\lambda_l, u^\lambda_{l-1}, v^\lambda_{l-2}}$ or $\{b^\lambda_1,b^\lambda_2,u^\lambda_1\}$ or $\{b^\lambda_{n-1},b^\lambda_{n},v^\lambda_{n-3}\}$. We only prove the case where $A = \set{b^\lambda_l, u^\lambda_{l-1}, v^\lambda_{l-2}}$ and the other two cases are analogous. The argument is similar to the previous case.

  \begin{itemize}
\item[a)] If $\lambda\in \set{j_1,\ldots j_m}$, say $\lambda = j_t$, then by condition~\eqref{condt}
  $\Wprep{\allE{\lambda}} \subseteq \back(j_t,i_t) \cup \front(j_t,i_t)$. Let $w$ be the
  "fact" in $A$ picked by $\back(j_t,i_t) \cup \front(j_t,i_t)$. Then every
  $k$-sized subset of $W\cup \{ w\}$ is a "$k$-obstruction set" (all conditions
  follow since $W$ is already a "$k$-obstruction set"). This is in contradiction with the minimality of $n$.

\item[b)] If $\lambda\not\in \set{j_1,\ldots j_m}$ but $\XX \cap \allE{r} \ne \emptyset$ then by
  condition~\eqref{condj}, $\Wprep{\allE{\lambda}} \subseteq \back(\lambda,\lambda')\cup \front(\lambda,\lambda')$ for some
  $1\le \lambda'\le k+2$. Let $w$ be the "fact" in $A$ picked by
  $\back(\lambda,\lambda')\cup \front(\lambda,\lambda')$. Then, again every $k$-size subset of
  $W\cup \{ w\}$ is a "$k$-obstruction set" (all conditions follow since $W$ is
  already a "$k$-obstruction set"). This is a contradiction.

\item[c)] Otherwise, $\lambda\not\in \set{j_1,\ldots j_m}$ and $\XX \cap \allE{\lambda} = \emptyset$. In this
  case, pick $v^\lambda_{l-2}\in A$. We have
  $\{v^\lambda_{l-2}\} \subseteq \back(\lambda,1) \cup \front(\lambda,1)$. Hence, it can be verified
  that every $k$-sized subset of $W \cup \{v^\lambda_l\}$ is a "$k$-obstruction set",
  which again is a contradiction.\qedhere
\end{itemize}
\end{itemize}
\end{proof}

 As we have seen, $\certain(\qFour)$
cannot be computed using our fixpoint algorithm.  \AP We will next show that
$\certain(\qFour)$ is complete for the ""Saturating Bipartite Matching
problem"" (or \reintro{SBM problem} for short). This is the problem of, given
bipartite graph $(V_1 \cup V_2, E)$, whether there is an injective function
$f: V_1 \to V_2$ such that $(v,f(v)) \in E$ for every $v \in V_1$. The "SBM
problem"\footnote{In turn, the "SBM problem" is equivalent to the Bipartite
  Perfect Matching problem under \logspace reductions, which corresponds to the
  restriction to instances in which $|V_1|=|V_2|$.} is known to be in
\ptime~\cite{DBLP:journals/siamcomp/HopcroftK73}, hence $\certain(\qFour)$ can
be solved in polynomial time, although its precise complexity class is open.%

\begin{thm}
  \label{thm:q4-sbm-complete}
  $\certain(\qFour)$ is complete for the (complement of the) "SBM problem" under
  \logspace-reductions.
  In particular $\certain(\qFour)$ is in \ptime.
  \end{thm}
\begin{proof}
First we prove that there is a \logspace-reduction from $\certain(\qFour)$ to the (complement of the)
  "SBM problem".
Fix an input "database" $D$. We reduce $\certain(\qFour,D)$ to the "SBM problem".
 
First we check for "self-loops" in $D$. If there is a "fact" $a \in D$ such that
$D\models q(aa)$ then we first check if $a$ is a singleton "block". If so, then
$D \models \certain(\qFour)$. Otherwise, $D \models \certain(\qFour)$ iff
$D\setminus\set{a} \models \certain(\qFour)$, so we can consider a smaller
"database" and repeat the argument.  Moreover this pre-processing can be performed in \logspace.
So assume that there is no "fact"
$a \in D$ such that $ D\models \qFour(aa)$.
    

Now consider the bipartite graph $G = (V_1 \cup V_2, E)$ where $V_1$ is the set
of all "blocks" of $D$ and $V_2$ is the set of all maximal cliques in the "solution graph"
$\solgraph$. Note that, by property \eqref{shape-solgraph-q4} (in page
\pageref{shape-solgraph-q4}), $V_2$ forms a partition of $D$, namely the set of
all maximal connected components of $\solgraph$. Let $(v_1,v_2) \in E$ if the
"block" $v_1$ contains a "fact" which is in the clique $v_2$.

\AP
Suppose that there is a ""$V_1$-saturating matching"", that is, an injective
 function $f: V_1 \to V_2$ such that $(v_1,f(v_1)) \in E$ for every
 $v_1 \in V_1$. We construct a "repair" $r$ where for every "block" $B$ of $D$, we
 pick the "fact" (or one of the "facts", if there are more than one) which is in
 $f(B)$. In this way, no two chosen "facts" will be in the same clique, and also
 since there is no "solution" of the form $\qFour(aa)$ in $D$, no two chosen "facts"
 will form a "solution" to $\qFour$. Thus, $r\not\models \qFour$.

Conversely, if $\qFour$ is not "certain" in $D$, let $r$ be a "repair" such that $r \not\models \qFour$. For each "block" $B$ of $D$ let $r(B)$ be the "fact" of $B$ belonging to $r$.
Note that, since $V_2$ is a partition of $D$, each $r(B)$ belongs to a unique clique in $V_2$.
Define $f: V_1\to V_2$ such that each "block" $B \in V_1$ is mapped to the clique in $V_2$ where $r(B)$ lies. To verify that $f$ is a witness function of a 
"$V_1$-saturating matching" for $G$, note that for every $B \in V_1$ we have $(B, f(B)) \in E$, as $B$ and $f(B)$ both contain $r(B)$. Moreover, $f$ is injective, otherwise if $f$ maps two distinct "blocks" to the same clique, this clique must contain at least two "facts" $a,b$ from $r$. These two "facts" are neighbors in $\solgraph$, and then $r \models \qFour(ab)$ or $r \models \qFour(ba)$, contradicting the hypothesis $r \not\models \qFour$.

Thus, to check if $D\in \certain(\qFour)$, it is sufficient to check if there is a "$V_1$-saturating matching" for $\solgraph$.

\bigskip

For the other direction, 
  given a bipartite graph $G = (V_1\cup V_2, E)$, let $V_1 = \{ s_1,\ldots s_n\}$ and $V_2 = \{ t_1,\ldots t_m\}$. 
  We will define a "database" $D_G$ such that there exists a "$V_1$-saturating matching" in $G$ if{f} $\qFour$ is not "certain" in $D_G$. 
    
  \knowledgenewrobustcmd{\Neighbour}{\cmdkl{N}}
  \AP
  For all $s_j\in V_1$ let $\intro*\Neighbour(s_j) \subseteq V_2$ denote the neighbours of $s_j$ and similarly for all $t_i \in V_2$ let $\reintro*\Neighbour(t_i)\subseteq V_1$ denote the neighbours of $t_j$.
  
  First note that if there is some $s_j\in V_1$ such that $\Neighbour(s_j) = \emptyset$, then clearly there cannot be a "$V_1$-saturating matching". Similarly, if there is some $t_i\in V_2$ such that $\Neighbour(t_i) = \emptyset$, then $t_i$ does not contribute to any matching and hence can be removed from the input. Further, suppose there is some $t_i \in V_2$ such that $|\Neighbour(t_i)| = 1$, let $s_j$ be the single neighbour of $t_i$. In this case, in every "$V_1$-saturating matching" maps $s_j$ to $t_i$.  So we can remove the vertices $s_j$ and $t_i$ from the input graph and conside a smaller instance.  Note that all these checks can be done in \logspace. 
    Hence we assume that for every $u\in V_1\cup V_2$, $\Neighbour(u) \ne \emptyset$ and  $|\Neighbour(s_j)| \ge 1$ for all $s_j\in V_1$ and $|\Neighbour(t_i)| \ge 2$ for all $t_i\in V_2$.

  Now we define the  "database" $D_G$. Note that this construction is very similar to the construction of $\Dn$ that we used to prove Theorem \ref{thm-X+lowerBound}.
  \begin{itemize}
  \item For every vertex in $s_j\in V_1$ create a "block" $B_j$ in $D_G$.
  \item For every $s_j\in V_1$ and $t_i\in V_2$, if $t_i\in \Neighbour(s_j)$ then there is a "fact" denoted by $b^i_j$ in the "block" $B_j$. By assumption $\Neighbour(s_j) \ge 1$ and hence every "block" $B_j$ is non-empty.
  \item For every $t_i\in V_2$ if $|\Neighbour(t_i)| = l$ then let $s_{i_1},\ldots s_{i_l} \in V_1$ be the neighbours of $t$. By the above construction, for every $j\le l$,  there is a "fact" of the form $b^i_{i_j}$ in $B_{i_j}$ that corresponds to the vertex $t_i$.
  
  Now if $l=2$ then define $b^i_{i_1}$ and $b^i_{i_2}$ such that they form a "solution" to $\qFour$. Otherwise, if $l = 3$ then define $b^i_{i_1}, b^i_{i_2}$ and $b^i_{i_3}$ such that they pair-wise form a "solution" to $\qFour$ (the three facts form a "triangle").
  
  If $l\ge 4$ then create $l-3$ new "blocks" denoted by $E^i_1,\ldots E^i_{l-3}$ where each $E^i_j$ contains exactly two "facts" $u^i_j$ and $v^i_j$. Moreover, in the same way as described in the definition of $\Dn$ earlier (\cf~\Cref{fig:counterexample:triangles,fig:counterexample}), define the "facts" appropriately such that $\{b^i_{i_1},b^i_{i_2},u^i_1\}$ and $\{b^i_{i_{l-1}},b^i_{i_l},v^i_{l-3} \}$ form "triangles" and for every $1\le j< l-3$ we have a "triangle" $\{v^i_j,u^i_{j+1},b^i_{j+2}\}$.
  
  \end{itemize}
  The reader can verify that this is exactly the construction used to define $\Dn$. 
  \AP
  Again, this construction is in \logspace. For each such $j$ and $l$  define the analogous
  $\back(i,l) \defeq \{ u^i_k \mid 1\le k \le l-2\}$ and
  $\front(i,l) \defeq \{ v^i_k \mid l-1 \le k \le l-3\}$.
  
  Now suppose there is a "$V_1$-saturating matching" then let us show that $\qFour$ is not "certain" for $D_G$. Consider the "repair" $r$  where for each "block" $B_j$ we  pick $b^i_j$ if $s_j$ is matched with $t_i$. Further, pick $\back(i,l) \cup \front(i,l)$ which gives a "partial repair" over $E^i_1\ldots E^i_l$.
  
  If some $t_i \in V_2$ is not matched with any vertex in $V_1$ then pick $\back(i,1) \cup \front(i,1)$ which gives a "partial repair" over $E^i_1\ldots E^i_l$.  It can be verified that the obtained "repair" does not contain any "solution".
  
  Conversely, suppose there is a "repair" $r$ of $D_G$ that falsifies $\qFour$, and let us show that there is a "$V_1$-saturating matching" in $G$.
  For any such "repair" $r$, note that if $b^i_j$ is picked in "block" $B_j$ then for all other "blocks" $B_{j'}$, the "fact" $b^i_{j'}$ cannot be in $r$ since that would make $\qFour$ true. Also $b^i_j\in B_j$ only if there is an edge between $s_j$ and $t_i$. Hence, we can define the "$V_1$-saturating matching" that maps every $s_j\in V_1$ to $t_i\in V_2$, where  $b^i_j$ is the "fact" in $r$ from the "block" $B_j$.
  \end{proof}

\subsection{The case of \texorpdfstring{$\qFive$}{q5}}
\AP
\label{sec:caseq5}

We now show that $\certain(\qFive)$ cannot be computed by $\Cqk(\qFive)$. This
is shown by reduction to the case of $\qFour$ based on the following construction:

\begin{prop}
\label{theorem-UnconditionX+-Lowerbound-sjf}
For every "database"  $D$ over the "signature" of $\qFour$ we can construct a
"database" $D'$ over the "signature" of $\qFive$ such that:
\begin{enumerate}
\itemAP if $D\models \certain(\qFour)$  then $D'\models \certain(\qFive)$;\label{theorem-UnconditionX+-Lowerbound-sjf:1}
\itemAP for every $k\ge 2$, if $D'\models \Cqk(\qFive)$ then
  $D\models \Cqkp(\qFour)$. \label{theorem-UnconditionX+-Lowerbound-sjf:2}
\end{enumerate}
\end{prop}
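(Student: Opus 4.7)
The plan is to construct $D'$ so that (i) the "block" structure of $D$ lifts via $R_1$-"fact"s, by adding $R_1(a, t_{a,b,c})$ for each "fact" $R(a,b,c) \in D$ with $t_{a,b,c}$ fresh, which induces a bijection between "block"s of $R$ in $D$ and "block"s of $R_1$ in $D'$ indexed by the key; and (ii) for each connected component $C$ of the "solution graph" of $\qFour$ on $D$ containing more than one "fact", an "$S_1$-block" is added whose options are in correspondence with the "fact"s of $C$, the third position of each option encoding the key of the associated "fact". "Fact"s of $D$ having no $\qFour$-neighbor in $D$ contribute no $S_1$-"fact". The exact shape of each $S_1$-"block" is tailored to jointly serve both properties below.

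For the first property, I would argue by contrapositive. Given a "repair" $r'$ of $D'$ with $r' \not\models \qFive$, define $r$ of $D$ by $R(a,b,c) \in r$ iff $R_1(a, t_{a,b,c}) \in r'$; by the bijection on "block"s, $r$ is a well-defined "repair" of $D$. Suppose $r \models \qFour$ via some pair $R(a,b,c), R(c,a,b) \in r$ living in a common component $C$ of the "solution graph"; then both corresponding $R_1$-"fact"s belong to $r'$, and the $S_1$-"block" of $C$ contains options whose third positions include both $a$ and $c$. By the design of the encoding, whichever "fact" $r'$ selects from this "block", a matching $R_1$-"fact" for that choice is in $r'$, which yields a $\qFive$-"solution" and contradicts $r' \not\models \qFive$.

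For the second property, I would proceed by induction on the fixpoint stages of $\Deltak(\qFive, D')$ and show that for every "$k$-set" $W \in \Deltak(\qFive, D')$ its image $f(W)$ under the natural mapping $f$ sending $R_1$- and $S_1$-"fact"s back to the "fact"s of $D$ they encode belongs to $\Deltakp(\qFour, D)$. The base case follows from the bijection between "$\qFive$-solution"s in $D'$ and "$\qFour$-solution"s in $D$, which sends initial "$k$-set"s of $\Deltak(\qFive, D')$ to initial "$k$-set"s of $\Deltakp(\qFour, D)$. The inductive step handles the two kinds of "block"s of $D'$: derivations via $R_1$-"block"s translate into standard-rule derivations of $\Cqkp(\qFour)$ on the corresponding $R$-"block"s, while derivations via $S_1$-"block"s translate into applications of the extended rule of $\Cqkp(\qFour)$ on the corresponding "solution graph" component, whose "fact"s match the extended-rule neighborhood. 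The main obstacle will be the simultaneous design of the $S_1$-"block"s so that (a) any adversarial choice of $r'$ in such a "block" yields a "$\qFive$-solution" in $r'$ whenever $\qFour$ holds in the corresponding $r$ (for property 1), and (b) the options of each $S_1$-"block" are in tight correspondence with the neighborhood of some "fact" of $D$ in the "solution graph" (for property 2); this balance is most delicate on "triangle" components, where the three "fact"s of $D$ must be represented as three options sharing a single $S_1$-"block" and play symmetric roles in both properties.
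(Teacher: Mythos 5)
Your high-level strategy matches the paper's: lift the blocks of $D$ to $R_1$-blocks keyed by the original keys, encode the "solution graph" of $\qFour$ into $S_1$-blocks, prove item (1) by transferring a repair of $D'$ back to a repair of $D$, and prove item (2) by induction on the fixpoint stages, translating $R_1$-block derivations into the standard rule of $\Cqkp(\qFour)$ and $S_1$-block derivations into its extended rule. However, the one point you explicitly defer --- the ``exact shape of each $S_1$-block'' --- is precisely where the proof lives, and the design you sketch does not work. First, it is internally inconsistent: if the second coordinate $t_{a,b,c}$ of each $R_1$-fact is fresh \emph{per fact}, then the options of a single $S_1$-block (which must all share the same key, hence the same first key component $y$ in $S_1(\underline{yz}\,x)$) can pair via $\qFive$ with at most one $R_1$-fact's second coordinate, so they cannot be ``in correspondence with the facts of $C$.'' You are forced to make that coordinate depend only on the clique, which collapses facts lying in the same block and the same clique --- as the paper does. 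Second, one $S_1$-block per connected component with one option per fact breaks item (1) on a "triangle" $\{u,v,w\}$: when a repair $r$ of $D$ satisfies $\qFour$ via $u,v$ only, the adversarial repair $r'$ may select the option associated with $w$, whose matching $R_1$-fact need not be in $r'$ (the block of $w$ may have chosen a fact from a different clique), so no $\qFive$-solution is produced.

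The paper instead creates one two-element $S_1$-block \emph{per edge} $\{u,v\}$ of a clique (with $u,v$ in distinct blocks), each option pointing back to one endpoint; this makes item (1) go through because both endpoints are guaranteed present. But it transfers the difficulty to item (2): a derivation witnessed by the edge-block $\{u^v_{S_1},v^u_{S_1}\}$ only yields information about $u$ and $v$, while the extended rule of $\Cqkp(\qFour)$ on the clique needs all three of $u,v,w$ covered. Closing that gap is the content of \Cref{claim-R-block-only}, a double induction showing that $S\cup\{u^v_{S_1}\}\in\Deltak(\qFive,D')$ propagates to $S\cup\{v_{R_1}\}$ \emph{and} to $S\cup\{w_{R_1}\}$ for the third clique member $w$ (via $S\cup\{u^w_{S_1}\}$). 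This propagation lemma is the technical core of the proof and is entirely absent from your proposal; without it, or a correct alternative $S_1$-block design that simultaneously satisfies both constraints you identify, the argument does not go through.
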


\begin{proof}
  Let $D$ be a "database" over the "signature" of $\qFour$.
%
  Consider the "solution graph" $\solgraph$ of $D$. Recall that by property
  \eqref{shape-solgraph-q4}, every connected component in $\solgraph$ is always
  a clique of size at most $3$. Thus, every "fact" of $D$ can be part of
  exactly one maximal clique.

  Let $B_1,B_2\ldots B_m$ be the set of all "blocks" of $D$ and
  $C_1,C_2\ldots C_l$ be the set of cliques in $\solgraph$. Let
  $e_1,e_2\ldots e_m, f_1,f_2\ldots f_l$ be fresh and pairwise distinct domain
  elements.  Define $D'$ as follows:
\begin{itemize}
\item \label{item-uncond-preserves-certain} For every "fact"
  $u$ occurring in a block $B_i$ of $D$ and in the clique $C_n$ add the "fact"
  $u_{R_1}= R_1(\underline{e_i}\, f_n)$ to $D'$. Notice that there could be two
  distinct facts occurring in the same block of $D$ and in the same
  clique. They are then both associated to the same fact of $D'$.
  
\item
  \label{item-uncond-preserves-X+} For every two facts $u,v$ from a
  clique $C_n$ such that $D\models \qFour(uv)$ and $u,v$ are in two distinct
  "blocks" $B_i$ and $B_j$, create two new "facts" in $D'$ as
  $u^v_{S_1}=S_1(\underline{f_ng}~e_i)$ and
  $v^u_{S_1}=S_1(\underline{f_ng}~e_j)$, where $g$  is a fresh element depending only on $u$ and $v$.
\item
  \label{item-uncond-preserves-X+-2} For every fact $u$ from a
  clique $C_n$ and a block $B_i$ such that $D\models \qFour(uu)$ add a "fact" in $D'$ as
  $u^u_{S_1}=S_1(\underline{f_ng}~e_i)$, where $g$  is a fresh element depending only on $u$.
\end{itemize}
\begin{figure}[ht]
  \centering
  \includegraphics[scale=0.9]{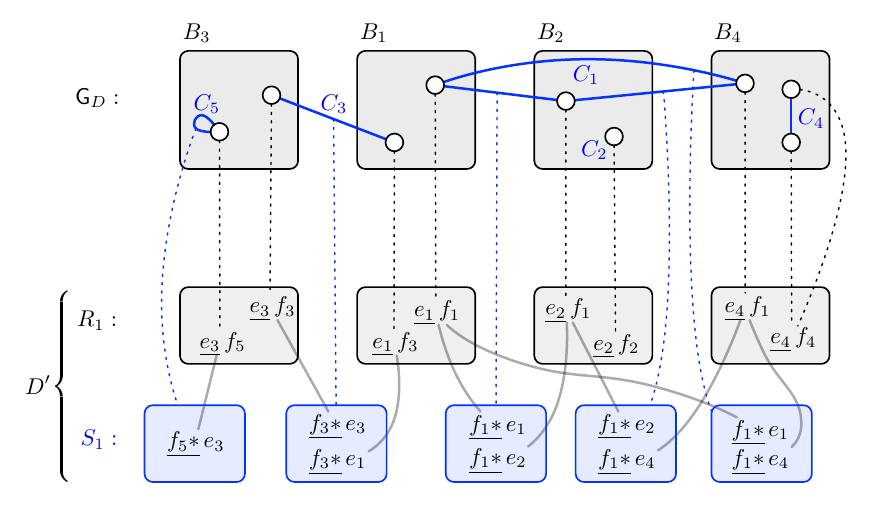}
  \caption{Example of construction of $D'$ from $D$ in the proof of \Cref{theorem-UnconditionX+-Lowerbound-sjf}.  The symbols ``$*$'' stand for the necessary constants ``$g$'' used in order to obtain the depicted "blocks". $D$ has $4$ "blocks" and $5$ cliques. Observe that neither the edge-less singleton clique $C_2$ nor the clique $C_4$ ---which is internal to the "block" $B_4$--- intervene in the relation $S_1$. Light gray edges depict the pairs of "facts" from $D'$ which form a "solution" to $\qFive$.}
  \label{fig:q5q4}
\end{figure}
See \Cref{fig:q5q4} for an example.

Notice that for any two "facts" $u,v \in D$ if
$D\models \qFour(uv)$ and $u,v$ are in two distinct blocks, then $u^v_{S_1}\simblock v^u_{S_1}$,
$D'\models \qFive(u_{R_1}u^v_{S_1}) \land \qFive(v_{R_1}v^u_{S_1})$, and the
"block" containing $u^v_{S_1},v^u_{S_1}$ in $D'$ does not contain any other
fact.
Moreover if for some fact  $u \in D$, $D\models \qFour(uu)$ then $D'\models
\qFive(u_{R_1}u^u_{S_1})$  and the
"block" containing $u^u_{S_1}$ in $D'$ does not contain any other
fact.
Finally, the only "solutions" in $D'$ for $\qFive$ are of the form
$\qFive(u_{R_1}u^v_{S_1})$ for some "facts" $u,v\in D$ such that
$D\models \qFour(uv)$ and $u \not\simblock v$ or $u=v$. Now we prove both claims of
the statement.

\medskip

\proofcase{\eqref{theorem-UnconditionX+-Lowerbound-sjf:1}} Suppose
$D\models \certain(\qFour)$. To verify that $D'\models \certain(\qFive)$, pick
any "repair" $r'$ of $D'$. By construction, every $R_1$-"fact" $a$ of $r'$ is of the form $u_{R_1}$
for some "fact" $u$ in $D$. Recall that there could be two such "facts" $u$. We
arbitrarily select one of them that we denote by $f_R(a)$, and we consider the set $r = \set{u \mid \exists a \in r' ~ u=f_R(a)}$. 

First, we prove that $r$
is a "repair" of $D$, that is: (i) it contains no two distinct facts $u \simblock v$, and (ii) it contains at least one "fact" for each "block" $B_i$.
\begin{enumerate}[(i)]
  \item By means of contradiction, suppose $r$ contains two "facts" $u,v\in r$
  such that $u\simblock v$. 
  By definition of $r$, $u=f_R(a)$
  and $v=f_R(b)$ for some "facts" $a,b$ of $r'$. 
  By construction of $D'$, it
  follows from $u\simblock v$ that $a\simblock b$ and therefore $a=b$ since we assumed $r$ to be a
  "repair". Hence, $u=v$.
  \item Further, for every
"block" $B_i$ in $D$ there is a $R_1$-"block" in $D'$ whose "facts" have "primary key"
$e_i$ and one of those "facts" is selected in $r'$, which gives rise (via $f_R$) to a "fact" of $B_i$.
\end{enumerate}

Since $D\models \certain(\qFour)$ by hypothesis, there are $u,v\in r$ such that $D\models \qFour(uv)$, let us show that this implies $r' \models \qFive$:
\begin{itemize}
  \item If $u=v$ then, by definition of $r$, we have $u_{R_1} \in r'$. Since $u^u_{S_1}$ belongs to a singleton $S_1$-"block" of $D'$, it is in every "repair" of $D'$, and thus $r'\models \qFive(u_{R_1} u^u_{S_1})$ as desired.
  \item If $u\neq v$ then, by definition of $r$, we have $u_{R_1},v_{R_1}\in r'$, and $\set{u^v_{S_1}, v^u_{S_1}}$ form a
$S_1$-"block" of $D'$. Hence, $r'$ contains one of
$\set{u^v_{S_1}, v^u_{S_1}}$. If $u^v_{S_1}\in r'$ we have
$r'\models \qFive(u_{R_1}u^v_{S_1})$; otherwise, if $v^u_{S_1}\in r'$, we have
$r'\models \qFive(v_{R_1}v^u_{S_1})$.
\end{itemize}
This concludes the proof for item \eqref{theorem-UnconditionX+-Lowerbound-sjf:1}.

  \bigskip

  \proofcase{\eqref{theorem-UnconditionX+-Lowerbound-sjf:2}}
Fix some $k\ge 2$.  Let $\reintro*\Deltakp(\qFour,D,i)$ and $\reintro*\Deltaki(\qFive,D',i)$ be the
  sets computed at step $i$ of $\Cqkp(\qFour)$ and $\Cqk(\qFive)$
  respectively. First we prove the following claim.

\begin{clm}\label{claim-R-block-only}
  Let $u,v,w$ be "facts" of $D$ in the same clique of the "solution graph"
  $\solgraph$. Possibly $v \simblock w$ or $v=w$ but we assume
  $u\not\simblock v$ and $u\not\simblock w$. In particular
  $D\models \qFour(uv) \lor \qFour(vu)$.
  Then for every "$k$-set" $S$ over $D'$
  if
  $S\cup \set{u^v_{S_1}} \in \Deltaki(\qFive,D',i)$ then
  $S \cup \set{v_{R_1}} \in \Deltaki(\qFive,D',i+1)$ and
  $S \cup \set{w_{R_1}} \in \Deltaki(\qFive,D',i+1)$. 
\end{clm} 
Note that in the case where $v_{R_1}\in S$ the claim implies
$S\in \Deltaki(\qFive,D',i+1)$. Similarly for $w_{R_1}$.

\begin{proof}[Proof of \Cref{claim-R-block-only}] We assume that  $D\models \qFour(uv)$, since the case where
  $D\models \qFour(vu)$ is symmetric. To facilitate the understanding of the following proof, refer to \Cref{fig:claim-R-block-only}.
  \begin{figure}
    \centering
    \includegraphics[scale=.9]{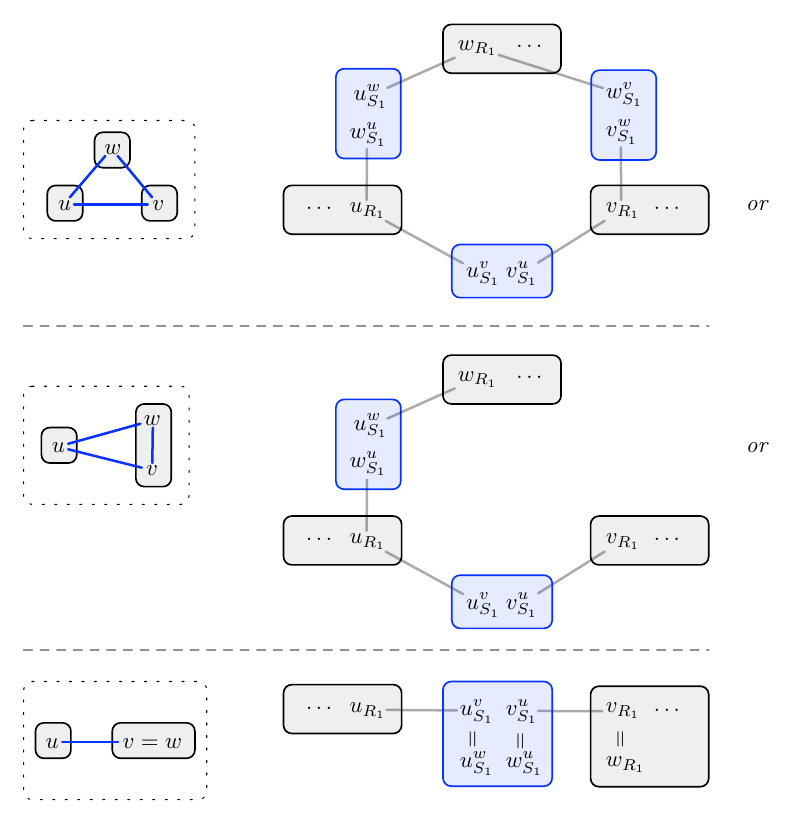}
    \caption{Schema for proof of \Cref{claim-R-block-only}.}
    \label{fig:claim-R-block-only}
  \end{figure}

  Assume $S \cup \set{u^v_{S_1}} \in \Deltaki(\qFive,D',i)$. If
  $S \in \Deltaki(\qFive,D',i)$ then the conclusion is immediate since
  $S \cup \set{u} \in \Deltaki(\qFive,D',i)$ for every "fact" $u$ such that
  $S\cup \set{u}$ is a $k$-set. So, let us assume
  $S \not\in \Deltaki(\qFive,D',i)$.
  
  \begin{itemize}
  \item The base case $i=0$ implies
    $S\cup\set{u^v_{S_1}}\in \Deltaki(\qFive,D',0)$ and
    $S\not\in\Deltaki(\qFive,D',0)$. This can only happen if $u_{R_1}\in S$,
    since $\set{u^v_{S_1},u_{R_1}}$ is the only $\qFive$-"solution" of $D'$
    involving $u^v_{S_1}$.  Let $B$ be the block
    $B=\set{u^v_{S_1}, v^u_{S_1}}$. Since $\set{u_{R_1},u^v_{S_1}}$ and
    $\set{v^u_{S_1},v_{R_1}}$ are $\qFive$-"solutions"
    (\cf~\Cref{fig:claim-R-block-only}), a simple analysis shows that the block
    $B$ is a witness for $S\cup\set{v_{R_1}} \in
    \Deltaki(\qFive,D',1)$. Similarly the block $B'=\set{u^w_{S_1},w^u_{S_1}}$
    witnesses the fact that $S\cup\set{w_{R_1}} \in
    \Deltaki(\qFive,D',1)$. 

  \item For the induction step, suppose
    $S\cup\set{u^v_{S_1}} \in \Deltaki(\qFive,D',i)$ and $i>0$. As $i>0$, let $B$ be the "block" of
    $D'$ used as the witness. By definition, for all "fact" $b\in B$, there is
    a subset $S_b$ of $S$ such that
    $S_b \cup \set{u^v_{S_1},b} \in \Deltaki(\qFive,D',i-1)$. By induction this
    implies that for all $b\in B$,
    $S_b \cup \set{v_{R_1},b} \in \Deltaki(\qFive,D',i)$. Hence $B$ witnesses
    the fact that $S\cup\set{v_{R_1}}\in\Deltaki(\qFive,D',i+1)$, as
    desired. The proof that $S\cup\set{w_{R_1}}\in\Deltaki(\qFive,D',i+1)$ is similar. \qedhere
  \end{itemize}
  \end{proof}

  With \Cref{claim-R-block-only} in place, assume that $D' \models \Cqk(\qFive)$ and let us show
  $D\models \Cqkp(\qFour)$.
  Let $S$ be a "$k$-set" of $D'$. Let $\tilde S$ be 
  any "$k$-set" of
  $D$ containing every "fact" $u$ such that either $u_{R_1} \in S$ or $v^u_{S_1} \in S$ for some "fact" $v$. 

  We will prove, by induction on $i$, that if $S\in \Deltaki(\qFive,D',i)$ then $\tilde S\in\Deltakpi(\qFour,d,i+1)$ (hence, in particular, $D'\models \Cqk(\qFive)$ implies $D\models \Cqkp(\qFour)$, concluding the proof of \eqref{theorem-UnconditionX+-Lowerbound-sjf:2}).
  \begin{itemize}
    \item For the base case $i=0$, if $S\in \Deltaki(\qFive,D',0)$ then it contains a "solution" to $\qFive$
    of the form $u_{R_1}u^v_{S_1}$ for some facts $u,v$ of $D$. By construction
    $D\models \qFour(uv)$ and $u,v\in \tilde S$. Hence
    $\tilde S\in\Deltakpi(\qFour,d,0)$.
    \item For the induction step, assume $S\in \Deltaki(\qFive,D',i)$ as witnessed by a
  "block" $B'$ of $D'$. 
  \begin{itemize}
    \item If $B'$ is an $R_1$-"block" then the "block"
    $\tilde B=\set{u\mid u_{R_1}\in B'}$ is a witness for
    $\tilde S\in\Deltakpi(\qFour,D,i)$. 
    \item If $B'$ is an $S_1$-"block" of the form
    $\set{u^v_{S_1},v^u_{S_1}}$. Set $w$ to the third "fact" in the clique of
    $u,v$ in the "solution graph" $\solgraph$ (set $w=v$ if the clique has size
    two). From \Cref{claim-R-block-only} we get that each of
    $S\cup\set{u_{R_1}}$, $S\cup\set{v_{R_1}}$, $S\cup\set{w_{R_1}}$ contains a
    "$k$-set" in $\Deltaki(\qFive,D',i-1)$. Hence, by induction each of
    $\tilde S\cup\set{u}$, $\tilde S\cup\set{v}$, $\tilde S\cup\set{w}$ contains
    a "$k$-set" in $\Deltakpi(\qFour,D,i)$. 
    Hence, by $\Deltakp$'s "second derivation rule" on the "fact" $u$,
    $\tilde S\in\Deltakpi(\qFour,D,i+1)$.
    \item It remains to consider the case where $B'$ is an $S_1$-"block" of the form
  $\set{u^u_{S_1}}$. But then $u$ is a "self-loop" and is contained in $\tilde
  S$. By definition we then have $\tilde S \in\Deltakpi(\qFour,D,0)$. \qedhere
  \end{itemize}
  \end{itemize}
\end{proof}

\begin{thm}\label{thm-q5-cqk}
 $\certain(\qFive)$ cannot be computed by
  $\Cqk(\qFive)$, for any choice of $k$.
\end{thm}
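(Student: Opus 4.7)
The plan is to proceed by contradiction, chaining together the two results just established: the lower bound for $\qFour$ (\Cref{thm-X+lowerBound}) and the reduction from $\qFour$ to $\qFive$ (\Cref{theorem-UnconditionX+-Lowerbound-sjf}).

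The case $k=1$ I would dispose of separately and trivially: since both atoms of $\qFive$ carry distinct relation symbols, no singleton set of facts can model $\qFive$, so $\Deltak[1](\qFive,D)$ is empty at initialization; moreover, the inductive rule can never add anything to an empty $\Delta$, so $\Cqk[1](\qFive)$ always outputs `no', while $\certain(\qFive)$ is obviously true on many databases (e.g.\ any consistent instance containing a solution to $\qFive$).

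For the main case $k\ge 2$, suppose towards a contradiction that $\Cqk(\qFive)$ computes $\certain(\qFive)$ for some fixed $k\ge 2$. I would then argue that this forces $\Cqkp(\qFour)$ to compute $\certain(\qFour)$, contradicting \Cref{thm-X+lowerBound}. Soundness ($D\models\Cqkp(\qFour)$ implies $D\models\certain(\qFour)$) is immediate from the invariant \eqref{eq:Delta:invariantplus}. For the converse, given any database $D$ over the signature of $\qFour$ with $D\models\certain(\qFour)$, construct the database $D'$ over the signature of $\qFive$ as in the proof of \Cref{theorem-UnconditionX+-Lowerbound-sjf}. Item~\eqref{theorem-UnconditionX+-Lowerbound-sjf:1} of that proposition yields $D'\models\certain(\qFive)$; by the assumed correctness of $\Cqk(\qFive)$ this gives $D'\models\Cqk(\qFive)$; and by item~\eqref{theorem-UnconditionX+-Lowerbound-sjf:2} we conclude $D\models\Cqkp(\qFour)$. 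Chaining these implications shows that $\Cqkp(\qFour)$ coincides with $\certain(\qFour)$, contradicting \Cref{thm-X+lowerBound}.

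There is no genuine obstacle left: all the combinatorial and algorithmic work has been absorbed into the counterexample databases $\Dn$ of \Cref{thm-X+lowerBound} and into the coding $D\mapsto D'$ together with the two transfer lemmas of \Cref{theorem-UnconditionX+-Lowerbound-sjf}. The proof of \Cref{thm-q5-cqk} therefore reduces to putting these pieces together, plus the one-line argument dealing with the degenerate $k=1$ parameter.
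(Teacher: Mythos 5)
Your argument is correct and is essentially the paper's own proof: assume $\Cqk(\qFive)=\certain(\qFive)$ for some $k\ge 2$, push a database $D$ with $D\models\certain(\qFour)$ through the construction of \Cref{theorem-UnconditionX+-Lowerbound-sjf} using items \eqref{theorem-UnconditionX+-Lowerbound-sjf:1} and \eqref{theorem-UnconditionX+-Lowerbound-sjf:2} to conclude $D\models\Cqkp(\qFour)$, contradicting \Cref{thm-X+lowerBound}. Your explicit treatment of $k=1$ (where $\Deltak[1](\qFive,D)$ is empty at initialization and stays empty) is a small but valid addition that the paper leaves implicit by starting directly from $k\ge 2$.
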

\begin{proof}
  By means of contradiction, assume that $\certain(\qFive)$ is equivalent to $\Cqk(\qFive)$ for some $k \geq 2$. We show that this implies that $\certain(\qFour)$ is equivalent to $\Cqkp(\qFour)$, contradicting \Cref{thm-X+lowerBound}. 
  
  It is enough to show that if a "database" $D$ is such that $D\models\certain(\qFour)$ then
  $D\models \Cqkp(\qFour)$. To this end, assume $D$ is a "database" such that $D\models\certain(\qFour)$. Let $D'$ be the "database" constructed from $D$
  given by \Cref{theorem-UnconditionX+-Lowerbound-sjf}. As
  $D\models\certain(\qFour)$, we get from
  \Cref{theorem-UnconditionX+-Lowerbound-sjf}-\eqref{theorem-UnconditionX+-Lowerbound-sjf:1} that
  $D'\models\certain(\qFive)$. From our hypothesis it follows that
  $D'\models\Cqk(\qFive)$. By \Cref{theorem-UnconditionX+-Lowerbound-sjf}-\eqref{theorem-UnconditionX+-Lowerbound-sjf:2} this
  implies that $D\models\Cqkp(\qFour)$ as desired.
\end{proof}

\subsection{The case of all self-join-free queries falsifying \texorpdfstring{\HP}{PCond}: \conp-hardness}
\AP
\label{sec:koutriswijsen-hard}
We describe in this section techniques
from~\cite{DBLP:journals/tods/KoutrisW17}, together with their immediate
consequence: $\certain(q)$ is \conp-hard as soon as $q\not\models\HP$.
We build on the dichotomy result of~\cite{DBLP:journals/tods/KoutrisW17} based
on their notion of an ``"attack graph"''. First we recall this notion using our
notation.

Let $q$ be a query, let $\Gamma$ be a set of "primary key constraints". Given an
"atom" $A$ of $q$
let
\AP
\begin{align*}
  \intro*\Aplus \defeq \{ B \textrm{ "atom" of } q \mid{}& \textrm{there exists a } 
   \textrm{"$\Gamma$-derivation" }X~B_1\ldots B_n \\
  &\textrm{where } X = \key(A)\text{, } B_n = B \textrm{, and for all }i,~B_i \ne A \}  
\end{align*}

\AP
Let $\vars(\Aplus) =~\bigcup_{B \in \Aplus} \vars(B)$.
Given two "atoms" $A$ and $B$ of $q$ we say that $A$ ""attacks"" $B$ 
if there exists a sequence $F_0,F_1,\dotsc, F_n$ of "atoms" of $q$
and $x_1,x_2,\dotsc,x_n$ of variables not in $\vars(\Aplus)$ such that $A=F_0, B=F_n$ and
for all $i>0$, $x_i$ is a variable occurring both in $F_{i-1}$ and $F_i$.
\AP
The "attack"  from $A$ to $B$  is said to be ""weak@weak attack"" if $B$ is
"$\Gamma$-determined" by $A$. The ""attack graph"" of $q$ and $\Gamma$ is the graph
whose vertices are the "atoms" of $q$ and whose edges are the "attacks". \AP
A cycle in
this graph is ""weak@weak cycle"" if all the "attacks" involved are "weak@weak attack", otherwise it is a \reintro{strong cycle}.

The dichotomy result of~\cite{DBLP:journals/tods/KoutrisW17} can be stated as:

\begin{thmC}[{\cite[Theorem~3.2]{DBLP:journals/tods/KoutrisW17}}]\label{thm-dichotomy-sjf}
  Let $q$ be a "self-join-free query" and $\Gamma$ a set of "primary key
  constraints". If every cycle in the "attack graph" of $q$ and $\Gamma$ is "weak@weak cycle", then $\certain(q)$ can be computed in polynomial time; otherwise $\certain(q)$ is \conp-complete.
\end{thmC}

We prove that if the "attack graph" of $q$ and $\Gamma$ contains only
"weak cycles", then $\HP$ holds. It then follows from \Cref{thm-dichotomy-sjf}
then whenever $q\not\models\HP$ then $\certain(q)$ is \conp-hard as desired:

\begin{thm}\label{thm-sjf-lowerbound}
Assume $q$ is a "self-join-free query"  and $\Gamma$ a set of "primary key
  constraints". If $q\not\models\HP$, then $\certain(q)$ is \conp-hard.
\end{thm}

In view of \Cref{thm-dichotomy-sjf}, the proof of \Cref{thm-sjf-lowerbound} is
an immediate consequence of the following lemma.

\begin{lemma}\label{lemma-sjf-lowerbound}
Assume $q$ is a "self-join-free query"  and $\Gamma$ a set of "primary key
  constraints". If the "attack graph" of $q$ and $\Gamma$ contains only "weak
  cycles" then $q\models \HP$. 
\end{lemma}
\begin{proof}\label{proof:thm-sjf-lowerbound}
\AP
Let $\mathcal{X}$ be the set of all the
strongly connected components of the "attack graph" of $q$ and $\Gamma$. 
We
define the ""core graph"" of $q$ and $\Gamma$ as the directed graph whose
vertices are the elements of $\mathcal{X}$ and there is an edge from $S$ to
$S'$ if $S$ contains an "atom" $A$ "attacking" an "atom" $B$ of $S'$.  Note that, by
definition, a "core graph" is always acyclic. Let $\tau$ be any topological
ordering of this graph, that is, $\tau=S_1,\dotsc,S_n$ is an ordering on $\mathcal{X}$, and for every $i,j$, if there is an edge from $S_i$ to $S_j$ in the "core graph", then $i<j$.   Note that
  since the "attack graph" contains only "weak cycles", any two facts belonging to
  the same strongly connected component are mutually "$\Gamma$-determined". Hence,
  every $S \in \mathcal{X}$ is a "stable set".  In particular $\tau$ is a 
  "$\Gamma$-sequence". We claim that $q\models \HPtau$.
  
To prove $q\models \HPtau$, we need to show that $q\models \HPtaui \tau i$ for every $i<n$. So, fix some $i$ and consider an arbitrary "atom" $A$ of $S_{i+1}$.
Let $D$ be some "database" with "repair" $r$ such that
$r\models q(\balpha a \bbeta)$, where $a$ "matches" $A$ and $\balpha$ "matches"
$S_{\leq i}$. Let $\mu$ be the valuation of the variables of $q$ witnessing the
"solution" $\balpha a \bbeta$. By abuse of notation, for each "atom" $B$ of $q$, we
write $\mu(B)$ to denote the "fact" of the "database" witnessing the "solution" for
the relation symbol of $B$. In particular, $a=\mu(A)$.

Let $a'\simblock a$, $r'=r[a\to a']$, and assume that
$r'\models q(\balpha' a' \bbeta')$ as witnessed by the valuation $\mu'$. We need to show that $r'\models
q(\balpha a' \bdelta)$ for some $\bdelta$.
Notice that the hypotheses of \Cref{lem-determinacy} are satisfied by $r$,
$r'$, $\mu$, $\mu'$ and there is a "$\Gamma$-derivation" from $X=key(A)$ to every
"atom" in $\Aplus$. Hence, from \Cref{lem-determinacy} it follows that 
\begin{align}
  \text{for any
  variable $x \in \vars(\Aplus)$ we have $\mu(x)=\mu'(x)$. }
  \label{eq:mu-mu'} \tag{$\dag$}
\end{align}
To show $r'\models
q(\balpha a' \bdelta)$, we define a satisfying valuation
$\nu$ as follows:
If $x$ is a variable occurring in an "atom" $B\neq A$ that is not "attacked" by
$A$, then set $\nu(x)=\mu(x)$; otherwise, set $\nu(x)=\mu'(x)$. We show that
$\nu$ witnesses the "solution" $\balpha a' \bdelta$ in $r'$. 

For this it suffices to show that for all "atom" $B$ of $q$, $\nu(B)$ is a fact
of $r'$. If $B$ is different from $A$ and not "attacked" by $A$ then this is clear
as $\nu(B)=\mu(B)$ and $\mu(B)$ belongs to both $r$ and $r'$. If $B$ is $A$ or
"attacked" by $A$ then we show that $\nu(B)=\mu'(B)$. To see this consider a
variable $x$ occurring in $B$ such that $\nu(x)=\mu(x)$. By definition of
$\nu$, this is because $x$ also belongs to some "atom" $C$ that is not "attacked"
by $A$. Hence, by definition of "attack", this implies that $x\in\vars(\Aplus)$. By
\eqref{eq:mu-mu'} this implies that $\mu(x)=\mu'(x)$ as desired.
\end{proof}

\subsection{The case of all self-join-free queries falsifying \texorpdfstring{\HP}{PCond}: \texorpdfstring{$\Cqk$}{Cert\_k} fails}
\AP
\label{sec:caseanyquery}

We finally reduce the case of $\qFive$ to any arbitrary query falsifying $\HP$. Let $q$ be a "self-join-free query" such that
$q\not\models\HP$. We show that for all $k$, if $\certain(q)$ is equivalent to
$\Cqk(q)$ then $\certain(\qFive)$ is equivalent to $\Cqk(\qFive)$, a contradiction with \Cref{thm-q5-cqk}.
The following is the analog of \Cref{theorem-UnconditionX+-Lowerbound-sjf}.

\begin{prop}
\label{proposition-q5toq}
For every "database" $D$ over the "signature" of $\qFive$ we can construct a
"database" $D'$ over the "signature" of $q$ such that:
\begin{enumerate}
\itemAP  if $D\models \certain(\qFive)$  then $D'\models \certain(q)$; \label{proposition-q5toq:1}
\itemAP  For every $k\ge 2$, if $D'\models \Cqk(q)$ then 
  $D\models \Cqk(\qFive)$. \label{proposition-q5toq:2}
\end{enumerate}
\end{prop}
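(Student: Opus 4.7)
The plan is to exploit the failure of $\HP$ to locate within $q$ two "atoms" that mimic $R_1$ and $S_1$ of $\qFive$. By the contrapositive of \Cref{thm-sjf-lowerbound}, the hypothesis $q\models\lnot\HP$ implies that the "attack graph" of $q$ contains a "strong cycle", so one can pick "atoms" $A,B$ on such a cycle with at least one of the two "attacks" in the cycle being non-weak; say $A$ "attacks" $B$ non-weakly. Non-weakness means $B\notin\Aplus$, so some variable of $\key(B)$ is not reachable by any "$\Gamma$-derivation" from $\key(A)$ avoiding $A$; this variable is a "free" key variable of $B$ playing the role of the variable $z$ in the "atom" $S_1(\underline{yz}\,x)$ of $\qFive$.

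To build $D'$ from $D$, I would start from a canonical ground model of $q$: fix a valuation $\mu_0$ sending every variable of $q$ to a fresh pairwise-distinct constant, and initialise $D'$ with one "fact" per "atom" of $q$ obtained by applying $\mu_0$. For every "fact" $R_1(a\,b)\in D$, add an $A$-"fact" to $D'$ that agrees with $\mu_0$ outside the variables of $A$, encodes $a$ in $\key(A)$, and encodes $b$ in the $A$-variables shared with $B$. Symmetrically, for every "fact" $S_1(b\,c\,a)\in D$ add a $B$-"fact" encoding $(b,c)$ in $\key(B)$ (placing $b$ in the key variable shared with $A$ and $c$ in the free key variable identified above) and encoding $a$ in the $B$-variable shared with $A$. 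The "blocks" of the added $A$- and $B$-"facts" then mirror the $R_1$- and $S_1$-"blocks" of $D$, and the only non-trivial "solutions" to $q$ in $D'$ arise from a pair of $A$- and $B$-"facts" whose underlying $R_1$- and $S_1$-"facts" form a "solution" to $\qFive$ in $D$, combined with the canonical "facts" for the remaining "atoms".

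For item~\eqref{proposition-q5toq:1}, given a "repair" $r'$ of $D'$, I would read off an $R_1$-"fact" from each $A$-"block" and an $S_1$-"fact" from each $B$-"block" of $r'$ to obtain a "repair" $r$ of $D$. Since $r\models\qFive$ by hypothesis, $r$ contains "matching" "facts" $R_1(a\,b)$ and $S_1(b\,c\,a)$; their $A$- and $B$-images in $r'$, together with the canonical "facts" for the other "atoms", witness $r'\models q$. That this combination is really a "solution" follows from \Cref{lem-determinacy} applied to the "$\Gamma$-derivation" from $\key(A)$ into $\Aplus$, which pins every "atom" of $\Aplus$ to its canonical image, and from the analogous argument around $B$ for any "atoms" reachable along the weak return arc of the cycle.

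For item~\eqref{proposition-q5toq:2}, I would translate "derivations" of $\Cqk(q)$ on $D'$ into "derivations" of $\Cqk(\qFive)$ on $D$ by induction on the fixpoint step, in the spirit of the second part of the proof of \Cref{theorem-UnconditionX+-Lowerbound-sjf}. Each "$k$-set" $S\in\Deltak(q,D')$ is mapped to the $\qFive$-"$k$-set" $\tilde S$ consisting of the $R_1$- and $S_1$-"facts" whose $A$- and $B$-images appear in $S$, discarding canonical "facts" for the other "atoms". "Blocks" used as witnesses that are singleton canonical "blocks" contribute nothing, while witnessing $A$- and $B$-"blocks" correspond precisely to $R_1$- and $S_1$-"blocks" of $D$. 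The main obstacle is the "block"-switching step: one must verify that the non-weak "attack" between $A$ and $B$ forces the same structural behaviour as in $\qFive$, so that any inductive witness in $\Deltak(q,D')$ translates into a valid witness in $\Deltak(\qFive,D)$. This is the analogue of \Cref{claim-R-block-only} and will require a careful accounting of how canonical "facts" in $S$ can (and cannot) participate in a witness, leveraging the structural information extracted from the strong cycle via the techniques of \cite{DBLP:journals/tods/KoutrisW17}.
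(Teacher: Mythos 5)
Your high-level plan coincides with the paper's: use the failure of $\HP$ together with \Cref{thm-sjf-lowerbound} to obtain a strong cycle in the attack graph, single out two atoms to play the roles of $R_1(\underline{x}\,y)$ and $S_1(\underline{yz}\,x)$, build $D'$ so that only the blocks of those two atoms are non-trivial, and translate $\Delta_k$-derivations back by induction on the fixpoint step. The paper, however, does not build $D'$ by hand: it imports the construction from the proof of Theorem~6.1 of Koutris and Wijsen as a black box (\Cref{lemma-conp-reduc-KW}), keeping exactly the four properties needed --- preservation of certainty, block-preserving bijections between $R_1$-facts and $A_1$-facts and between $S_1$-facts and $A_2$-facts, singleton blocks everywhere else, and the fact that every solution of $q$ in $D'$ projects to a solution of $\qFive$ in $D$. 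Given these, the induction for item~(2) is routine and needs no analogue of \Cref{claim-R-block-only}; your worry on that point is a symptom of the construction being underspecified rather than a genuine extra difficulty.

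The explicit construction you propose does not work as stated. First, initialising $D'$ with the frozen canonical copy $\mu_0(q)$ places a full solution to $q$ inside $D'$ whose facts all lie in singleton blocks (their keys are fresh constants distinct from the encoded ones); hence every repair of $D'$ contains that solution, so $D'\models\certain(q)$ and $\emptyset\in\Deltaki(q,D',1)$ hold for \emph{every} input $D$, and item~(2) fails for any $D$ with $D\not\models\Cqk(\qFive)$. Second, even if the canonical $A$- and $B$-facts are dropped, the encoded $A$-facts carry fresh constants in $\key(A)$ and in the positions meant to encode $b$, so they do not join with the canonical facts of the neighbouring atoms; no non-trivial solution of $q$ then exists in $D'$, which breaks item~(1). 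Third, a strong attack from $A$ to $B$ is witnessed by a path of atoms sharing variables outside $\vars(\Aplus)$, so $A$ and $B$ need not share any variable, and the phrase ``the $A$-variables shared with $B$'' has no referent in general (also, non-weakness means $B$ is not $\Gamma$-determined by $A$, which implies, but is not equivalent to, $B\notin\Aplus$). Repairing all of this amounts to propagating the three values of each $\qFive$-fact through the entire query guided by the structure of the strong cycle --- which is precisely the content of the Koutris--Wijsen construction that the paper cites instead of reproving.
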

\begin{proof}

  The construction of $D'$ from $D$ is actually taken from~\cite[proof of Theorem $6.1$]{DBLP:journals/tods/KoutrisW17}.
Since $q\not\models\HP$, it follows from \Cref{thm-sjf-lowerbound} that the
  "attack graph" of $q$ contains a "strong cycle". Then the following result follows from~\cite[proof of Theorem $6.1$]{DBLP:journals/tods/KoutrisW17}.
  
  \begin{clmC}[{\cite[proof of Theorem $6.1$]{DBLP:journals/tods/KoutrisW17}}]\label{lemma-conp-reduc-KW}
Let $q$ be a "self-join-free" query such that the "attack graph" of $q$ contains a strong cycle. Then  $q$ contains two "atoms" $A_1$ and $A_2$ such that for every "database"  $D$ over the "signature" of $\qFive$ there is  a "database" $D'$
over the "signature" of $q$ and functions $f$ and $g$ such that:
\begin{enumerate}[(i)]
\item If $D\models \certain(\qFive)$, then $D'\models
  \certain(q)$. \label{lemma-conp-reduc-KW:certain}

  \itemAP $f$ is a
  bijection from the $R_1$-"facts" of $D$ to the $A_1$-"facts" of $D'$ such
  that $u\simblock v$ iff $f(u)\simblock
  f(v)$. \label{lemma-conp-reduc-KW:bijection} 
  
  \itemAP $g$ is a function that maps a pair of the form $(C,u)$ where $C$ is an atom of $q$ such that $C\ne A_1$ and $u$ is an $S_1$-"fact" of $D$  to a $C$-"fact" of $D'$ such that:
  \label{lemma-conp-reduc-KW:gbijection}
  \begin{enumerate}
    \item $\set{u \mapsto v \mid g(A_2,u)=v}$ is a bijection from $S_1$-"facts" to $A_2$-"facts" such that $u\simblock v$ if{f} $g(A_2,u)\simblock g(A_2,v)$; 
    \item for every "atom" $C\not\in\set{A_1,A_2}$ and $S_1$-"fact" $u$ in $D$
      the block of $g(C,u)$ contains only one "fact" in $D'$.
        \end{enumerate}
  
\itemAP If $f(u)$ and $g(A_2,v)$ are part of a "solution" to $q$ in $D'$ then
  $D\models \qFive(uv)$.\label{lemma-conp-reduc-KW:last}
\end{enumerate}
  \end{clmC}

  Given a "database" $D$ over the "signature" of $\qFive$, let $D'$ be the
  "database" constructed in \Cref{lemma-conp-reduc-KW}. 
  Note that
  \Cref{lemma-conp-reduc-KW}-\ref{lemma-conp-reduc-KW:certain} already  proves our first item \eqref{proposition-q5toq:1}. 
  It remains to show that if $D'\models \Cqk(q)$
  then $D\models \Cqk(\qFive)$.  
  
  For any "$k$-set" $S$, let $\hat S=\set{u\mid f(u) \in S} \cup \set{v \mid g(A_2,v) \in S}$. 
  We show that if $S\in\Deltaki(q,D',i)$ then $\hat S \in \Deltaki(\qFive,D,i)$, by induction on $i$.

  For $i=0$, $S$ contains necessarily a "solution" to $q$ and therefore an
  $A_1$-"fact" $u'$ and an $A_2$-"fact" $v'$. Let $u$ and $v$ be such that
  $v'=g(A_2,v)$ and $u'=f(u)$. By construction of $\hat S$, both $u$ and $v$ are in
  $\hat S$. By~\Cref{lemma-conp-reduc-KW}-\ref{lemma-conp-reduc-KW:last} this
  implies that $D\models \qFive(uv)$ and therefore $\hat S\in\Deltaki(\qFive,D,0)$.

  For the induction step, let $B'$ be the "block" witnessing the membership of
  $S$ into $\Deltaki(q,D',i)$.
\begin{itemize}
\item Assume first that $B'$ is a $A_1$-"block". Let
  $B=\set{u \mid f(u)\in B'}$. By
  \Cref{lemma-conp-reduc-KW}-\ref{lemma-conp-reduc-KW:bijection} $B$ is a
  "block" of $D$. Moreover, it witnesses the membership of $\hat S$ in
  $\Deltaki(\qFive,D,i)$. Indeed, consider an element $b\in B$. Let $b'=f(b)$. By
  hypothesis, $S\cup\set{b'}$ contains a "$k$-set" $T$ in
  $\Deltaki(q,D',i-1)$. By induction $\hat T \in \Deltaki(\qFive,D,i-1)$ and
  $\hat T$ is a
  subset of $\hat S \cup \set{b}$. The result follows.

\item Assume now that $B'$ is a $A_2$-"block". We conclude as above using
  $B=\set{v \mid g(A_2,v)\in B'}$ and
  \Cref{lemma-conp-reduc-KW}-\ref{lemma-conp-reduc-KW:gbijection}-(a).

\item Finally, if $B'$ is a $C$-"block" for some "atom"
  $C\not\in\set{A_1,A_2}$, then $B'$ contains only one element by
  \Cref{lemma-conp-reduc-KW}-\ref{lemma-conp-reduc-KW:gbijection}-(b). Therefore, $S$ was actually in $\Deltaki(q,D',i-1)$ and we can conclude by
  induction hypothesis. \qedhere
\end{itemize}
\end{proof}

\begin{thm}\label{thm-nothp-cqk}
  Let $q$ be a "self-join-free query" such that $q\not\models\HP$. Then
  $\certain(q)$  cannot be computed by
  $\Cqk(q)$, for any choice of $k$.
\end{thm}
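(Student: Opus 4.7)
The plan is to derive a contradiction with \Cref{thm-q5-cqk} by invoking the reduction of \Cref{proposition-q5toq}, mirroring exactly the proof of \Cref{thm-q5-cqk} in the previous subsection.

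Suppose, for contradiction, that there exists some $k$ for which $\Cqk(q)$ computes $\certain(q)$. A brief monotonicity observation lets us assume $k \ge 2$: every $k$-set is also a $(k{+}1)$-set, and both the initial seed of the fixpoint and the inductive rule are preserved under this inclusion, so membership of the empty set in the fixpoint of $\Cqk[k]$ implies its membership in the fixpoint of $\Cqk[k{+}1]$. Combined with \Cref{prop-underapprox}, this means that if $\Cqk[1](q)$ already equals $\certain(q)$ then so does $\Cqk[2](q)$, and we may restrict attention to $k \ge 2$ from now on.

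Now pick an arbitrary database $D$ over the signature of $\qFive$ with $D \models \certain(\qFive)$, and let $D'$ be the database over the signature of $q$ produced by \Cref{proposition-q5toq} applied to $D$. Item~\eqref{proposition-q5toq:1} of that proposition yields $D' \models \certain(q)$, and the standing hypothesis turns this into $D' \models \Cqk(q)$. Item~\eqref{proposition-q5toq:2} then concludes $D \models \Cqk(\qFive)$. Since $\Cqk(\qFive)$ is always contained in $\certain(\qFive)$ by \Cref{prop-underapprox}, we have just established the reverse inclusion, so $\Cqk(\qFive)$ computes $\certain(\qFive)$, contradicting \Cref{thm-q5-cqk}.

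The main obstacle of this argument does not lie in the deduction above, which is a routine chase of implications, but in the preparatory \Cref{proposition-q5toq}; and this in turn rests crucially on the Koutris--Wijsen construction recorded as \Cref{lemma-conp-reduc-KW}, which supplies, from a strong cycle in the attack graph of $q$ (guaranteed by $q\models\lnot\HP$ via \Cref{thm-sjf-lowerbound}), the two atoms $A_1,A_2$ of $q$ and the key-preserving mappings $f,g$ required to transport the $\qFive$ instance into a $q$ instance while controlling how the fixpoint propagates.
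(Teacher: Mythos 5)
Your proposal is correct and follows exactly the paper's route: the paper proves this theorem by the same chase of implications as in the proof of \Cref{thm-q5-cqk}, substituting \Cref{proposition-q5toq} for \Cref{theorem-UnconditionX+-Lowerbound-sjf} to contradict \Cref{thm-q5-cqk}. Your added monotonicity remark reducing to $k\ge 2$ is a small but legitimate tidying-up of a detail the paper leaves implicit.
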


The proof of \Cref{thm-nothp-cqk} is identical to the proof of \Cref{thm-q5-cqk} using \Cref{proposition-q5toq}
instead of \Cref{theorem-UnconditionX+-Lowerbound-sjf}.


\section{Path queries}
\label{section-path}

The "dichotomy conjecture" has also been shown to hold for "path
  queries"~\cite{DBLP:conf/pods/KoutrisOW21}.  In this section we show that the \Cqk~algorithm works for \ptime solvable path queries and if \Cqk~does not compute $\certain(q)$ for $k= |q|$ then the problem is \conp-hard.
  
  For this section, assume that the "relational signature" $\sigma$ contains
only symbols of arity two and that the set $\Gamma$ of constraints assigns to
each symbol $R$ of $\sigma$ its first component as the "primary key".
Recall that a "path query" is a "Boolean conjunctive query" of the form
$R_1(\underline x_1\,x_2)\land R_2(\underline x_2\, x_3) \land R_3(\underline x_3\, x_4)
\land \cdots \land R_n(\underline x_{n} x_{n+1})$ that may contain "self-join", "ie", $R_i = R_j$ for some $i\ne j$.
Note that a "path query" can be described by a word over the alphabet of relation names of $\sigma$ 
(\eg, the word describing $q$ as $R_1 \dotsb R_n$). 
For simplicity, we will henceforth blur the distinction between "path queries" and words over $\sigma$.

\AP
Following~\cite{DBLP:conf/pods/KoutrisOW21} we define the language $\intro*\Lq$ as the
regular language defined by the following finite state automaton $\intro*\aut$ with
$\epsilon$-transitions\footnote{An $\epsilon$-transition in an automata makes a transition from one state to the other without reading any symbol.}
 (we use $s,t,\ldots$ to denote words over $\sigma$). The
set of states of $\aut$ is the set of all prefixes of $q$, including the
empty prefix $\epsilon$, which is the initial state. There is only one
accepting state, which is $q$. There is a transition reading $R$ from state $s$
to the state $sR$. Moreover, there is an $\epsilon$-transition in $\aut$
from any state $sR$ to any state $tR$ such that $tR$ is a prefix of $s$.

We say that the query $q$ satisfies
\AP$\intro*\FactorCond $ and write $q \models \FactorCond$ if
$q$ is a factor of all the words in the language $\Lq$.

The dichotomy result of~\cite{DBLP:conf/pods/KoutrisOW21} can be formulated as
follows\footnote{\cite{DBLP:conf/pods/KoutrisOW21} provides a much
  finer `tetrachotomy' between \fo, \nl-complete, \ptime-complete and
  \conp-complete. In this section we restrict our attention to the dichotomy between
  \ptime and \conp-complete.}:

\begin{thmC}[{\cite[Theorem~3.2]{DBLP:conf/pods/KoutrisOW21}}]\label{dichotomy-paths}
 Let $q$ be a "path query". If $q \models\FactorCond$, then $\certain(q)$ can be evaluated in \ptime; otherwise,
 $\certain(q)$ is \conp-complete.
\end{thmC}

Thus, $\certain(\qTwop)$ and $\certain(\qThreep)$ for the queries described in \Cref{first-example} are in \ptime and \conp-complete respectively (refer to~\cite{DBLP:conf/pods/KoutrisOW21} for detailed explanation). As in the "self-join-free" case we will show that, for "path queries", there is some
$k$ such that $\Cqk(q)$ captures $\certain(q)$ iff $q \models\FactorCond$. In view of \Cref{dichotomy-paths} this implies that when $\Cqk(q)$ fails to
capture $\certain(q)$, then $\certain(q)$ is \conp-complete.

\subsection{Tractable path queries}\label{section-tracatble-path-queries}

In this section we show the first part of our result: if the "path query" $q$
satisfies $\FactorCond$, then $\certain(q)$ can be computed via \Cqk.

\begin{thm}\label{cqk-path-queries}
  Let $q$ be a "path query" of length $k$. If $q \models \FactorCond$, then $\certain(q)=\Cqk(q)$.
\end{thm}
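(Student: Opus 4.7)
The plan is to establish the nontrivial direction $\certain(q)\subseteq \Cqk(q)$ (the other direction being \Cref{prop-underapprox}) by showing that our fixpoint algorithm $\Cqk$ can \emph{simulate} the polynomial-time decision procedure of \cite{DBLP:conf/pods/KoutrisOW21} for certainty of path queries. That algorithm is itself an inflationary fixpoint computation that iteratively derives pairs $(a,s)$, where $a\in\adom(D)$ and $s$ is a state of the automaton $\aut$ (\ie, a prefix of $q$). The informal meaning of deriving $(a,s)$ is that, in every repair, any trace from $a$ that has read the prefix $s$ can be extended to a word in $\Lq$, and hence ---thanks to the factor hypothesis--- contains $q$ as a factor.

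First I would set up a translation from KOW-derivations to $\Delta_k(q,D)$ derivations. For every derived pair $(a,s)$ I would attach a canonical set $S(a,s)$ of facts of $D$ of size at most $k$ (crucially using that $|q|=k$), which records the ``obligations'' still needed to complete a $q$-match when starting from $a$ in state $s$. The intended invariant is
\begin{align*}
  (a,s)\text{ derived by KOW at step }i \quad\Longrightarrow\quad S(a,s)\in\Deltak(q,D)\text{ at some step }j(i).
\end{align*}
The base case concerns derivations corresponding to an accepting run of $\aut$ entirely matched by facts of $D$: the associated set is literally a $q$-solution, so it lies in $\Deltak(q,D)$ by initialization.

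Second, I would handle the inductive step by inspecting the two kinds of transitions of $\aut$. For a symbol transition $s\xrightarrow{R} sR$, the update of $S(a,s)$ reflects committing to one more $R$-fact in the path; this translates into an application of the block-based rule of $\Cqk$ on the block of that $R$-fact, using the KOW hypothesis that $(a',sR)$ is derived for every $R$-successor $a'$ of $a$. For an $\epsilon$-transition from $sR$ to $tR$ (where $tR$ is a prefix of $s$), the simulation must ``rewind'' the state while reusing previously derived sets: this is the step where the factor condition is used, as it guarantees that no genuinely new obligations appear and that the associated set can be kept of size at most $k$. Once the accepting state $q$ is reached at every element of some block $B$, the canonical sets $\set{a}$ for $a\in B$ all lie in $\Deltak(q,D)$, which by the block-rule yields $\emptyset\in\Deltak(q,D)$, \ie\ $D\models\Cqk(q)$.

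The main obstacle I expect is the bookkeeping in the $\epsilon$-transition step: a priori the accumulated set of ``pending facts'' along a trace could grow unboundedly, and we must argue that the factor hypothesis forces a reset after at most $k$ steps, so that $|S(a,s)|\le k$ is maintained throughout. This is the only place where the hypothesis ``$q$ is a factor of every word in $\Lq$'' is truly used, and it is what prevents us from accumulating obligations that exceed the parameter $k$ of our algorithm. Once this size bound is secured, the rest of the simulation is a mechanical translation of KOW's fixpoint into ours.
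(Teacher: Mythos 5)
Your overall strategy is the one the paper follows: prove $\certain(q)\subseteq\Cqk(q)$ by simulating the Koutris--Wijsen fixpoint $\NqD(q,D)$ inside $\Deltak(q,D)$, with the factor hypothesis entering exactly at the $\epsilon$-transition (rewinding) step. However, there is a genuine gap in your invariant. You attach a \emph{single} canonical set $S(a,s)$ to each derived pair $(a,s)$, but no such canonical set can exist: the set of facts that must accompany a pair $(a,s)$ in $\Deltak(q,D)$ is the set of facts \emph{already laid down} along a partial match reaching $a$ in state $s$, and this depends on the whole path, not just on its endpoint and the automaton state. Concretely, in the symbol-transition step you want to derive $S(c,s)\in\Deltak(q,D)$ using the block of $R$-facts with key $c$, which requires that for every fact $R(\underline{c}\,b)$ some subset of $S(c,s)\cup\{R(\underline{c}\,b)\}$ already be in $\Deltak(q,D)$; the natural witness is $S(b,sR)=S(c,s)\cup\{R(\underline{c}\,b)\}$, but this forces $S(b,sR)$ to depend on $c$ and on the path before $c$, and two distinct elements $c,c'$ with $R$-edges into the same $b$ would require $S(b,sR)$ to be two different sets. (Your description of $S(a,s)$ as recording obligations \emph{still needed} also points in the wrong direction: sets in $\Deltak(q,D)$ must consist of facts already committed to, since the invariant is that every repair containing them satisfies $q$.)

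The paper repairs this by quantifying the invariant universally over all partial matches: it defines a \emph{valid path} $\Pi$ (a chained sequence of facts forming a partial repair whose trace is a prefix of $q$, hence automatically of size at most $k$) and proves that if $\langle c,s\rangle$ is derived at step $i$ then \emph{every} non-empty valid path $\Pi$ ending at $c$ whose trace admits a run of $\aut$ ending in state $s$ satisfies $S_\Pi\in\Deltak(q,D)$ at step $i$. With this formulation the size bound holds by construction, and the factor hypothesis is used (via a combinatorial lemma on prefixes of $q$, Lemma~\ref{lemma-suffix}) to show that when an $\epsilon$-transition rewinds the state from $\trace(\Pi)$ to a shorter prefix $s_\ell$, the word $s_\ell$ is a \emph{suffix} of $\trace(\Pi)$, so the relevant sub-path is a suffix of $\Pi$ and its fact set only shrinks --- rather than your framing of the hypothesis as bounding the growth of an obligation set. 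Your instinct about where the hypothesis is needed is right, but without the universally quantified, path-indexed invariant the simulation does not go through.
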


The rest of this section is devoted to the proof of
\Cref{cqk-path-queries}.
We will make use of the following fixpoint computation introduced
by~\cite[Fig.~5]{DBLP:conf/pods/KoutrisOW21}. 
\knowledgenewrobustcmd{\NqD}{\cmdkl{N}}
\knowledgenewrobustcmd{\NqDi}{\cmdkl{N}}
\knowledgenewrobustcmd{\Nq}{\cmdkl{N}}
\AP
For a fixed "path query" $q$ and
"database" $D$, let $\intro*\NqD(q,D)$ be the set of pairs of the form 
$\langle c,s\rangle$, where $c\in\adom(D)$ and $s$ is a prefix of $q$, computed
via the following fixpoint algorithm.

\begin{description}
  \item[Initialization Step] $\NqD(q,D) \leftarrow \{ \langle c,q\rangle \mid c \in \adom(D)\}$
  \item[Iterative Step] If $s$ is a prefix of $q$, add $\langle c,s \rangle$ to $\NqD(q,D)$ if one of the following holds:
    \begin{enumerate}
      \item
      \label{item-ForwardN} $sR$ is a prefix of $q$ and
      there is a "fact" $R(\underline{c}~a)$ of $D$ such that for every "fact" 
      $R(\underline{c}~b)$ of $D$ we have $\langle b, sR \rangle \in \NqD(q,D)$;
      \item
      \label{item-BackwardN} 
      There is an $\epsilon$-transition from $s$ to $t$ in $\aut$ 
      and there is a "fact" $R(\underline{c}~a)$ of $D$ such that for every "fact" $R(\underline{c}~b)$ of $D$ we have $\tup{b,tR} \in \NqD(q,D)$.
      \end{enumerate}
\end{description}

\AP
Let $\intro*\Nq(q)$ be the set of all "databases" $D$ such that there exists $c\in \adom(D)$ with $\langle c, \epsilon \rangle \in \NqD(q,D)$.

\begin{lemC}[{\cite[(proof of) Lemma 6.4]{DBLP:conf/pods/KoutrisOW21}}]
\label{lemma-Paths P Time}
For every "path query" $q$, if $q \models \FactorCond$, then $\certain(q) = \Nq(q)$.
\end{lemC}

In view of \Cref{lemma-Paths P Time}, 
\Cref{cqk-path-queries} is a direct consequence of the following proposition.

\begin{prop}
\label{prop-N is subseteq of Cqk}
For every "path query" $q$ of length $k$ such that $q \models \FactorCond$, we have $\Nq(q) = \Cqk(q)$.
\end{prop}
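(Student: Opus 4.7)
The plan is to show the two inclusions $\Cqk(q) \subseteq \Nq(q)$ and $\Nq(q) \subseteq \Cqk(q)$ separately. The first inclusion follows directly from earlier results: by \Cref{prop-underapprox}, $\Cqk(q) \subseteq \certain(q)$, and by \Cref{lemma-Paths P Time}, under the factor hypothesis, $\certain(q) = \Nq(q)$. Hence $\Cqk(q) \subseteq \certain(q) = \Nq(q)$.

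For the harder inclusion $\Nq(q) \subseteq \Cqk(q)$, I would simulate the fixpoint of $\NqD(q,D)$ inside that of $\Deltak(q,D)$. Writing $q = A_1 \cdots A_k$ with $A_j = R_j(\underline{x_{j-1}}\, x_j)$, I would prove by induction on the order in which pairs are added to $\NqD(q,D)$ the following invariant: for every $\langle c, s\rangle \in \NqD(q,D)$ with $|s| = i$, and every partial valuation $\mu$ of $\{x_0, \dots, x_i\}$ such that $\mu(x_i) = c$ and $\mu(A_j) \in D$ for every $1 \le j \le i$, the set $\{\mu(A_1), \dots, \mu(A_i)\}$ belongs to $\Deltak(q,D)$. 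Applied to $\langle c, \epsilon\rangle \in \NqD(q,D)$ (where $i = 0$, $\mu$ is empty, and the associated set is empty), this yields $\emptyset \in \Deltak(q,D)$, hence $D \models \Cqk(q)$.

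The base case is the initialization $\langle c, q\rangle$, where $i = k$: any valuation $\mu$ satisfying the hypotheses is then a full match of $q$, so $\{\mu(A_1), \dots, \mu(A_k)\}$ is a "solution" to $q$ in $D$ and lies in the initial $\Deltak(q,D)$. The inductive step for the forward rule of $\NqD$ uses the "block" $B$ of $R$-facts with key $c$ (where $R$ is the letter following $s$ in $q$) as witness for the $\Deltak$-derivation: for each $u = R(\underline{c}\, b) \in B$, extending $\mu$ to $\mu'$ by setting $\mu'(x_{i+1}) = b$ and applying the inductive hypothesis to $\langle b, sR\rangle$ and $\mu'$ yields $\{\mu(A_1), \dots, \mu(A_i), u\} \in \Deltak(q,D)$.

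The inductive step for the $\epsilon$-transition rule is the main obstacle. Here the derivation of $\langle c, s\rangle$ uses an $\epsilon$-transition from $s$ to a strictly shorter prefix $t$ of $q$, and the natural attempt to apply the inductive hypothesis to $\langle b, tR\rangle$ (with $R = R_{|t|+1}$) requires constructing a partial valuation $\mu'$ on $\{x_0, \dots, x_{|t|+1}\}$ whose associated set of facts is a subset of $\{\mu(A_1), \dots, \mu(A_i), u\}$, where $u = R(\underline{c}\, b)$. Such a $\mu'$ must satisfy $\mu'(x_{|t|}) = c$, but the given $\mu$ need not satisfy $\mu(x_{|t|}) = \mu(x_i) = c$ in general. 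Overcoming this will require exploiting the ``loop'' structure of $q$ encoded by the $\epsilon$-transition, together with the factor hypothesis on $\Lq$ which guarantees the existence of the required matching valuation. I expect this case to be the most delicate and technically involved part of the proof.
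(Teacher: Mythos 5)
Your first inclusion and your handling of the forward rule coincide with the paper's proof. But the gap you flag at the end is a genuine one: the $\epsilon$-transition case is not a technical detail to be filled in later, it is the only place where the hypothesis that $q$ is a factor of every word of $\Lq$ enters, and without it the inclusion $\Nq(q)\subseteq\Cqk(q)$ is simply false (it fails for the \conp-hard path queries). The paper resolves this case with two ingredients that are absent from your proposal. First, the invariant is stated differently: instead of tying the accumulated facts to a valuation of $x_0,\dots,x_i$ with $i=|s|$, the paper's \Cref{claim:Induction for N is subseteq of Cqk} ranges over ``valid paths'' $\Pi$ (chained sequences of facts forming a partial repair) whose trace may be \emph{any} prefix $s_l$ of $q$ admitting a run of $\aut$ ending in the state $s_\ell$ (written $\trace(\Pi)\simtrace s_\ell$). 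Decoupling the length of the matched prefix from the length of the state is exactly what makes the $\epsilon$-case tractable. Second, \Cref{lemma-suffix} extracts the combinatorial content of the factor hypothesis: for any prefix $s_1PRs_2PR'$ of $q$ with $R\neq R'$, the word $s_1P$ is a suffix of $s_1PRs_2P$. Hence, when $\langle c,s_\ell\rangle$ is derived through an $\epsilon$-transition, the state $s_\ell$ is a suffix of the trace of the current path, so one truncates $\Pi$ to its suffix $\Pi'$ with $\trace(\Pi')=s_\ell$; since $S_{\Pi'}\subseteq S_{\Pi}$ and $\Deltak(q,D)$ is upward closed among "$k$-sets", it suffices to derive $S_{\Pi'}$, which reduces to the forward case. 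Your appeal to the ``loop structure'' and the factor hypothesis points in roughly this direction, but no argument is given, and your invariant as formulated cannot express the truncation step.

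A secondary issue: your induction hypothesis quantifies over all valuations with $\mu(A_j)\in D$ and imposes no consistency condition on the resulting set of facts, whereas the paper restricts to valid paths, whose fact sets are "partial repairs". That restriction is load-bearing even in the forward case: if the block of $R$-facts with key $c$ already has a representative $u_i$ in $\Pi$, the paper must reuse $u_i$ (so that $S_{\Pi'}=S_{\Pi}$ and the inductive hypothesis applies directly) rather than append a different fact of that block, which would leave the class of valid paths. As stated, your hypothesis is strictly stronger than what the paper proves, and you would owe a separate justification that it holds at every stage of the induction.
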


 Note that  $\Cqk(q) \subseteq \Nq(q)$ follows from 
  $\Cqk(q) \subseteq \certain(q)$ (\Cref{prop-underapprox})
  combined with
  $\certain(q) = \Nq(q)$ (\Cref{lemma-Paths P Time}).
  So we are left with proving $\Nq(q) \subseteq \Cqk(q)$. Let $D\in \Nq(q)$. We will prove that $D\in \Cqk(q)$.
\medskip

\AP
  For all $l \ge 0$ let $s_l$ be the prefix of $q$ of length $l$ ("ie",
  $s_0=\epsilon$ and $s_k=q$). For every "database" $D$ and "fact"
  $u=R(\underline{a}~b)$ in $D$, let us define $\trace(u) \defeq R$, $\key(u) = a$,
  and $\last(u) = b$. For a sequence of (possibly repeating) "facts"
  $\Pi = u_1,\dotsc, u_l$ of a "database" $D$, we define
  $\intro*\trace(\Pi) \defeq \trace(u_1) \dotsb \trace(u_l) \in \sigma^*$ and, if $\Pi$ is not the empty sequence, we define $\intro*\last(\Pi) \defeq \last(u_l)$. Also we let
  $S_{\Pi} = \{ u_1,\dotsc, u_l\}$ be the \emph{set} of "facts" in the
  sequence. 
  \AP
  Further, $\Pi$ is called a ""valid path"" if the following conditions hold:
  \begin{enumerate}[(i)]
    \item $S_{\Pi}$ is a "partial repair" of $D$, 
    \item for all $i<l$ we have
    $\last(u_i) = key(u_{i+1})$, and 
    \item $\trace(\Pi)$ is a prefix of $q$. 
  \end{enumerate}
  In particular, for any "valid path" $\Pi$ of length $l$, we have
    $\trace(\Pi) = s_l$.
  \knowledgenewrobustcmd{\simtrace}{\mathrel{\cmdkl{\approx}}}
  \AP
  Moreover, for any prefix $s_L$ of $q$ we write
  {$\trace(\Pi) \intro*\simtrace s_L$} if there exists a run of the automaton
  $\aut$ on $\trace(\Pi)$ ending in state $s_L$.

  \bigskip
  
  \AP
  For any $D\in \Nq(q)$, let $\intro*\NqDi(q,D,i)$ and $\reintro*\Deltaki(q,D,i)$ be the fixpoint
  computations of $\NqD(q,D)$ and $\Deltak(q,D)$ at step $i$ respectively. To
  prove that $D\in \Cqk(q)$, we will use the following claim:
  \begin{clm}
\label{claim:Induction for N is subseteq of Cqk}
    For all $i\geq 0$, For $c\in \adom(D)$ and for all non-empty prefix $s_l$ of $q$ if
    $\tup{c,s_l}\in \NqDi(q,D,i)$ then for all non-empty "valid path" $\Pi$ where
    $\last(\Pi) = c$ and $\trace(\Pi) \simtrace s_l$, we have
    $S_\Pi \in \Deltaki(q,D,i)$.
  \end{clm}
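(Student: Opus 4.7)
\emph{Plan.} I will prove the claim by induction on $i$, in each case using the block $B$ that witnesses a derivation step in $\NqDi$ as the very block that witnesses the corresponding derivation step in $\Deltaki$.

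For the base case $i=0$, note that the initialization of $\NqDi$ gives us $\NqDi(q,D,0) = \{\tup{c,q} : c \in \adom(D)\}$, so necessarily $s_l = q$. Any valid path $\Pi$ satisfying $\trace(\Pi) \simtrace q$ has $\trace(\Pi) \in \Lq$, so by the factor hypothesis $q$ is a factor of $\trace(\Pi)$; combined with $\trace(\Pi)$ being a prefix of $q$ (from validity), this forces $\trace(\Pi) = q$. Then $\Pi$ has length exactly $k$ and its constituent facts directly witness a match of $q$, so $S_\Pi \models q$ and hence $S_\Pi \in \Deltaki(q,D,0)$.

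For the inductive step, assume the claim holds at stage $i$ and suppose $\tup{c,s_l}$ was added to $\NqDi(q,D,i+1)$ via one of the two derivation rules. In both cases there exist a block $B$ of $R$-facts keyed by $c$ and a prefix $s'$ of $q$ ---namely $s' = s_l R$ in the forward case (\ref{item-ForwardN}), or $s' = tR$ where $s_l$ has an $\epsilon$-transition to $t$ in the backward case (\ref{item-BackwardN})--- such that $\tup{b,s'} \in \NqDi(q,D,i)$ for every $R(\underline{c}~b) \in B$. Given a valid path $\Pi$ with $\last(\Pi) = c$ and $\trace(\Pi) \simtrace s_l$, I take this very block $B$ as the witness for $S_\Pi \in \Deltaki(q,D,i+1)$: for each $u = R(\underline{c}~b) \in B$ I must produce an $S' \subseteq S_\Pi \cup \{u\}$ with $S' \in \Deltaki(q,D,i)$. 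The natural candidate is $\Pi' = \Pi \cdot u$, since $\trace(\Pi') \simtrace s'$ follows by extending the existing run of $\aut$ on $\trace(\Pi)$ by the symbol $R$ (preceded, in the backward case, by the $\epsilon$-transition $s_l \to t$); when $\Pi \cdot u$ is itself a valid path, the inductive hypothesis yields $S_{\Pi'} \in \Deltaki(q,D,i)$ immediately.

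The main obstacle is that $\Pi \cdot u$ may fail to be a valid path, and a fallback construction must be given in the corner cases. If $\Pi$ already has length $k$ (equivalently $|S_\Pi \cup \{u\}| > k$), then $\trace(\Pi) = q$ and $S_\Pi$ already matches $q$, so $S' = S_\Pi \in \Deltaki(q,D,0)$ works. If $u$ is $\simblock$-equivalent to some $u_j$ appearing in $\Pi$ ---breaking the partial-repair condition of a valid path--- I replace $u_j$ by $u$ in an appropriate truncation of $\Pi$ so that the resulting sequence is again a valid path whose fact-set lies in $S_\Pi \cup \{u\}$. If the concatenated trace $\trace(\Pi) \cdot R$ fails to be a prefix of $q$ because $\trace(\Pi) \simtrace s_l$ was realized through $\epsilon$-transitions (so the physical length of $\Pi$ exceeds $l$), I pass to a suffix of $\Pi$ whose trace equals $s_l$ and extend that with $u$; the factor hypothesis together with the structure of $\aut$'s $\epsilon$-transitions constrains when such a suffix exists and what it looks like. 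Routing each of these corner cases through the inductive hypothesis, while keeping track of the containment $S_{\Pi'} \subseteq S_\Pi \cup \{u\}$, is the technical heart of the argument.
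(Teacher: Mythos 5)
Your overall plan --- induction on $i$, using the block of $R$-facts keyed by $c$ as the $\Deltaki$-witness and extending $\Pi$ by facts of that block --- is exactly the paper's strategy, and your base case is fine (the paper instead argues via the absence of $\epsilon$-transitions into the state $q$, but your route through the factor hypothesis also works). However, the two hard corner cases, which you yourself flag as ``the technical heart'', are left unresolved, and one of your sketched fixes points the wrong way. The first gap: when $\trace(\Pi)\cdot R$ is not a prefix of $q$, you propose to ``pass to a suffix of $\Pi$ whose trace equals'' the automaton state; the entire difficulty is to prove that such a suffix \emph{exists}, i.e.\ that the state $s_\ell$ is a suffix of the word $\trace(\Pi)$. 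This is \Cref{lemma-suffix} in the paper: writing $q=s_1PRs_2PR't$ with $s_\ell=s_1P$, one checks that $w=s_1PRs_2PRs_2PR't$ belongs to $\Lq$, and the hypothesis that $q$ is a factor of $w$ forces, by a periodicity argument, either $R=R'$ or the desired suffix property. This is the \emph{only} place the factor hypothesis enters the inductive step, so saying that it ``constrains when such a suffix exists'' is an acknowledgment of the gap rather than a proof.

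The second gap is the case where the block $B$ already meets $\Pi$. Your fix --- ``replace $u_j$ by $u$ in an appropriate truncation of $\Pi$'' --- does not work: replacing $u_j$ by a different fact $u\simblock u_j$ breaks the chaining condition $\last(u_j)=\key(u_{j+1})$, and after truncating you lose control of which state of $\aut$ the truncated trace reaches, so the inductive hypothesis no longer applies. The paper's move is different and simpler: a "valid path" is a \emph{sequence} that may repeat facts, so one appends the already-present fact $u_j$ itself, getting $\Pi'=\Pi\cdot u_j$ with $S_{\Pi'}=S_\Pi$; since $u_j=R(\underline{c}\;b_j)$ lies in $B$, the derivation rule gives $\tup{b_j,s'}\in\NqDi(q,D,i)$, and the inductive hypothesis applied to $\Pi'$ yields $S_\Pi\in\Deltaki(q,D,i)$ directly --- no block witness is needed in this sub-case at all. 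Until these two points are supplied, the proposal is a correct outline but not a proof.
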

  Let us show that the claim implies $D \in \Cqk(q)$.  As $D\in \Nq(q)$, there
  exists $c\in \adom(D)$ such that $\tup{c,\epsilon} \in \NqDi(q,D,m)$ for some
  step $m$.  But note that $\tup{c,\epsilon} \in \NqDi(q,D,m)$ can only be produced
  by application of Rule \ref{item-ForwardN} in the \textit{Iteration step}
  (Rule \ref{item-BackwardN} is not possible since $\epsilon$-transitions do
  not start at the state $\epsilon$).  This implies that if $R$ is the first
  relation occurring in $q$ then there exists a "fact" of the form
  $R(\underline{c}~a)$ and for all "facts" of the form $R(\underline{c}~b)$ in
  $D$ we have $\langle b, R\rangle \in \NqDi(q,D,m-1)$.  For each such $b$ we can
  apply the claim with the "valid path" $\Pi = R(\underline{c}~b)$, obtaining
  $R(\underline{c}~b) \in \Deltaki(q,D,m-1)$ for every
  $R(\underline{c}~b)$.  Hence, $\emptyset \in \Deltaki(q,D,m)$ which implies
  $D\in \Cqk(q)$.

The proof of the claim will make use of the following consequence of $q \models \FactorCond$.

\begin{lem}\label{lemma-suffix}
  Let $q$ be a "path query" such that $q \models \FactorCond$. Then for any prefix of $q$ of the form $s_1PRs_2PR'$ where $s_1,s_2\in \sigma^*$ and $R\ne R'$, we have that $s_1P$ is a suffix of $s_1PRs_2P$.
  \end{lem}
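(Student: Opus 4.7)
I would prove this lemma by contrapositive: assume that $s_1P$ is \emph{not} a suffix of $s_1PRs_2P$ and construct a word $w\in\Lq$ that does not contain $q$ as a factor, contradicting the hypothesis. Write $q = s_1PRs_2PR'\alpha$ for some $\alpha\in\sigma^*$, and assume $P$ is non-empty with last letter $X$ (so $P = P'X$); the edge case $P = \epsilon$ would need to be treated separately, since the construction below uses an $\epsilon$-transition that requires $P\neq\epsilon$.

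\textbf{Constructing $w$.} Define
$$w \defeq s_1PRs_2P \cdot Rs_2PR'\alpha.$$
I would show $w\in\Lq$ by exhibiting an accepting run of $\aut$: starting from $\epsilon$, read the letters of $s_1PRs_2P$ to reach state $s_1PRs_2P$; take the $\epsilon$-transition from $s_1PRs_2P = (s_1PRs_2P')X$ to $s_1P = (s_1P')X$---this is legal because both states end in $X$ and $s_1P$ is a prefix of $s_1PRs_2P'$---and then read the remaining letters $Rs_2PR'\alpha$ to reach the accepting state $q$. Thus $w\in\Lq$, and by the hypothesis $q$ appears as a factor of $w$ at some position $p$.

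\textbf{Ruling out occurrences of $q$ in $w$.} Set $u \defeq PRs_2$ and $v \defeq PR'\alpha$, so that $w = s_1uuv$ and $q = s_1uv$, with $|w|-|q|=|u|$. Any occurrence of $q$ in $w$ starts at some $p\in[0,|u|]$. The case $p=0$ is immediate: both $u$ and $v$ begin with $P$ but disagree at position $|P|$ (with $u[|P|]=R$ and $v[|P|]=R'$, where $R\ne R'$), so neither is a prefix of the other; yet an occurrence at position $0$ would require $v$ to be a prefix of $uv$, hence either $v$ prefix of $u$ or $u$ prefix of $v$---a contradiction. The case $p=|u|$ places $q$ as the length-$|q|$ suffix of $w$; unwinding this alignment character by character (splitting on whether $|s_1|\le|u|$ or $|s_1|>|u|$) shows that the occurrence is equivalent to the condition ``$s_1P$ is a suffix of $s_1uP = s_1PRs_2P$'', directly contradicting our assumption.

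\textbf{The main obstacle} lies in excluding the intermediate positions $p\in(0,|u|)$. For such $p$, the $v$-part of the occurrence (of length $|v|$, starting at position $p+|s_1|+|u|$ in $w$) straddles the second copy of $u$ and the final $v$ of $w$. The plan is to track the distinguished letter $R'$ (at position $|s_1u|+|P|$ of $q$) and argue that it can only align with the explicit $R'$ in $w$ at position $|s_1uuP|$, forcing $p=|u|$; the delicate point is that $R'$ may \emph{a priori} occur inside $s_1$, $P$, $s_2$, or $\alpha$, so additional care is required. A promising strengthening is to consider the family $w_k \defeq s_1u^{k+1}v\in\Lq$ (obtained by iterating the $\epsilon$-transition $k$ times and all belonging to $\Lq$ by the same run argument), using longer $k$ to rule out spurious alignments by periodicity considerations combined with $R\ne R'$.
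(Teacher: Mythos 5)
Your construction coincides with the paper's: you build $w = s_1PRs_2P\cdot Rs_2PR'\alpha$ (the paper writes $q=s_1PRs_2PR't$ and $w=s_1PRs_2PRs_2PR't$), justify $w\in\Lq$ by the same $\epsilon$-transition from state $s_1PRs_2P$ back to state $s_1P$, and reduce the lemma to showing that the only possible occurrence of $q$ in $w$ is as a suffix. (Your worry about $P=\epsilon$ is moot: in this lemma $P$, $R$, $R'$ are single symbols of $\sigma$ --- when the lemma is applied, $P$ is ``the last symbol of $s_\ell$'' --- so $P$ is automatically a nonempty single letter and no separate $\epsilon$-transition analysis is needed.) The problem is that the step you yourself flag as ``the main obstacle'', namely excluding occurrences at the intermediate positions $p\in(0,|u|)$, is precisely the combinatorial content of the lemma, and you do not prove it: what you offer there is a plan (tracking the letter $R'$, or passing to a family $w_k=s_1u^{k+1}v$) rather than an argument. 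As written, the proof is incomplete.

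The paper closes this gap with a short periodicity argument that handles \emph{all} non-suffix positions uniformly, your case $p=0$ included. Writing $w(l)$ for the $l$-th letter of $w$, the factor of $w$ occupying positions $|s_1|+1$ through $|s_1|+2|s_2|+5$ is $(PRs_2)^2P$, which has period $|s_2|+2$; hence $w(l)=w(l+|s_2|+2)$ for every $l$ with $|s_1|<l\le|s_1|+|s_2|+3$. If $q$ occurs in $w$ at a non-suffix position, then the letter $R$ of $q$ (at offset $|s_1|+2$ inside $q$) lands at some position $l$ in this window, while the letter $R'$ of $q$ (at offset $|s_1|+|s_2|+4$) lands at position $l+|s_2|+2$; periodicity then forces $R=R'$, a contradiction. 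Note that this compares two positions of $w$ that the \emph{occurrence of $q$} forces to carry $R$ and $R'$ respectively --- there is no need to locate where $R'$ might accidentally appear inside $s_1$, $P$, $s_2$ or $\alpha$, and no need for longer words $w_k$. If you add this periodicity step (or any complete argument for the intermediate positions), your proof becomes essentially the paper's, up to the direct-versus-contrapositive phrasing.
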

  \begin{proof}
  Assume $q=s_1PRs_2PR't$ and consider the word $w = s_1PRs_2PRs_2PR't$. A
  simple observation shows that $w \in \Lq$. Hence,  by hypothesis, $w$ contains $q$ as factor.   
  Let $w(i)$ denote the symbol of $\sigma$ occurring at the $i$-th position of $w$, for any $1 \leq i \leq |w|$.
   Note that by definition of $w$, for all positions $l$ such that $|s_1| < l\le  |s_1|+|s_2|+3$ we have:
  \begin{align}
     w(l) = w(l+|s_2|+2) \tag{$\star$} \label{hyp:suffix}
  \end{align}
  Let $w(i),w(i+1),\dotsc, w(i+n-1)$ be the factor of $w$ that "matches" $q$, and observe that $1 \leq i \leq |w| - |q| +1  = |s_2|+3$. 
  If $q$ is a suffix of $w$, it follows that $s_1P$ is a suffix of $s_1PRs_2P$ and we are done.
  If $q$ is not a suffix of $w$, then $i < |s_2|+3$. 
  Observe that
  $w((i -1) + |s_1 PR| ) = R$
  and
  $w((i -1) + |s_1 PR s_2 PR'| ) = R'$.
  Hence, setting $l = (i -1) + |s_1 PR| \leq |s_1| + |s_2| + 3$, we have
  $w(l) = R$ and
  $w(l+ |s_2|+2) = R'$. But then by \eqref{hyp:suffix} we would obtain $R=R'$, which is in contradiction with our hypothesis.
  \end{proof}

\begin{proof}[Proof of Claim \ref{claim:Induction for N is subseteq of Cqk}]
 The proof is  by induction on $i$. For the
base case $i=0$, we have $\NqDi(q,D,0) = \set{\tup{c,q} : c\in \adom(D)}$.  Note
that $\aut$ has no $\epsilon$-transitions from a prefix of $q$ to
$q$. Hence, for any "valid path" $\Pi=u_1, \dotsc, u_l$ such
that $\trace(\Pi) \simtrace q$ we must have $k=l$ and $\trace(\Pi) = q$. In this case
$(u_1,\dotsc, u_l)$ forms a "solution" to $q$ and hence
$S_\Pi \in \Deltaki(q,D,0)$.

For the induction step, let $\Pi = u_1,\dotsc, u_l$ be any "valid path" such
that $\trace(\Pi) \simtrace s_L$ and $\last(u_l) = c$. Assuming
$\tup{c,s_L} \in \NqDi(q,D,i+1)$, we will prove $S_\Pi \in \Deltaki(q,D,i+1)$. If
$\tup{c,s_L} \in \NqDi(q,D,i)$ then by induction hypothesis we have
$S_\Pi \in \Deltaki(q,D,i)$ and we are done since
$\Deltaki(q,D,i) \subseteq \Deltaki(q,D,i+1)$. So assume that $\tup{c,s_L}$
is newly added into $\NqDi(q,D,i+1)$.

By definition of the iterative step (regardless of which rule is applied),
there is a state $s'$, a partial run of $\aut$ from state $s_L$ to state $s'$
reading $R \in \sigma$, and a "fact" $R(\underline{c}~a)$ of $D$ such that for
every "fact" $R(\underline{c}~b)$ of $D$ we have
$\langle b, s' \rangle \in \NqDi(q,D,i)$.

Let $s_l=\trace(\Pi)$. Since $\Pi \simtrace s_L$, there is a run of $\aut$ on $\trace(\Pi)$ that ends at $s_L$. We consider three cases depending on the successor of $s_l$ in $q$.
\begin{enumerate}
\item Case $s_l = q$. This case is similar to the base case. Since
  $(u_1,\dotsc, u_l)$ forms a "solution" to $q$, we have
  $S_\Pi \in \Deltaki(q,D,0) \subseteq \Deltaki(q,D,i+1)$.
  \item \label{case:sk+1hasR} Case $s_{l+1} = s_lR$.  
  
     Note that if there is already a "fact" of the form $u_i = R(\underline{c}~b)$ in
    $\Pi$ then the new path $\Pi' = u_1,u_2,\dotsc, u_l, u_i$ is also a "valid
    path" where $\last(\Pi') = b$.   Otherwise, for every "fact" $u_i$ in $\Pi$ if $\trace(u_i) = R$ then
    $\key(u_i) \ne c$.  Take any arbitrary "fact" of the form
    $v = R(\underline{c}~b)$ of $D$. The new path
    $\Pi' = u_1,u_2,\dotsc, u_l, v$ is also a "valid path". 
    
    So in both cases we have
    $\last(\Pi') = b$. Also, since there is a run of $\aut$ on
    $\trace(\Pi)$ that ends at $s_L$, there is a run of $\aut$ on
    $\trace(\Pi')$ that ends at $s'$. Hence, $\trace(\Pi') \simtrace s'$.
    
      Thus,
    by induction hypothesis, if there is already a "fact" of the form $u_i = R(\underline{c}~b)$ in
    $\Pi$ then $S_{\Pi'} \in \Deltaki(q,D,i)$. But since $u_i$
    is already present in $\Pi$, we obtain $S_\Pi = S_{\Pi'}$, and therefore
    $S_{\Pi} \in \Deltaki(q,D,i) \subseteq \Deltaki(q,D,i+1)$.
  
  Otherwise,
    by induction hypothesis we have
    $S_{\Pi'} = S_{\Pi} \cup \{R(\underline{c}~b)\} \in
    \Deltaki(q,D,i)$. Since this holds for any "fact" of the form
    $R(\underline{c}~b)$ we obtain, by definition of $\Deltaki(q,D)$, that
    $S_{\Pi} \in \Deltaki(q,D,i+1)$.

%
%
    

  \item Case $s_{l+1} = s_lR'$ for some $R' \ne R$.

    Since $\trace(\Pi) = s_l$, there is a run of $\aut$ on $s_l$
    that ends in state $s_L$. Let $P$ be the last symbol of $s_L$ (since $s_L$ is non-empty by
    assumption). Observe that, by definition of $\aut$, a run on a word cannot
    end at $s_L$ unless the word also ends with $P$;
    therefore, $s_l$ has to end with $P$.  Altogether we have
  $s_{l+1} = \overbrace{\underbrace{wP}_{s_L}R \, w'P}^{s_l}R'$ for some $w,w' \in \sigma^*$ and $R\ne R'$.
  Applying Lemma \ref{lemma-suffix}, we obtain that $s_L$ is a suffix of
  $s_l$.  Let $\Pi' = u_{l-L+1},\dotsc, u_l$ be the suffix of $\Pi$ such that
  $\trace(\Pi') = s_L$. Note that $\Pi'$ is a "valid path" and
  $S_{\Pi'} \subseteq S_{\Pi}$. Hence, it is sufficient to prove that
  $S_{\Pi'} \in \Deltaki(q,D,i+1)$. We are then in the situation of the
  already treated Case~\ref{case:sk+1hasR} above, since $s_{L+1}=s_L R$. Hence,
  $S_{\Pi'} \in \Deltaki(q,D,i+1)$.\qedhere
\end{enumerate}
\end{proof}


\subsection{Inexpressiblity results for path queries}
\label{path-fails-cqk}
In this section we show  our second main result for "path queries": if there
exists a word $w\in\Lq$ such that $q$ is not a factor of $w$, then for all $k$,
$\Cqk$ fails to capture $\certain(q)$.

Together with \Cref{cqk-path-queries} this gives a complete characterization of
when our fixpoint algorithm computes $\certain(q)$ for "path queries" $q$.
Note that from \Cref{dichotomy-paths} we already know that for such queries, $\certain(q)$ is \conp-complete. So
assuming $\ptime\neq \conp$ no polynomial time algorithm can compute
$\certain(q)$. Our result is unconditional but only applies for the \Cqk~algorithm.

\begin{thm}\label{theorem-UnconditionX+-Lowerbound-path}
  Let $q$ be a "path query" such that $q \not\models \FactorCond$. Then for all $k$, $\certain(q)$ is not computed by
  $\Cqk(q)$.
\end{thm}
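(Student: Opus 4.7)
The plan is to follow the same three-step strategy used in Section~\ref{sjf-fails-cqk} for self-join-free queries. First, I would identify a canonical ``hardest'' path query $q_0$ failing the factor condition and show directly, by exhibiting an explicit family $\set{D_k}_{k\ge 1}$, that for every $k$ there is a database $D_k$ with $D_k \models \certain(q_0)$ yet $D_k \not\models \Cqk(q_0)$. A natural candidate is the query $\qThreep$ of Example~\ref{first-example} (known to be \conp-complete), or an even simpler ``minimal'' path query extracted from the hardness reduction of~\cite{DBLP:conf/pods/KoutrisOW21}. The construction of $D_k$ would mirror the spirit of \Cref{prop-Dk in certain}--\Cref{prop-Dk not in Delta(k-2)}: encode a large ``pigeonhole'' gadget that forces every repair to contain a $q_0$-solution, while arranging the blocks so that every $k$-obstruction set remains outside the fixpoint $\Deltak(q_0,D_k)$. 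The existence of a word $w\in\mathcal{L}^\looparrowright(q_0)$ that does not contain $q_0$ as a factor is exactly what provides the local ``escapes'' used to build partial repairs with no $q_0$-match.

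Second, I would establish a reduction lemma analogous to \Cref{theorem-UnconditionX+-Lowerbound-sjf} and \Cref{proposition-q5toq}: for any path query $q$ failing the factor condition, and any database $D$ over the signature of $q_0$, one can construct a database $D'$ over the signature of $q$ such that
\begin{enumerate}
    \item if $D \models \certain(q_0)$ then $D' \models \certain(q)$, and
    \item if $D' \models \Cqk(q)$ then $D \models \Cqk(q_0)$ (where on the right-hand side $k$ is the length of $q_0$).
\end{enumerate}
The construction of $D'$ from $D$ would use the word $w \in \Lq$ witnessing that $q$ is not a factor of all words in $\Lq$: intuitively, $w$ provides a template to ``pad'' each $q_0$-atom of $D$ into a concatenation of $q$-atoms, inserting the extra symbols of $w$ as fresh singleton blocks so that solutions of $q$ in repairs of $D'$ correspond bijectively to solutions of $q_0$ in repairs of $D$. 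This mirrors how the hardness reduction of~\cite{DBLP:conf/pods/KoutrisOW21} embeds $\qThreep$-style instances into $q$.

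Combining the two ingredients exactly as in the proof of \Cref{thm-nothp-cqk} yields the theorem: if $\Cqk(q)$ computed $\certain(q)$ for some $k$, then applying item~(1) of the reduction followed by item~(2) would show $\Cqk(q_0)$ computes $\certain(q_0)$, contradicting the counterexample family built in the first step.

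The main obstacle is item~(2) of the reduction, that is, ensuring that the fixpoint computation of $\Deltak$ on $D'$ can be ``projected'' down to a fixpoint computation on $D$. Unlike the self-join-free case, here the atoms of $q$ may share relational symbols, so the simulation must track how a given $\Deltak$-derivation on $D'$ uses the padding blocks coming from $w$, and show that such a derivation, restricted to the ``real'' $q_0$-facts, still respects the block-based derivation rule of $\Deltak$ on $D$. The fact that the automaton $\aut$ has $\epsilon$-transitions, and that $w$ is specifically chosen so that $q$ does not occur as a factor, should be what allows padding facts to be ``skipped'' in the projection without creating spurious $q_0$-matches. Making this formal will require an induction on the derivation step in $D'$, checking that the witness block used at each step is either a padding block (in which case the induction hypothesis applies directly) or corresponds to a genuine $q_0$-block of $D$.
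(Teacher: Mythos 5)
Your high-level strategy --- reduce an arbitrary failing path query to a fixed hard base case, mirroring Section~\ref{sjf-fails-cqk} --- matches the paper's, but both halves of your plan diverge from what the paper actually does, and each divergence hides a real difficulty. First, the paper does not construct a fresh counterexample family for a canonical \emph{path} query: it reuses the two-atom self-join query $\qFour$ and the databases $\Dn$ already built in \Cref{sec:caseq4}, and the decomposition $q=uTvTw$ with $q$ not a factor of $uTvTvTw$ (from Lemma~5.4 of \cite{DBLP:conf/pods/KoutrisOW21}) is applied to $q$ itself to build, for each $\qFour$-database $D$, a $q$-database $D'$ whose gadgets encode the blocks and the solution-graph cliques of $D$ (\Cref{proposition-UnconditionX+-Lowerbound-path}). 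Your step 1 --- a direct pigeonhole construction for $\qThreep$ or some ``minimal'' path query --- is left entirely unrealized; it is not clear what the canonical $q_0$ should be, nor that a single such query reduces to every failing $q$, and the $\Dn$ construction leans heavily on the triangle structure of $\qFour$'s solution graph (\Cref{rk:properties-solgraph-q4}), which does not transfer to a path query for free.

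Second, and more importantly, your item~(2) concludes $D\models\Cqk(q_0)$ with the plain algorithm, whereas the paper's reduction must conclude $D\models\Cqkp(\qFour)$ with the \emph{extended} algorithm. This is not cosmetic: when one projects a $\Deltak(q,D')$-derivation down to $D$, the witness blocks of $D'$ of the second and third kind correspond to cliques of the solution graph of $D$, not to blocks of $D$, so the ordinary derivation rule of $\Cqk$ cannot be invoked on the $D$ side and one needs the second rule of $\Cqkp$. The whole point of \Cref{thm-X+lowerBound} being stated for $\Cqkp$ rather than $\Cqk$ is to absorb exactly this mismatch. You correctly identify the simulation of the fixpoint as the main obstacle, but the resolution is not to track padding blocks more carefully in a projection between two path queries; it is to change the source of the reduction to $\qFour$ and strengthen the target algorithm to $\Cqkp$, neither of which your plan anticipates.
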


From the existence of a word in $\Lq$ that does not have $q$ as a factor, it follows that
$q$ is of the form $uTvTw$ but $q$ is not a factor of
$uTvTvTw$~\cite[Lemma 5.4]{DBLP:conf/pods/KoutrisOW21}. The proof of
\Cref{theorem-UnconditionX+-Lowerbound-path} is again a reduction to
the case of the query $\qFour$ (refer to~\Cref{thm-X+lowerBound}) using the following proposition.

\begin{prop}
\label{proposition-UnconditionX+-Lowerbound-path}
Let $q$ be a "path query" of the form $uTvTw$ and $q$ is not a factor of $uTvTvTw$.
For every "database" $D$ over the "signature" of $\qFour$ we can construct a "database" $D'$
over the "signature" of $q$ such that:
\begin{enumerate}
\item If $D\models \certain(\qFour)$  then $D'\models \certain(q)$.
\item For every $k\ge 2$,  if $D'\models \Cqk(q)$ then
  $D\models \Cqkp(\qFour)$.
\end{enumerate}
\end{prop}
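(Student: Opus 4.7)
The plan is to mirror \Cref{theorem-UnconditionX+-Lowerbound-sjf}, encoding each "fact" of $D$ by a $T$-atom of $D'$ so that every $q$-"solution" in $D'$ is forced to arise from a $\qFour$-"solution" in $D$. Since $q$ has exactly two occurrences of $T$, a natural blueprint is the following: represent each "block" $B$ of $D$ by a $T$-"block" of $D'$ containing one $T$-"fact" of the form $T(\underline{e_B}\, c_a)$ per "fact" $a \in B$; for every pair of distinct "facts" $a,b$ (with $a \not\simblock b$) such that $D \models \qFour(ab)$, introduce a shared "block" of $D'$ carrying the key of the second $T$-atom, indexed by the maximal clique of $\{a,b\}$ in $\solgraph$ (mirroring the $S_1$-"block" treatment of \Cref{theorem-UnconditionX+-Lowerbound-sjf}); for every "self-loop" $\qFour(aa)$ add a singleton witness of the same kind. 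The subwords $u$, $v$ and $w$ are realized as chains of singleton-"block" "facts" linking the appropriate endpoints, so that they lie in every "repair" of $D'$ and simply transmit the connection. The hypothesis that $q$ is not a factor of $uTvTvTw$ is precisely what ensures that no $q$-"solution" in $D'$ can be fabricated by matching the two $T$-atoms through the same encoded "fact": such a match would exhibit $q$ as a factor of $uTvTvTw$, a contradiction.

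For Item~(1), every "repair" $r'$ of $D'$ selects exactly one $T$-"fact" per $T$-"block", which induces a "repair" $r$ of $D$ through the encoding. By hypothesis there exist "facts" $a, b \in r$ with $r \models \qFour(ab)$ (possibly $a = b$); since the $u$-, $v$- and $w$-gadgets consist only of singleton "blocks", they lie entirely in $r'$, and together with the two $T$-"facts" encoding $a$ and $b$ they form a $q$-"solution" in $r'$.

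For Item~(2) we proceed by induction on the fixpoint stage, following the template of the proof of \Cref{theorem-UnconditionX+-Lowerbound-sjf}. For any $k$-set $S \in \Deltaki(q, D', i)$, define $\tilde S$ to be the set of "facts" of $D$ whose $T$-encoding appears in $S$, and prove $\tilde S \in \Deltakpi(\qFour, D, i+1)$. In the base case ($i = 0$), any $q$-"solution" in $D'$ traverses two $T$-"facts" whose encoded $D$-"facts" form a $\qFour$-"solution" in $D$, so $\tilde S$ already contains such a "solution". In the inductive step we case on the "block" $B'$ of $D'$ witnessing the derivation: a $T$-"block" directly translates to the first rule of $\Cqkp(\qFour)$ applied to the corresponding "block" of $D$; a singleton "block" internal to one of the $u$-, $v$- or $w$-gadgets propagates the derivation trivially; finally a shared key-side "block" between the two $T$-atoms (the analogue of the $S_1$-"block" case) invokes the second rule of $\Cqkp(\qFour)$ applied to the associated "fact" $a \in D$, using the clique structure of $\solgraph$ from \Cref{rk:properties-solgraph-q4} to account for the at most three neighbours of $a$.

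The main obstacle will be proving the path-query analogue of \Cref{claim-R-block-only}: whenever a shared key-side "block" witnesses $S \in \Deltaki(q, D', i)$, each companion $T$-"fact" in the clique of $a$ in $\solgraph$ admits a slightly-delayed membership in $\Deltaki(q, D', i+O(1))$. This is where the non-factor hypothesis is used in an essential way: it rules out any $q$-"solution" in $D'$ that matches the two $T$-atoms through the same encoded $D$-"fact", and hence prevents spurious derivations in $\Cqk(q)$ that $\Cqkp(\qFour)$ could not simulate. The rest of the argument is a careful bookkeeping exercise that closely parallels the self-join-free proof.
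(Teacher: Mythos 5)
Your blueprint transplants the $\qFive$ construction too literally, and the resulting gadget cannot work for a path query. In \Cref{theorem-UnconditionX+-Lowerbound-sjf} the selector is the $S_1$-block $\set{u^v_{S_1}, v^u_{S_1}}$, and it selects because the second atom of $\qFive$ carries a \emph{back-reference}: its value must equal the key of the first atom, so choosing $u^v_{S_1}$ completes a solution only together with $u_{R_1}$. A path query $q=uTvTw$ has no back-reference. If, as you propose, the $v$-chains issued from the encodings of $a$ and of $b$ converge onto a shared second $T$-block whose two facts each continue with a $w$-chain of singleton blocks, then \emph{whichever} fact of that block a repair picks, \emph{both} incoming paths read $uTv$ and then $Tw$, hence both complete $q$; the shared block selects nothing, and item (2) collapses (on the hard instances $\Dn$, $\Cqk(q)$ would accept $D'$ without $\Cqkp(\qFour)$ accepting $D$). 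A second structural problem: a fact lying in a triangle of $\solgraph$ belongs to two solution pairs, but its deterministic $v$-chain of singleton facts can end at only one key, so your construction is not even well defined on the databases used in \Cref{thm-X+lowerBound}. Finally, the role you assign to the non-factor hypothesis (ruling out matching both $T$-atoms through the same encoded fact) does not correspond to any configuration that can arise in your construction, so the hypothesis is never actually used.

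The idea you are missing is the \emph{switch} gadget of the paper's proof. There, facts of $D$ are encoded by $A_0$-blocks indexed by a block of $D$ together with a clique of $\solgraph$ (this requires first observing $u\neq\epsilon$, which itself follows from the non-factor hypothesis), and each non-self-loop pair $(B_s, C_t)$ yields a two-fact $T$-block with key $\gamma^{st}$ whose two branches spell $Tw$ and $TvTw$ respectively. This block is reachable from two entry points sitting at different phases of $q$: from $\alpha^{s}$ having already read $uTv$, and from a per-pair $A_0$-block with key $\theta^{s_1s_2t}$ having read only $u$. The $Tw$ branch completes $q$ for the first entry, the $TvTw$ branch completes it for the second, and the hypothesis that $q$ is not a factor of $uTvTvTw$ is exactly what guarantees that the $TvTw$ branch does \emph{not} also complete the first entry. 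This asymmetry is what turns the block into a genuine selector, and it is what drives both the certainty transfer in item (1) and the simulation of $\Cqk(q)$ by $\Cqkp(\qFour)$ (via the second derivation rule and \Cref{rk:properties-solgraph-q4}) in item (2).
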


Before we prove the proposition we show why it implies \Cref{theorem-UnconditionX+-Lowerbound-path}.

\begin{proof}[Proof of \Cref{theorem-UnconditionX+-Lowerbound-path}]
Recall that, since $q$ is not a factor of every word in $\Lq$, we have that $q$
is of the form $uTvTw$ but $q$ is not a factor of $uTvTvTw$~\cite[Lemma 5.4]{DBLP:conf/pods/KoutrisOW21}.
  
  Assume towards a contradiction that there is a $k$ such that
  $\certain(q)=\Cqk(q)$.  We then show that $\certain(\qFour)=\Cqkp(\qFour)$,
  contradicting \Cref{thm-X+lowerBound}.
  To prove this it is enough to show that if a "database" $D$ is such that
  $D\models\certain(\qFour)$ then $D\models\Cqkp(\qFour)$.

  Consider such a "database" $D$ and let $D'$ be the "database" constructed by
  \Cref{proposition-UnconditionX+-Lowerbound-path}. From our hypothesis on $D$
  and the first item of \Cref{proposition-UnconditionX+-Lowerbound-path}, it
  follows that $D'\models \certain(q)$. From our hypothesis on $q$ it follows
  that $D'\models\Cqk(q)$. From the second item of
  \Cref{proposition-UnconditionX+-Lowerbound-path}, it follows that
  $D\models\Cqkp(\qFour)$, and the desired contradiction.  
\end{proof}

We now turn to the proof of \Cref{proposition-UnconditionX+-Lowerbound-path}.

\begin{proof}[Proof of \Cref{proposition-UnconditionX+-Lowerbound-path}]

  From the hypotheses we have that $u\ne \epsilon$ (otherwise $q = TvTw$ is a
  factor of $TvTvTw$). So let $u = A_0 u'$ and let us denote $q$ as
  \begin{multline*}
    A_0(\underline{x}~x_1) ~u'~ T(\underline{y}~y_0) ~v~ T(\underline{z}~z_0)  ~w , \quad \text{where}
    \begin{cases}
      u'&=A_1(\underline{x_1}~x_2)\ldots A_i(\underline{x_i}~y),\\
      v&=A'_0(\underline{y_0}~y_1) A'_1(\underline{y_1}~y_2)\ldots A'_j(\underline{y_j}~z),\\
      w&= A''_0(\underline{z_0}~z_1) A''_1(\underline{z_1}~z_2)\ldots
      A''_k(\underline{z_k}~z_{k+1}),
    \end{cases}
  \end{multline*}
%
and $u',v,w$ are possibly empty. Note that if $v = \epsilon$ then $y_0 = z$.

\medskip Let $D$ be a "database" for $\qFour$. Consider the "solution
graph" $\solgraph$ of $D$. Recall that every connected component in $\solgraph$
is always a clique of size less than or equal to $3$ and that every "fact" of $D$
can be part of exactly one maximal clique (\cf~\Cref{rk:properties-solgraph-q4}).

Let $B_1,B_2\ldots B_m$ be the set of all "blocks" of $D$ and
$C_1,C_2\ldots C_l$ be the set of maximal cliques in the "solution graph"
$\solgraph$. Notice that a clique may contain two "facts" in the same "block". For instance
the "facts" $R(a\, ba)$ and $R(a\, ab)$ form a "solution" to $\qFour$.  \AP
Recall that a clique $C_t$ is called \reintro{self-loop} if it contains only
one "fact" $h$ such that $D\models \qFour(hh)$.

Define $D'$ as follows:

If the "block" $B_s$ of $D$ contains a "fact" in a non-"self-loop" clique $C_t$ then we add the following facts to
$D'$, (all domain elements are fresh):
\begin{multline*}
  A_0(\underline{\alpha^s}~\alpha_{1}^{st})~
  \underbrace{A_1(\underline{\alpha^{st}_{1}}~\alpha^{st}_{2})~\ldots
    A_i(\underline{\alpha^{st}_{i}}~\beta^{st})}_{u'}\\
    T(\underline{\beta^{st}}~\beta_0^{st})\underbrace{A'_0(\underline{\beta_0^{st}}~\beta_1^{st})
    A'_1(\underline{\beta^{st}_1}~\beta_2^{st}) \ldots
    A'_j(\underline{\beta^{st}_j}~\gamma^{st})}_{v}\\
    T(\underline{\gamma^{st}}~\delta_0^{st})\underbrace{A'_0(\underline{\delta_0^{st}}~\delta_1^{st})
    A'_1(\underline{\delta^{st}_1}~\delta_2^{st}) \ldots
    A'_j(\underline{\delta^{st}_j}~\eta^{st})}_{v}
  \\T(\underline{\eta^{st}}~\eta_0^{st})\underbrace{A''_0(\underline{\eta_0^{st}}~\eta_1^{st})
    A''_1(\underline{\eta^{st}_1}~\eta_2^{st}) \ldots
    A''_k(\underline{\eta^{st}_k}~\eta^{st}_{k+1})}_{w}
\end{multline*}

If the "block" $B_s$ of $D$ contains a "fact" in the "self-loop" clique $C_t$ then then we add the following facts to
$D'$.
\begin{multline*}
  A_0(\underline{\alpha^s}~\alpha_{1}^{st})~
  \underbrace{A_1(\underline{\alpha^{st}_{1}}~\alpha^{st}_{2})~\ldots
    A_i(\underline{\alpha^{st}_{i}}~\beta^{st})}_{u'}T(\underline{\beta^{st}}~\beta_0^{st})\underbrace{A'_0(\underline{\beta_0^{st}}~\beta_1^{st})
    A'_1(\underline{\beta^{st}_1}~\beta_2^{st}) \ldots
    A'_j(\underline{\beta^{st}_j}~\gamma^{st})}_{v}\\
    T(\underline{\gamma^{st}}~\gamma_0^{st})\underbrace{A''_0(\underline{\gamma_0^{st}}~\gamma_1^{st})
    A''_1(\underline{\gamma^{st}_1}~\gamma_2^{st}) \ldots
    A''_k(\underline{\gamma^{st}_k}~\gamma^{st}_{k+1})}_{w} 
\end{multline*}
      

Further, for every maximal clique $C_t$ containing at least two facts in
different "blocks", and for every facts $h,g \in C_t$
where $h,g$ are in "blocks" $B_{s_1},B_{s_2}$, with $s_1\neq s_2$, add
the following facts to $D'$:\\

\begin{tabular}{l}
  $A_0(\underline{\theta^{s_1s_2t}}~\theta_{1}^{s_1t})~
    \underbrace{A_1(\underline{\theta^{s_1t}_{1}}~\theta^{s_1t}_{2})~\ldots
      A_i(\underline{\theta^{s_1t}_{i}}~\gamma^{s_1t})}_{u'}$\\
  \hspace{5cm}$T(\underline{\gamma^{s_1t}}~\gamma_0^{s_1t})\underbrace{A''_0(\underline{\gamma_0^{s_1t}}~\gamma_1^{s_1t})
      A''_1(\underline{\gamma^{s_1t}_1}~\gamma_2^{s_1t}) \ldots
      A''_k(\underline{\gamma^{s_1t}_k}~\gamma^{s_1t}_{k+1})}_{w}$\\
  $ A_0(\underline{\theta^{s_1s_2t}}~\theta_{1}^{s_2t})~
    \underbrace{A_1(\underline{\theta^{s_2t}_{1}}~\theta^{s_2t}_{2})~\ldots
      A_i(\underline{\theta^{s_2t}_{i}}~\gamma^{s_2t})}_{u'}$\\
  \hspace{5cm}$ T(\underline{\gamma^{s_2t}}~\gamma_0^{s_2t})\underbrace{A''_0(\underline{\gamma_0^{s_2t}}~\gamma_1^{s_2t})
      A''_1(\underline{\gamma^{s_2t}_1}~\gamma_2^{s_2t}) \ldots
      A''_k(\underline{\gamma^{s_2t}_k}~\gamma^{s_2t}_{k+1})}_{w}$
\end{tabular}

\begin{figure}
  \centering
  \includegraphics[width=\textwidth]{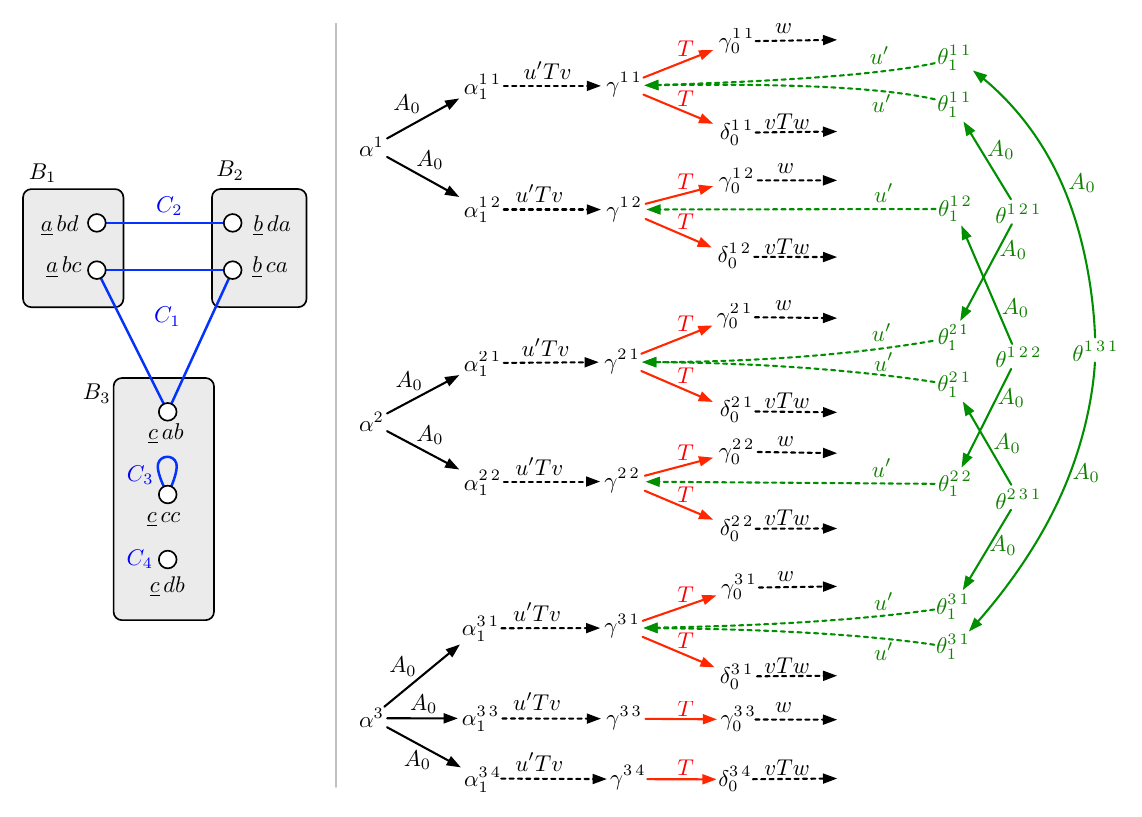}
  \caption{Illustration of the reduction from $\qFour$ to $q$. The black arrows
    starting from the same node correspond to a block of the first kind. The
    red arrows and the green arrows correspond to the blocks of the second and third kind respectively. Some domain elements are represented twice (with the same name, like $\theta^{21}_{1}$) to declutter the figure.}
  \label{fig-path-unconditional}
\end{figure}

This concludes the constructions of $D'$, we refer to
\Cref{fig-path-unconditional} for an illustration. Notice that most "blocks" of
$D'$ have size one except for three kinds. The first kind are $A_0$-"blocks"
with key $\alpha^s$ for some "block" $B_s$ of $D$. These "blocks" are in
one-to-one correspondence with the "blocks" of $D$ and contain as many facts as
there are cliques in $D$ intersecting the "block". The second kind are
$T$-"blocks" whose key is $\gamma^{st}$ for some "block" $B_s$ and maximal
clique $C_t$ where $C_t$ is not a "self-loop". These "blocks" contain at most
two "facts", one that starts a path $TvTw$ and one that starts a path $Tw$. The
third and last kind are $A_0$-"blocks" whose key is $\theta^{s_1s_2t}$ where
two facts from $B_{s_1}$ and $B_{s_2}$, $s_1\neq s_2$, form a solution and
belong to the non-"self-loop" clique $C_t$.  These "blocks" have two "facts",
each of them starting a path $u$ reaching a "block" of the second kind. The
"blocks" of size one play no role in the rest of this proof.

We now prove that $D'$ has the desired properties.

\begin{enumerate}
\item Suppose $D\models \certain(\qFour)$. Now pick any repair $r'$ of
  $D'$. Define the following repair $r$ on $D$:

  For any $A_0$-"block" in $D'$ of the first kind, pick a corresponding "fact" in
  $D$. "ie", if $A_0(\underline{\alpha^s}~\alpha_{1}^{st}) \in r'$ then
  choose $R(\underline{a}~bc)$ for $r$, where $R(\underline{a}~bc)$ is any "fact" in "block" $B_s$ that belongs to the maximal clique $C_t$.

  Now $r$ is a repair of $D$ since every $R$-"block" $B_s$ in $D$ corresponds to
  an $A_0$-"block" of the first kind. By assumption $r\models \qFour$ hence there
  exists facts $h,g$ that belong to "blocks" $B_{s_1}$ and $B_{s_2}$ and also
  belong to the same clique $C_t$ such that $h,g\in r$.

  Assume first that $h=g$, hence $s_1=s_2$ and $C_t$ is a "self-loop". By construction of $r$ this implies that
  $A_0(\underline{\alpha_s}~ \alpha_1^{st})$ belongs to $r'$ and by construction of $D'$, as
  $C_t$ is a "self-loop" all repairs of $D$ containing  $A_0(\underline{\alpha_s}~\alpha_1^{st})$ have a solution to $q$.

  Assume now that $h\neq g$. As $r$ is a repair this implies $s_1\neq s_2$.
  This implies that both $A_0(\underline{\alpha^{s_1}}~\alpha_{1}^{s_1t})$ and
  $A_0(\underline{\alpha^{s_2}}~\alpha_{1}^{s_2t})$ are in $r'$.
  Consider now the "block" of the third kind whose key is
  $\theta^{s_1s_2t}$. The repair $r'$ must contain a "fact" from this
  "block". Without loss of generality we assume that it is the "fact" that starts a
  path $u$ reaching $\gamma^{s_1t}$, the key of a "block" of the second
  kind. The repair $r'$ must contain a "fact" from this "block", either starting a
  path $Tw$ or a path $TvTw$. In the first case the query $q$ is true because of
  the path starting with $A_0(\underline{\alpha^{s_1}}~\alpha_{1}^{s_1t})$; in
  the second case the query $q$ is true because of the path starting with the $A_0$-"fact" of key $\theta^{s_1s_1t}$.

\bigskip

\item Assume that for some $k$, $D'\models\Cqk(q)$. We show that
$D\models\Cqkp(\qFour)$.

In order to show this, for every "$k$-set" $S$ of "facts" of $D'$ we relate "$k$-set" $\intro*\assoc S$ of $D$ such that  the following conditions hold:
\begin{enumerate}
\item If $A_0(\underline{\alpha^s} \alpha_1^{st}) \in S$ then $\assoc S$
  contains a "fact" $h$ that belongs to both the "block" $B_s$ and the clique
  $C_t$.
\item If $T(\underline{\gamma^{st}} \delta_0^{st})\in S$ then $\assoc S$
  contains a "fact" $h$ that belongs to both the "block" $B_s$ and the clique
  $C_t$.
\item If $T(\underline{\gamma^{st}} \gamma_0^{st})\in S$ and $h$ is the "fact"
  that belongs to both the "block" $B_s$ and the clique $C_t$ then $\assoc S$
  contains one "fact" $g$ that belongs the clique $C_t$ such that
  $D\models \qFour(gh)\lor \qFour(hg)$. By construction this is always possible.
\item If $A_0(\underline{\theta^{s_1s_2t}} \theta_1^{st}) \in S$. Recall that
  this can only happen if $s_1\neq s_2$ and $s=s_1$ or $s=s_2$. Let $h_1$ be
  one "fact" that belongs to both the "block" $B_{s_1}$ and the clique $C_t$,
  and $h_2$ be one "fact" that belongs to both the "block" $B_{s_2}$ and the clique $C_t$.
  Then $\assoc S$ contains $h_1$ if $s=s_2$ and $h_2$ if $s=s_1$.
\end{enumerate}

Note that for every "$k$-set" $S$ of "facts" of $D'$ there may be several  "$k$-set" $\assoc S$ of $D$ that satisfy the conditions.
\end{enumerate}

We claim that if $S \in \Delta_k(q,D')$ then all associated $\assoc S$ are
in $\Delta^+_k(\qFour,D)$. In particular, as $D'\models\Cqk(q)$ then by
definition the empty set is in $\Delta_k(q,D')$, and therefore it is also in
$\Delta^+_k(\qFour,D)$, hence $D\models\Cqkp(\qFour)$ as desired.

The proof of the claim is by induction on the iteration where $S$ is added to
$\Delta_k(q,D')$.

The base case is when $S\in\Delta_k(q,D',0)$. Then $S$ contains a solution in
$q(D')$. By construction, as $q$ is not a factor of $uTvTvTw$, this can only
happen if $S$ contains both $A_0(\underline{\alpha^{s}} \alpha_1^{st})$ and
$T(\underline{\gamma^{st}} \gamma_0^{st})$ or $S$ contains both
$A_0(\underline{\theta^{s_1s_2t}} \theta_1^{st})$ and
$T(\underline{\gamma^{st}} \delta_0^{st})$. In the first case any associated
$\assoc S$ by definition contains two "facts" $h$ and $g$ (possibly equal) of
$D$ such that $h$ is in the "block" $B_s$ and clique $C_t$ and $g$ also belongs
to the clique $C_t$ and $D\models \qFour(hg) \lor \qFour(gh)$ (by properties
$(a)$ and $(c)$).  In either case, we have
$\assoc S \in\Delta^+_k(\qFour,D,0)$.

In the second case, by definition, all associated $\assoc S$ contain two
distinct "facts" $h$ and $g$ of $D$ such that $h$ is in the "block" $B_{s_1}$
and clique $C_t$ and $g$ belongs to the ``block'' $B_{s_2}$ and also to the
clique $C_t$ (using properties $(b)$ and $(d)$). In particular
$D\models \qFour(hg) \lor \qFour(gh)$ and $\assoc S \in\Delta^+_k(\qFour,D,0)$.

Assume now that $S\in\Delta_k(q,D',i)$ for some $i>0$. By definition this is
because there exists a "block" $B'$ of $D'$ such that for all $b\in B'$,
$S\cup\set{b}$ contains a $k$-set in $\Delta_k(q,D',i-1)$. By induction, this
implies that for all $b' \in B'$, $\assoc{S\cup\set{b'}}$ contains a $k$-set in
$\Delta^+_k(\qFour,D,i-1)$. We do a case
analysis depending on $B'$.
\begin{enumerate}
\item $B'$ is a "block" of the first kind: it contains all elements
  $A_0(\underline{\alpha^s} \alpha_1^{st})$ for some "block" $B=B_s$. Then $B$ can be used to
  show that $\assoc S$ belongs to $\Delta^+_k(\qFour,D,i)$. Indeed consider $b\in
  B$, let $t$ be such that $b\in C_t$ and consider $b'=A_0(\alpha^s
  \alpha_1^{st})$. From the fact that $\assoc{S\cup\set{b'}}$ contains a
  "$k$-set" in $\Delta_k(\qFour,D,i-1)$ it follows by induction that
  $\assoc{S}\cup\set{b}$ contains a "$k$-set" in $\Delta^+_k(\qFour,D,i-1)$.
 Hence, $\assoc S\in\Delta^+_k(\qFour,D,i)$ as desired.

\item $B'$ is a "block" of the second kind: it contains
  $T(\underline{\gamma^{st}} \gamma_0^{st})$ and
  $T(\underline{\gamma^{st}} \delta_0^{st})$. Let $h$ be a "fact" of $D$ in
  "block" $B_s$ and clique $C_t$ and let $g,p$ be the other "facts" in $C_t$
  (possibly $g=p$, but as $C_t$ cannot be a "self-loop" we assume $g$ and $h$
  to be in distinct "blocks"). Notice that by definition
  $\{T(\underline{\gamma^{st}} \gamma_0^{st})\}$ is associated to both $\{g\}$
  and $\{p\}$ while $\{T(\underline{\gamma^{st}} \delta_0^{st})\}$ is
  associated to $\{h\}$. Now since
  $S \cup \{T(\underline{\gamma^{st}} \gamma_0^{st})\} \in \Delta_k(q,D,i-1)$
  we have by induction, $\hat{S} \cup \{g\} \in \Delta^+_k(\qFour,D,i-1)$ and
  $\hat{S} \cup \{p\} \in \Delta^+_k(\qFour,D,i-1)$. Further,
  $S \cup \{T(\underline{\gamma^{st}} \delta_0^{st})\} \in \Delta_k(q,D,i-1)$
  implies that $\hat{S} \cup \{h\} \in \Delta^+_k(\qFour,D,i-1)$. Thus the
  clique $C_t$ is a witness for $\assoc S \in\Delta^+_k(\qFour,D,i)$ by
  application of the new rule.
  

\item $B'$ is a "block" of the third kind: it contains
  $A_0(\underline{\theta^{s_1s_2t}} \theta_1^{s_1t})$ and
  $A_0(\underline{\theta^{s_1s_2t}} \theta_1^{s_2t})$.  Let $h,g$ and $p$ be the "facts" that
  are in clique $C_t$ and respectively in the "blocks" $B_{s_1}$ and $B_{s_2}$
  and $B_{s_3}$ (possibly $g=p$). Notice that by definition
  $\{A_0(\underline{\theta^{s_1s_2t}} \theta_1^{s_1t})\}$ is associated to both $g$
  and $p$, while $\{A_0(\underline{\theta^{s_1s_2t}} \theta_1^{s_2t})\}$ is associated
  to $h$ and $p$.
  As in the previous case, we obtain by induction that
  $\assoc S \cup \set{x} \in \Delta_k(\qFour,D')$ for all
  $x\in\set{h,g,p}$. Therefore, the clique $C_t$is a witness for  $\assoc S \in\Delta^+_k(\qFour,D,i)$ by application of the new rule.

\item $B'$ is a "block" of size one, "ie", containing one "fact" $b'$. By induction hypothesis
  $\assoc{S\cup\set{b'}}$ contains a $k$-set in $\Delta_k(\qFour,D)$. Moreover, by
  construction  $\assoc{S\cup\set{b'}}=\assoc S$. Hence $\assoc S \in\Delta_k(\qFour,D)$ as desired.
\end{enumerate}
This concludes the induction step, hence the proof.
\end{proof}


\section{First order definability}\label{section-FO}

\newcommand\FO{{\sc FO}\xspace}

Let $q$ be a query. We say that $\certain(q)$ is in \FO if there is a
first-order sentence $\varphi$ such that for all database $D$,
$D\models \certain(q)$ iff $D\models \varphi$. For instance, the query $\certain(\qOne)$ is in \FO where $\qOne$ is described in \Cref{first-example}.

The goal of this section is to provide a characterization of queries
whose certainty can be expressed in \FO in terms of the \Cqk~algorithm, assuming $q$ is either "self-join-free"
or a "path query".  Notice that if $\certain(q)$ is in \FO then in particular it can be solved
in AC$_0$ and therefore cannot be \conp-hard. It then follows from our results that it can be solved by
our fixpoint algorithm $\Cqk(q)$ for some $k$.

Recall the definition of $\Cqk(q)$. It computes in an inflationary way a set
$\Deltak(q,D)$ of "$k$-sets" satisfying a certain property, where the property can be
specified in first-order logic. Starting from the "solutions" to $q$ in $D$, at each
step it adds to $\Deltak(q,D)$ a new set of "$k$-sets" satisfying a first-order
property. 
\AP
Let $\intro*\Deltaki(q,D,i)$ be the set of "$k$-sets" computed this way after
$i$ iterations. Hence $\Deltaki(q,D,0)$ the set of "solutions" to $q$ in $D$ and $\bigcup_i \Deltaki(q,D,i)$ is the set of "$k$-sets" obtained
when the fixpoint is reached. By definition, $\Cqk(q)$ returns `yes' if this set
contains the empty set.

As every step can be defined in first-order logic, each set $\Deltaki(q,D,i)$ can be
defined using a first-order formula. 
\AP
Let $\intro*\psii(q)$ be the first-order sentence
such that $D \models \psii(q)$ if and only if $\Deltaki(q,D,i)$ contains the empty set.
It is clearly an
under-approximation of $\Cqk(q)$ and therefore of $\certain(q)$. Whenever 
$\Cqk(q)$ is equivalent to $\psii(q)$ for some $i$ depending only on $q$ and $k$, we say that $\Cqk(q)$ is ""bounded"".\footnote{This notion is sometimes referred to in the literature as ``goal bounded''.}
We show that whenever $\certain(q)$ is in \FO then $\Cqk(q)$, with $k$ the number of atoms of $q$, is bounded and computes $\certain(q)$.

\begin{exa}
  Recall the query $\qOne = R_1(\underline{x} ~ y)\land R_2(\underline{y}~ z)$ from Example~\ref{first-example}. We showed there that $\certain(\qOne)$ can be expressed in \FO.
  We verify that for $k = 2$, for any database $D$, if $D\models \certain(\qOne)$ then $\emptyset \in \Deltak[2](\qOne,D,2)$.
  
  First note that $\Deltak[2](\qOne,D,0)$ contains sets of the form $\{R_1(\underline{a}~b), R_2(\underline{b}~c)\}$ which are the solutions to $\qOne$. Notice that if $\{R_1(\underline{a}~b), R_2(\underline{b}~c)\} \in \Deltak[2](\qOne,D,0)$ then for all $ R_2(\underline{b}~c') \sim  R_2(\underline{b}~c)$ we have $\{R_1(\underline{a}~b), R_2(\underline{b}~c')\} \in \Deltak[2](\qOne,D,0)$. Hence for every $\{R_1(\underline{a}~b), R_2(\underline{b}~c)\} \in \Deltak[2](\qOne,D,0)$ we have $\{R_1(\underline{a}~b)\} \in \Deltak[2](\qOne,D,1)$. 
  
  In particular for every minimal "repair" $r$ if $r \models q(uv)$ then $u\in \Deltak[2](\qOne,D,1)$. Now since $r$ is minimal, for every $u'\sim u$ there exists some $v'\in r$ such that $r[u\to u']\models q(u'v')$. This implies that for every $u'\sim u$ we have $u'\in \Deltak[2](\qOne,D,1)$. Hence $\emptyset \in \Deltak[2](\qOne,D,2)$ as required. 
  
\end{exa}

\subsection{Self-join-free case}

In this case we make use of the following characterization
of~\cite{DBLP:journals/tods/KoutrisW17} based on the notion of "attack graph".

\begin{thmC}[{\cite[Theorem~3.2]{DBLP:journals/tods/KoutrisW17}}]\label{thm-fo-char-attack}
  Let $q$ be a "self-join-free query" and $\Gamma$ a set of "primary key constraints". The "attack graph" of $q$ and $\Gamma$ is acyclic iff
  $\certain(q)$ is in \FO.
\end{thmC}

We obtain the following:

\begin{prop}\label{prop-FO-char}
  Let $q$ be a "self-join-free query" and $\Gamma$ a set of "primary key constraints". Let $k$ be the number of atoms of $q$. The following are equivalent:
  \begin{enumerate}
  \item\label{item-attack} The "attack graph" of $q$ and $\Gamma$ is acyclic.
  \item\label{item-bounded} $\Cqk(q)$ is "bounded" and $\Cqk(q)=\certain(q)$.
  \item\label{item-fo} $\certain(q)$ is in \FO.
  \end{enumerate}
\end{prop}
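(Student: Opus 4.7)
The equivalence of (\ref{item-attack}) and (\ref{item-fo}) is precisely \Cref{thm-fo-char-attack}. The implication (\ref{item-bounded}) $\Rightarrow$ (\ref{item-fo}) is immediate: if $\Cqk(q)$ coincides with $\psii(q)$ for some fixed $i$, then $\certain(q) = \Cqk(q)$ is defined by the first-order sentence $\psii$. The non-trivial direction is thus (\ref{item-attack}) $\Rightarrow$ (\ref{item-bounded}).

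For (\ref{item-attack}) $\Rightarrow$ (\ref{item-bounded}), I would first observe that an acyclic "attack graph" vacuously has only "weak cycles", so \Cref{thm-sjf-lowerbound} gives $q \models \HP$, and then \Cref{theorem-HP-ptime} yields $\Cqk(q) = \certain(q)$. The remaining and harder task is to exhibit some iteration index $i$, depending only on $q$ and $\Gamma$ and not on the input "database", such that $\psii(q) = \certain(q)$.

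For boundedness, the cleanest route is to invoke a classical theorem for positive first-order inductive definitions (Barwise; Gurevich--Shelah): an inflationary fixpoint defined by a positive first-order formula whose least fixpoint is first-order definable on finite structures is bounded. The $\Cqk$ algorithm is an inflationary fixpoint whose step is a positive first-order formula over a $k$-ary predicate encoding "$k$-sets", and its Boolean projection ``$\emptyset \in \Deltak(q,D)$'' coincides with $\certain(q)$, which is \FO by the already established (\ref{item-attack}) $\Rightarrow$ (\ref{item-fo}). Applying the classical theorem should then produce the desired bound $i$.

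The main obstacle is justifying the application of the general boundedness theorem to our setting, where the fixpoint builds a \emph{set of $k$-sets} rather than a single unary relation, and where the \FO-definable property is only the Boolean projection ``$\emptyset$ is eventually added''. Encoding "$k$-sets" by $k$-tuples with a canonical order reduces the situation to a standard $k$-ary monotone induction, and transferring \FO-definability of the projection to a uniform bound on stages requires a stage-comparison argument. As a fallback avoiding general fixpoint theory, one could give a direct combinatorial argument: acyclicity of the "attack graph" makes every SCC a singleton, so the "$\Gamma$-sequence" $\tau$ of \Cref{induc-prop} consists of $k$ topologically ordered singletons $\{A_k\} \cdots \{A_1\}$; inspecting the proof of \Cref{induc-prop} in this setting, one observes that each implication $\Ind_{i+1} \Rightarrow \Ind_i$ is witnessed by a local derivation requiring only a constant number of fixpoint iterations, yielding an explicit bound $i = f(k)$ independent of $D$.
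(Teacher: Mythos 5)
Your skeleton is the same as the paper's: \eqref{item-bounded}~$\Rightarrow$~\eqref{item-fo} is immediate from "boundedness", \eqref{item-fo}~$\Rightarrow$~\eqref{item-attack} is \Cref{thm-fo-char-attack}, and for \eqref{item-attack}~$\Rightarrow$~\eqref{item-bounded} the equality $\Cqk(q)=\certain(q)$ follows from \Cref{thm-sjf-lowerbound} and \Cref{theorem-HP-ptime}. The gap is in your primary route to boundedness. The theorem you invoke is the Barwise--Moschovakis boundedness theorem, and it is \emph{false} on finite structures: its proof uses compactness, and on classes of finite structures FO-definability of a positive induction's fixpoint does not imply a uniform bound on its closure ordinal. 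A minimal counterexample: on finite successor structures, the induction $X(x) \leftarrow \forall y\,(S(y,x) \to X(y))$ has fixpoint equal to the whole domain (trivially FO-definable) yet needs $n$ stages on a structure of size $n$. The Gurevich--Immerman--Shelah work you gesture at actually \emph{refutes} McColm's conjecture, i.e., it is evidence against the principle you need, not for it. On top of that, the hypothesis of the (false) theorem is not even established here: what is known to be FO is only $\certain(q)$, i.e., the Boolean projection ``$\emptyset$ is eventually derived'', not the fixpoint relation $\Deltak(q,D)$ itself, and your appeal to a ``stage-comparison argument'' to bridge this is not a proof.

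Your fallback is the right idea and is essentially what the paper does, but it needs to be made precise, and the precise version is where the work lies. The paper strengthens the invariant $\Ind_i$ to
$\Indpi[i] = $ ``for all "$i$-minimal" repairs $s$ and $\bu$ with $s\models \ql{i}(\bu)$ we have $\bu\in\Deltaki(q,D,k-i)$'', i.e., it tracks the \emph{stage} at which $\bu$ is derived, and proves $\Indpi[i+1]\Rightarrow\Indpi[i]$ (\Cref{induc-prop-fo}). The crucial point your sketch glosses over is \emph{why} the general proof of \Cref{induc-prop} does not give any stage bound: that proof is by contradiction and builds an unboundedly long chase sequence of facts, so ``each implication is witnessed by a local derivation'' is not something you can read off from it. What saves the day when every "stable set" is a singleton is this: if $\bu\notin\Deltaki(q,D,k-i)$, pick $a$ matching $S_{i+1}$ with $s\models\ql{i+1}(\bu a)$; there is $a'\simblock a$ with $\bu a'\notin\Deltaki(q,D,k-i-1)$, and by \Cref{claim-strong-delta} the repair $s[a\to a']$ is "strong $i$-minimal" and satisfies $q(\bu a'\bbeta)$ --- and since $S_{i+1}$ is a singleton, $a'$ necessarily matches $S_{i+1}$, so $\Indpi[i+1]$ applies to $\bu a'$ and yields an immediate contradiction. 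No infinite sequence is needed, each level costs exactly one iteration, and the resulting bound is exactly $k$: $\Cqk(q)$ is equivalent to $\psii[k,k](q)$, not just to some unspecified $f(k)$-th unfolding.
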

\begin{proof}
  \proofcase{\eqref{item-bounded} $\Rightarrow$ \eqref{item-fo}}
  This is an immediate
  consequence of "boundedness".

  \proofcase{\eqref{item-fo} $\Rightarrow$ \eqref{item-attack}} This follows from
  \Cref{thm-fo-char-attack}.

  \proofcase{\eqref{item-attack} $\Rightarrow$ \eqref{item-bounded}}
  Assume that the "attack graph" of $q$ and $\Gamma$ is acyclic. Let
  $\tau=A_1,\dotsc, A_k$ be any topological ordering of the atoms of $q$,
  "ie", if there is an attack from $A_i$ to $A_j$ then $i<j$. As argued in the
  proof of \Cref{thm-sjf-lowerbound}, $\tau$ is a "$\Gamma$-sequence" and
  $q\models \HPtau$, therefore $\Cqk(q)=\certain(q)$. Notice that $\tau$ is a "$\Gamma$-sequence" where all "stable
  sets" have size one. We show that whenever we have such a "$\Gamma$-sequence"
  then $\Cqk(q)$ is "bounded". In order to show this, we revisit the proof of
  \Cref{theorem-HP-ptime}. The key property that we used to prove  \Cref{theorem-HP-ptime} was \Cref{induc-prop} showing that if
  $\Ind_{i+1}$ and $\HPtaui \tau i$ holds, then $\Ind_{i}$ also holds, where
  \begin{align*}
    \Ind_i = \text{For all "$i$-minimal" "repair" $s$ and "facts" $\bar u$ s.t.\ $s
    \models \ql i(\bar u)$, we have $\bar u \in \Deltak(q,D)$.}
  \end{align*}

  When all "stable sets" in the "$\Gamma$-sequence" have size one, we can have a stronger
  version of the induction step where $\Ind_i$ becomes
\AP
  \begin{align*}
    \intro*\Indpi = \text{For all "$i$-minimal" "repair" $s$ and "facts" $\bar u$ s.t.\ $s
    \models \ql i(\bar u)$, we have $\bar u \in \Deltaki(q,D,k-i)$}
  \end{align*}
 (recall that $k$ is the number of atoms in $q$), and \Cref{induc-prop} becomes:

  \begin{clm}
\label{induc-prop-fo}
Given $q$, $D$ and a "$\Gamma$-sequence" $\tau$ for $q$ such that all "stable sets"
of $\tau$ have size one. Then for every
$0 \leq i < k$, if $\Indpi[i+1]$ and $\HPtaui \tau i$, then $\Indpi$.
\end{clm}
\begin{proof}[Proof of claim]
  The proof is similar to the proof of \Cref{induc-prop}, but simpler. 

  By means of contradiction, assume that $\Indpi[i+1]$ and $\HPtaui \tau i$ hold but $\Indpi$
  fails.
  By definition, there is a "$i$-minimal" "repair" $s$ and a tuple $\bu$ such that
  $s\models \ql i(\bu)$ but $\bu \not\in\Deltaki(q,D,k-i)$. From 
  \Cref{claim-strong-minimal}, we can assume that $s$ is "strong $i$-minimal".
  As $s\models \ql i(\bu)$, there is a "fact" $a$ "matching" $S_{i+1}$ such that
  $s\models \ql {i+1}(\bu a)$.
  
  As $\bu \not\in\Deltaki(q,D,k-i)$ there is by definition a "fact" $a'\sim a$
  such that $\bu a' \not\in\Deltaki(q,D,k-i-1)$. By \Cref{claim-strong-delta}
  the "repair" $s'=s[a\to a']$ is "strong $i$-minimal" and
  $s'\models q(\bu a' \bbeta)$ for some tuple $\bbeta$. By assumption, the
  "stable set" $S_{i+1}$ of $\tau$ contains only one atom hence  $a'$
  "matches" $S_{i+1}$ hence $s'\models \ql {i+1}(\bu a')$. By $\Indpi[i+1]$ this implies that $\bu a'
  \in \Deltaki(q,D,k-i-1)$, a contradiction.\renewcommand{\qedsymbol}{$\lhd$}
\end{proof}
\renewcommand{\qedsymbol}{\usebox{\lmcsQEDSymbol}}

Therefore, by
\Cref{induc-prop-fo} we conclude that $\Indpi[0]$ holds. This immediately implies
that whenever the empty set is derived, it is derived within $k$-steps. Then $\Cqk(q)$ is equivalent to $\psii[k,k](q)$ and is therefore bounded; this proves Item~\eqref{item-bounded}.
\end{proof}

\subsection{Path queries}

For "path queries" we rely on the following result of~\cite{DBLP:conf/pods/KoutrisOW21}.

\begin{thmC}[{\cite[Theorem~3.2 \& Lemma 7.1]{DBLP:conf/pods/KoutrisOW21}}]\label{dichotomy-paths-fo}
 Let $q$ be a "path query". If $q$ is a prefix of all the words
 in the language $\Lq$, then $\certain(q)$ is in \FO. Otherwise, $\certain(q)$ is \NL-hard under \FO reductions.
\end{thmC}

We obtain the following:
\begin{prop}\label{thm-path-fo}
  Let $q$ be a "path query" of length $k$. The following are equivalent.
  \begin{enumerate}
  \item\label{item-prefix} $q$ is a prefix of all the words in the language $\Lq$
  \item\label{item-bounded-path} $\Cqk(q)$ is bounded and $\Cqk(q)=\certain(q)$.
  \item\label{item-fo-path} $\certain(q)$ is in \FO.
  \end{enumerate}
\end{prop}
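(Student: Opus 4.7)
The plan is to prove the three conditions equivalent by establishing the cycle \eqref{item-bounded-path} $\Rightarrow$ \eqref{item-fo-path} $\Rightarrow$ \eqref{item-prefix} $\Rightarrow$ \eqref{item-bounded-path}. The first implication is immediate: if $\Cqk(q)$ is "bounded" then it coincides with $\psii[i,k](q)$ for some fixed $i$ depending only on $q$ and $k$, and combined with the hypothesis $\Cqk(q)=\certain(q)$ this exhibits $\certain(q)$ as the FO sentence $\psii[i,k](q)$. The second implication is a direct consequence of \Cref{dichotomy-paths-fo}: if $q$ were not a prefix of every word in $\Lq$, then $\certain(q)$ would be \NL-hard under FO reductions, which precludes FO-definability since FO is strictly contained in \NL.

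The technical heart of the proof is the implication \eqref{item-prefix} $\Rightarrow$ \eqref{item-bounded-path}. Since ``$q$ is a prefix of every word in $\Lq$'' implies ``$q$ is a factor of every word in $\Lq$'', \Cref{cqk-path-queries} already yields $\Cqk(q)=\certain(q)$ for free, so only "boundedness" of $\Cqk(q)$ remains to be proved. For this, I would reuse the simulation of the Koutris--Olteanu--Wijsen fixpoint $\NqD$ by our $\Deltak$ captured in \Cref{claim:Induction for N is subseteq of Cqk}: the claim shows that membership of a pair $\tup{c,s_\ell}$ in $\NqDi(q,D,i)$ implies that every valid path witnessing it lies in $\Deltaki(q,D,i)$, and tracing back the base-case argument yields $\emptyset\in\Deltaki(q,D,i+1)$ whenever $\tup{c,\epsilon}\in\NqDi(q,D,i)$. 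Hence it suffices to exhibit a bound $f(q)$, independent of $D$, on the number of iterations in which $\NqD(q,D)$ stabilizes under the prefix condition.

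To bound the depth of $\NqD$, I would exploit the restricted structure of the automaton $\aut$ when $q$ is a prefix of every word in $\Lq$. Under this condition, the $\epsilon$-transitions of $\aut$ cannot give rise to genuine ``back-and-forth'' loops between a prefix $s$ of $q$ and a longer prefix that circle back to $s$ in a $D$-dependent number of rounds; this allows the computation of $\NqD$ to be stratified by the length of the prefix $s$ in the pairs $\tup{c,s}$, so that all pairs at level $\ell$ are derived once pairs at levels $>\ell$ are saturated. Each stratum requires only a single round of derivations, giving a bound of $O(|q|)$ rounds overall. An alternative route, which I would pursue in parallel, is to convert the FO sentence for $\certain(q)$ supplied by \cite[Lemma 7.1]{DBLP:conf/pods/KoutrisOW21} directly into a bounded unfolding of $\NqD$, reading its quantifier structure as a depth-bounded evaluation schedule of the fixpoint; composing this with the simulation yields boundedness of $\Cqk(q)$.

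The main obstacle is the last step: formally extracting a uniform bound on the iteration depth of $\NqD$ from the prefix condition. This requires a careful combinatorial analysis of how $\epsilon$-transitions in $\aut$ interact with the prefix property, or equivalently a careful translation of the FO formula of \cite[Lemma 7.1]{DBLP:conf/pods/KoutrisOW21} into an unfolding of the Koutris--Olteanu--Wijsen fixpoint. Once this bound is in hand, the simulation from the proof of \Cref{prop-N is subseteq of Cqk} transports it back to $\Deltak$, proving \eqref{item-bounded-path} and closing the cycle.
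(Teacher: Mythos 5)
Your cycle of implications matches the paper's, and the directions \eqref{item-bounded-path} $\Rightarrow$ \eqref{item-fo-path} and \eqref{item-fo-path} $\Rightarrow$ \eqref{item-prefix} are handled exactly as in the paper (via "boundedness" and \Cref{dichotomy-paths-fo} respectively). The gap is in \eqref{item-prefix} $\Rightarrow$ \eqref{item-bounded-path}, and it is precisely at the point you flag as the ``main obstacle''. Your plan is to bound the iteration depth of the Koutris--Olteanu--Wijsen fixpoint $\NqD$ and transport the bound through \Cref{claim:Induction for N is subseteq of Cqk}, arguing that the computation of $\NqD$ can be stratified by the length of the prefix $s$ in the pairs $\tup{c,s}$, each stratum closing in one round. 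This stratification is not sound as stated: Rule~\ref{item-BackwardN} derives $\tup{c,s}$ from pairs $\tup{b,tR}$ where $tR$ is a prefix of $q$ with $|tR|\leq|s|$, and when $|tR|=|s|$ (which happens whenever the $\epsilon$-transition is from $s$ to the state $t$ with $tR=s$) the rule derives a level-$|s|$ pair from level-$|s|$ pairs. These same-level derivations are exactly the mechanism that makes $\NqD$ take $\Omega(|D|)$ rounds for queries such as $\qTwop$ (which is in \ptime\ but not \FO), so any correct argument must show that the \emph{prefix} condition specifically kills or uniformly bounds such chains --- and that combinatorial analysis is missing from your proposal. Your alternative route (reading a depth-bounded schedule off the FO formula of \cite[Lemma 7.1]{DBLP:conf/pods/KoutrisOW21}) is likewise only sketched.

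The paper sidesteps $\NqD$ entirely here. It invokes the semantic characterization of certainty under the prefix condition (\Cref{prop-path-fo}, a restatement of \cite[Corollary 5.9]{DBLP:conf/pods/KoutrisOW21}): there is a single "block" $B$ such that every "repair" $r$ has a "solution" to $q$ starting at the unique $B$-fact of $r$. A downward induction on $i$ (\Cref{claim-fo}) then shows directly that $a\bar v_i\in\Deltaki(q,D,k-i-1)$ for every such "solution" $a\bar v$, using the fact that replacing the $(i{+}1)$-st fact by any "equivalent" one still yields a "solution" with the same first $i$ facts; taking $i=0$ gives $\emptyset\in\Deltaki(q,D,k)$, which yields both $\Cqk(q)=\certain(q)$ and "boundedness" within $k$ rounds in one stroke. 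If you want to salvage your route, the missing ingredient is essentially this corollary (or an equivalent uniform-witness statement); without it, the claim that the prefix condition bounds the depth of $\NqD$ remains unproven.
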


To prove the proposition, we use the following result, which is a restatement of \cite[Corollary 5.9]{DBLP:conf/pods/KoutrisOW21} in the case where \Cref{item-prefix} holds.\footnote{In fact, the condition assumed in Corollary 5.9 in \cite{DBLP:conf/pods/KoutrisOW21} is implied by \Cref{item-prefix}.}
\begin{propC}[{\cite[Corollary 5.9]{DBLP:conf/pods/KoutrisOW21}}]\label{prop-path-fo}
  Let $q$ be a "path query" such that $q$ is a prefix of all the words in
  $\Lq$. The following are equivalent for all database $D$.
  \begin{itemize}
  \item $D\models \certain(q)$
  \item there exists a "block" $B$ such that for all repair $r$ of $D$, there
    exists a sequence of facts $\bv$ of $r$ such that $r\models q(a\bv)$, where $a$ is
    the fact of $r$ in $B$.
  \end{itemize}
\end{propC}

\begin{proof}[Proof of \Cref{thm-path-fo}]
  \proofcase{\Cref{item-bounded-path} $\Rightarrow$ \Cref{item-fo-path}} This is a
  consequence of "boundedness".

  \proofcase{\Cref{item-fo-path} $\Rightarrow$ \Cref{item-prefix}} Follows from \Cref{dichotomy-paths-fo}.

  \proofcase{\Cref{item-prefix} $\Rightarrow$ \Cref{item-bounded-path}}.
Assume \Cref{item-prefix} and assume that $D\models \certain(q)$,
we show that $\emptyset \in \Deltaki(q,D,k)$. 
 This is a consequence of the following lemma that we show by induction on
 $i$. Let $B$ be the "block" of $D$ given by \Cref{prop-path-fo}. For a sequence
 of fact $\bv$ we denote by $\bv_i$ the first $i$ facts of the sequence and $\bv[i]$ denotes the $i$-th fact of $\bv$:

 \begin{lem}\label{claim-fo}
 For all $i\in[0,k-1],$ for all repair $r$ and for all $\bv$ if $r\models q(a\bv)$ with $a\in B$ then
 $a\bv_i \in \Deltaki(q,D,k-i-1)$.
\end{lem}
\begin{proof}
 For $i=k-1$, this is clear (since $q(a\bv)$ holds and $\bv = \bv_{k-1}$).
 Assume now the property shown for $i+1$. We show it for $i$.

 Consider a repair $r$. Let $a=B\cap r$. Let $\bv$ be such that $r\models
 q(a\bv)$. Let $b=\bv[i+1]$. Let $b'\sim b$ and $r'=r[b\to b']$. By
 \Cref{prop-path-fo} there exists $\bv'$ such that $r'\models q(a\bv')$. As $q$
 is a "path query" we have $\bv'_i=\bv_i$. By induction we get that $a\bv_i b' \in
 \Deltaki(q,D,k-i-2)$. As $b'$ is arbitrary, the "block" of $b$ witnesses the
 fact that $a\bv_i \in \Deltaki(q,D,k-i-1)$.
\end{proof}

\bigskip

 Consider now an arbitrary "fact" $a\in B$ and any repair $r$ containing $a$. By
 \Cref{prop-path-fo}, since $D\models \certain(q)$, there exists $\bv$ such that $r\models q(a\bv)$. From
 \Cref{claim-fo}, applied with $i=0$, we get that $a\in\Deltaki(q,D,k-1)$.
 This implies that $\emptyset\in\Deltaki(q,D,k)$ as desired.

 We have proved that $D \models \certain(q)$ implies $\emptyset\in\Deltaki(q,D,k)$. On the other hand $\emptyset \in \Deltaki(q,D,k)$ implies $\emptyset \in \Deltaki(q,D)$ which in turn implies $D\models \certain(q)$.
Then we  have $D\models \certain(q)$ iff $\emptyset \in \Deltaki(q,D)$ iff $\emptyset \in \Deltaki(q,D,k)$; the first equivalence implies $\certain(q) =\Cqk(q)$, the second implies that $\Cqk(q)$ is equivalent to $\psii[k,k](q)$, and is therefore bounded.
\end{proof}


\section{Conclusion}
\label{section-Conclusion}

We have presented a simple polynomial time algorithm for certain query
answering over inconsistent "databases" under "primary key constraints".
The query is always certain when the algorithm outputs ``yes'', but it may
produce false negative answers. For "path queries" and "self-join-free" queries
we have characterized the cases when our algorithm computes all certain
answers. In particular, when "certainty" is in polynomial time, the algorithm
correctly computes the answer.
We have also shown that our fixpoint algorithm is bounded if, and only if, the "certainty" of the query
can be expressed in first-order logic.

Throughout the paper we have only considered boolean queries.  The
  algorithm can be extended to the non-boolean setting as follows: if $\bar{x}$
  are the free variables of $q$, we compute $\Delta(q,D,\bar{a})$ as for
  $\Delta(q,D)$ but always assigning $\bar{x}$ to $\bar{a}$. The algorithm
  returns $\bar{a}$ if $\Delta(q,D,\bar{a})$ eventually contains the empty set
  and if this happens $\bar{a}$ is a certain answer. This still takes
  polynomial time in data complexity.  In the presence of constants we can
  use a similar technique where the interpretation of the constants $\bar{c}$
  are fixed to $\bar{a}$ (and we only need one iteration if there are no free
  variables).  Thus, the algorithm can be adapted to the non-boolean setting as
  well.

Even though recent progress has been made, the "Dichotomy Conjecture" remains a challenging problem.
We add to the list of challenges a (decidable) characterization of when our fixpoint
algorithm correctly computes the certain answers. We believe this is an
interesting problem.  

It is interesting to note that a similar fixpoint algorithm can be obtained for other kinds of
constraints. For instance for "key constraints" or ``denial constraints'' as
defined in~\cite{DBLP:journals/iandc/ChomickiM05}  one can define a ``conflict
hypergraph'' where each hyperedge is a minimal set of facts making the
constraint false. A ``repair'' in this context is a maximal independent set of the conflict
hypergraph and certainty can be computed in \conp. Notice that in the case of
"primary key constraints", the conflict hypergraph is just a graph connecting any
two "facts" that share the same key. The connected components of
this graph are then cliques, which correspond to "blocks". The inductive rule of the
fixpoint then produces a new $k$-set $S$ if there is a connected component $C$ of the
conflict hypergraph such that for all facts $b$ of $C$, $S\cup\set{b}$ 
contains a previously produced $k$-set. This clearly generalizes the current definition. However, the properties of
this algorithm under non-primary key constraints are yet to be studied.

It would also be interesting to see if the simplicity of our algorithm can be
combined with an optimal computational cost. For instance, for "self-join-free"
queries satisfying \HP, it is known that the complexity of the certain
answering problem is in \logspace. However our fixpoint algorithm -- which works
for arbitrary queries, with or without self-joins -- cannot be evaluated in
\logspace. It is however plausible that, assuming self-join-freeness, simpler
rules can be used, providing a lower evaluation complexity. This is left for
future work.


\section*{Acknowledgment}
  \noindent This work is supported by ANR QUID, grant ANR-18-CE40-0031.

\bibliographystyle{alphaurl}
\bibliography{biblio-repair}

\begin{thebibliography}{KOW23}

\bibitem[ABC99]{DBLP:conf/pods/ArenasBC99}
Marcelo Arenas, Leopoldo~E. Bertossi, and Jan Chomicki.
\newblock Consistent query answers in inconsistent databases.
\newblock In Victor Vianu and Christos~H. Papadimitriou, editors, {\em
  Proceedings of the Eighteenth {ACM} {SIGACT-SIGMOD-SIGART} Symposium on
  Principles of Database Systems, May 31 - June 2, 1999, Philadelphia,
  Pennsylvania, {USA}}, pages 68--79. {ACM} Press, 1999.
\newblock \href {https://doi.org/10.1145/303976.303983}
  {\path{doi:10.1145/303976.303983}}.

\bibitem[AK09]{DBLP:conf/icdt/AfratiK09}
Foto~N. Afrati and Phokion~G. Kolaitis.
\newblock Repair checking in inconsistent databases: algorithms and complexity.
\newblock In Ronald Fagin, editor, {\em Database Theory - {ICDT} 2009, 12th
  International Conference, St. Petersburg, Russia, March 23-25, 2009,
  Proceedings}, volume 361 of {\em {ACM} International Conference Proceeding
  Series}, pages 31--41. {ACM}, 2009.
\newblock \href {https://doi.org/10.1145/1514894.1514899}
  {\path{doi:10.1145/1514894.1514899}}.

\bibitem[CM05]{DBLP:journals/iandc/ChomickiM05}
Jan Chomicki and Jerzy Marcinkowski.
\newblock Minimal-change integrity maintenance using tuple deletions.
\newblock {\em Inf. Comput.}, 197(1-2):90--121, 2005.
\newblock \href {https://doi.org/10.1016/J.IC.2004.04.007}
  {\path{doi:10.1016/J.IC.2004.04.007}}.

\bibitem[FM07]{DBLP:journals/jcss/FuxmanM07}
Ariel Fuxman and Ren{\'{e}}e~J. Miller.
\newblock First-order query rewriting for inconsistent databases.
\newblock {\em J. Comput. Syst. Sci.}, 73(4):610--635, 2007.
\newblock \href {https://doi.org/10.1016/j.jcss.2006.10.013}
  {\path{doi:10.1016/j.jcss.2006.10.013}}.

\bibitem[FPSS23]{ourICDTpaper}
Diego Figueira, Anantha Padmanabha, Luc Segoufin, and Cristina Sirangelo.
\newblock A simple algorithm for consistent query answering under primary keys.
\newblock In {\em 26th International Conference on Database Theory, {ICDT}
  2023, March 28-31, 2023, Ioannina, Greece}, volume 255 of {\em LIPIcs}, pages
  24:1--24:18. Schloss Dagstuhl - Leibniz-Zentrum f{\"{u}}r Informatik, 2023.
\newblock \href {https://doi.org/10.4230/LIPICS.ICDT.2023.24}
  {\path{doi:10.4230/LIPICS.ICDT.2023.24}}.

\bibitem[HK73]{DBLP:journals/siamcomp/HopcroftK73}
John~E. Hopcroft and Richard~M. Karp.
\newblock An n\({}^{\mbox{5/2}}\) algorithm for maximum matchings in bipartite
  graphs.
\newblock {\em {SIAM} J. Comput.}, 2(4):225--231, 1973.
\newblock \href {https://doi.org/10.1137/0202019} {\path{doi:10.1137/0202019}}.

\bibitem[KOW21]{DBLP:conf/pods/KoutrisOW21}
Paraschos Koutris, Xiating Ouyang, and Jef Wijsen.
\newblock Consistent query answering for primary keys on path queries.
\newblock In Leonid Libkin, Reinhard Pichler, and Paolo Guagliardo, editors,
  {\em PODS'21: Proceedings of the 40th {ACM} {SIGMOD-SIGACT-SIGAI} Symposium
  on Principles of Database Systems, Virtual Event, China, June 20-25, 2021},
  pages 215--232. {ACM}, 2021.
\newblock \href {https://doi.org/10.1145/3452021.3458334}
  {\path{doi:10.1145/3452021.3458334}}.

\bibitem[KOW23]{DBLP:journals/corr/abs-2310-19642}
Paraschos Koutris, Xiating Ouyang, and Jef Wijsen.
\newblock Consistent query answering for primary keys on rooted tree queries.
\newblock {\em CoRR}, abs/2310.19642, 2023.
\newblock \href {http://arxiv.org/abs/2310.19642} {\path{arXiv:2310.19642}},
  \href {https://doi.org/10.48550/ARXIV.2310.19642}
  {\path{doi:10.48550/ARXIV.2310.19642}}.

\bibitem[KP12]{DBLP:journals/ipl/KolaitisP12}
Phokion~G. Kolaitis and Enela Pema.
\newblock A dichotomy in the complexity of consistent query answering for
  queries with two atoms.
\newblock {\em Inf. Process. Lett.}, 112(3):77--85, 2012.
\newblock \href {https://doi.org/10.1016/j.ipl.2011.10.018}
  {\path{doi:10.1016/j.ipl.2011.10.018}}.

\bibitem[KW17]{DBLP:journals/tods/KoutrisW17}
Paraschos Koutris and Jef Wijsen.
\newblock Consistent query answering for self-join-free conjunctive queries
  under primary key constraints.
\newblock {\em {ACM} Trans. Database Syst.}, 42(2):9:1--9:45, 2017.
\newblock \href {https://doi.org/10.1145/3068334} {\path{doi:10.1145/3068334}}.

\bibitem[KW18]{DBLP:conf/pods/KoutrisW18}
Paraschos Koutris and Jef Wijsen.
\newblock Consistent query answering for primary keys and conjunctive queries
  with negated atoms.
\newblock In Jan~Van den Bussche and Marcelo Arenas, editors, {\em Proceedings
  of the 37th {ACM} {SIGMOD-SIGACT-SIGAI} Symposium on Principles of Database
  Systems, Houston, TX, USA, June 10-15, 2018}, pages 209--224. {ACM}, 2018.
\newblock \href {https://doi.org/10.1145/3196959.3196982}
  {\path{doi:10.1145/3196959.3196982}}.

\bibitem[KW20]{DBLP:conf/pods/KoutrisW20}
Paraschos Koutris and Jef Wijsen.
\newblock First-order rewritability in consistent query answering with respect
  to multiple keys.
\newblock In Dan Suciu, Yufei Tao, and Zhewei Wei, editors, {\em Proceedings of
  the 39th {ACM} {SIGMOD-SIGACT-SIGAI} Symposium on Principles of Database
  Systems, {PODS} 2020, Portland, OR, USA, June 14-19, 2020}, pages 113--129.
  {ACM}, 2020.
\newblock \href {https://doi.org/10.1145/3375395.3387654}
  {\path{doi:10.1145/3375395.3387654}}.

\bibitem[KW21]{DBLP:journals/mst/KoutrisW21}
Paraschos Koutris and Jef Wijsen.
\newblock Consistent query answering for primary keys in datalog.
\newblock {\em Theory Comput. Syst.}, 65(1):122--178, 2021.
\newblock \href {https://doi.org/10.1007/s00224-020-09985-6}
  {\path{doi:10.1007/s00224-020-09985-6}}.

\bibitem[PSS24]{ourPODS}
Anantha Padmanabha, Luc Segoufin, and Cristina Sirangelo.
\newblock A dichotomy in the complexity of consistent query answering for two
  atom queries with self-join.
\newblock {\em Proc. {ACM} Manag. Data}, 2(2):74, 2024.
\newblock \href {https://doi.org/10.1145/3651137} {\path{doi:10.1145/3651137}}.

\bibitem[Wij10]{DBLP:journals/ipl/Wijsen10}
Jef Wijsen.
\newblock A remark on the complexity of consistent conjunctive query answering
  under primary key violations.
\newblock {\em Inf. Process. Lett.}, 110(21):950--955, 2010.
\newblock \href {https://doi.org/10.1016/j.ipl.2010.07.021}
  {\path{doi:10.1016/j.ipl.2010.07.021}}.

\end{thebibliography}
\end{document}